\title{Algorithms for Computing Closest Points for Segments} 
\author{Haitao Wang}{Kahlert School of Computing,
University of Utah, Salt Lake City, UT 84112, USA \and \url{https://www.cs.utah.edu/~hwang} }{haitao.wang@utah.edu}{https://orcid.org/0000-0001-8134-7409}{}
\authorrunning{H. Wang} 
\keywords{Closest points, Voronoi diagrams, Segment dragging queries, Hopcroft's
problem, Algebraic decision tree model} 
\newcommand{\poly}{\text{poly}}
\def\scrD{\mathscr{D}}
\def\calA{\mathcal{A}}
\newcommand{\vd}{\mathsf{VD}}
\newcommand{\ch}{\mathsf{CH}}
\begin{document}

\maketitle

\begin{abstract}
Given a set $P$ of $n$ points and a set $S$ of $n$ segments in the plane, we
consider the problem of computing for each segment of $S$ its closest point in
$P$. The previously best algorithm solves the problem in
$n^{4/3}2^{O(\log^*n)}$ time [Bespamyatnikh, 2003] and a lower bound (under a somewhat restricted model)
$\Omega(n^{4/3})$ has also been proved.
In this paper, we present an $O(n^{4/3})$ time algorithm and
thus solve the problem optimally (under the restricted model).
In addition, we also present data structures for solving the online version of the
problem, i.e., given a query segment (or a line as a special case), find its closest point in $P$.
Our new results improve the previous work.
\end{abstract}

\section{Introduction}
\label{sec:intro}

Given a set $P$ of $n$ points and a set $S$ of $n$ segments in the plane, we
consider the problem of computing for each segment of $S$ its closest point in
$P$. We call it the {\em segment-closest-point} problem.
Previously, Bespamyatnikh~\cite{ref:BespamyatnikhCo03} gave an $n^{4/3}2^{O(\log^*n)}$ time algorithm for the problem, improving upon an $O(n^{4/3}\log^{O(1)}n)$ time result of Agarwal and Procopiuc.
The problem can be viewed as a generalization of Hopcroft's problem~\cite{ref:AgarwalPa902,ref:ChanHo22,ref:ChazelleCu93,ref:EdelsbrunnerTh90,ref:MatousekRa93}, which is to determine whether any point of a given set of $n$ points lies on any of the given $n$ lines.
Erickson~\cite{ref:EricksonNe96} proved an $\Omega(n^{4/3})$ time lower bound
for Hopcroft's problem under a somewhat restricted {\em partition model}. This implies the
	same lower bound on the segment-closest-point problem.
For Hopcroft's problem, Chan and Zheng~\cite{ref:ChanHo22} recently gave an
$O(n^{4/3})$ time algorithm, which matches the lower bound and thus is optimal.

In this paper, with some new observations on the problem as well as the
techniques from Chan and Zheng~\cite{ref:ChanHo22} (more specifically, the {\em
$\Gamma$-algorithm framework} for bounding algebraic decision tree complexities), we
present a new algorithm that solves the segment-closest-point problem in $O(n^{4/3})$ time and thus
is optimal under Erickson's partition model~\cite{ref:EricksonNe96}.
It should be noted that our result is not a direct application of Chan and
Zheng's techniques~\cite{ref:ChanHo22}, but rather many new observations and
techniques are
needed. For example, one subroutine in our problem is the following
{\em outside-hull segment queries}: Given a segment
outside the convex hull of $P$, find its
closest point in $P$. Bespamyatnikh and Snoeyink~\cite{ref:BespamyatnikhQu00}
built a data structure in $O(n)$ space and $O(n\log n)$ time such that each
query can be answered in $O(\log n)$ time. Unfortunately, their query algorithm
does not fit the $\Gamma$-algorithm framework of Chan and
Zheng~\cite{ref:ChanHo22}. To resolve the issue, we develop another
algorithm for the problem based on new observations.
Our approach is simpler, and more importantly, it
fits the $\Gamma$-algorithm framework of Chan and Zheng~\cite{ref:ChanHo22}. The
result may be interesting in its own right.

We believe the $\Gamma$-algorithm framework have applications for a lot of problems in computational geometry. Very recently, Chan, Cheng, and Zheng~\cite{ref:ChanAn24} used the framework to tackle the higher-order Voronoi diagram problem.
In addition to \cite{ref:ChanAn24,ref:ChanHo22}, our result is another demonstration of this technique and it helps to build our understanding along this line of work.

We also consider the online version of the problem, called {\em the segment
query problem}: Preprocess $P$ so that given a query segment,
its closest point in $P$ can be found efficiently.
For the special case where the query segment is outside the convex hull of $P$, one can use the data structure of Bespamyatnikh and Snoeyink~\cite{ref:BespamyatnikhQu00} mentioned above.
For simplicity, we use $(T_1(n),T_2(n),T_3(n))$ to denote the complexity of a
data structure if its preprocessing time, space, and query time are on the order
of $T_1(n)$, $T_2(n)$, and $T_3(n)$, respectively. Using this notation, the
complexity of the above data structure of Bespamyatnikh and
Snoeyink~\cite{ref:BespamyatnikhQu00} is $O(n\log n, n, \log n)$.
The general problem, however, is much more challenging.
Goswami, Das, and Nandy~\cite{ref:GoswamiTr04}'s method yields a result of
complexity $O(n^2,n^2,\log^2 n)$.
We present a new data structure of complexity $O(nm(n/m)^{\delta},nm\log (n/m), \sqrt{n/m}\log (n/m))$,
any $m$ with $1\leq m\leq n\log^2\log n/\log^4 n$ and any $\delta>0$.
Note that for the large space case (i.e., when $m=n\log^2\log n/\log^4 n$), the complexity of our data structure is $O(n^2/\log^{4-\delta}n,n^2\log^3\log n/\log^4 n,\log^2 n)$, which improves the above result
of~\cite{ref:GoswamiTr04} on the preprocessing time and space by a factor of roughly $\log^4 n$.
We also present a faster randomized data structure of complexity $O(nm\log (n/m),nm\log (n/m),
\sqrt{n/m})$ for any $m$ with $1\leq m\leq n/\log^4 n$, where the
preprocessing time is expected and the query time holds with high probability.
In addition, using Chan's randomized techniques~\cite{ref:ChanGe99} and Chan and Zheng's recent randomized result on triangle range counting~\cite{ref:ChanHo22}, we can obtain a randomized data structure of complexity $O(n^{4/3},n^{4/3},n^{1/3})$.\footnote{The idea was suggested by an anonymous reviewer.} Note that this data structure immediately leads to a randomized algorithm of $O(n^{4/3})$ expected time for the segment-closest-point problem. As such, for solving the segment-closest-point problem, our main effort is to derive an $O(n^{4/3})$ deterministic time algorithm. Note that this is aligned with the motivation of proposing the $\Gamma$-algorithm framework in~\cite{ref:ChanHo22}, whose goal was to obtain an $O(n^{4/3})$ deterministic time algorithm for Hopcroft's problem although a much simpler randomized algorithm of $O(n^{4/3})$ expected time was already presented.

If each query segment is a line, we call it the {\em line query problem}, which
has been extensively studied.
Previous work includes Cole and Yap~\cite{ref:ColeGe83}'s and
Lee and Ching~\cite{ref:LeeTh85}'s data structures of
complexity $O(n^2,n^2,\log n)$, Mitra and Chaudhuri~\cite{ref:MitraEf98}'s work of
complexity $O(n\log n, n, n^{0.695})$, Mukhopadhyay~\cite{ref:MukhopadhyayUs03}'s result of complexity
$O(n^{1+\delta},n\log n, n^{1/2+\delta})$ for any $\delta>0$.
As observed by Lee and Ching~\cite{ref:LeeTh85}, the problem can be reduced to
vertical ray-shooting in the dual plane, i.e., finding the first
line hit by a query vertical ray among a given set of $n$ lines (see
Section~\ref{sec:linequery} for the details). Using
the ray-shooting algorithms, the best deterministic result is
$O(n^{1.5},n,\sqrt{n}\log n)$~\cite{ref:WangAl20} while the best randomized result is $O(n\log n, n, \sqrt{n})$~\cite{ref:ChanSi23};
refer to ~\cite{ref:AgarwalAp93,ref:Bar-YehudaVa94,ref:ChanHo22,ref:ChengAl92} for other
(less efficient) work on ray-shootings. We build a new deterministic data
structure of complexity $O(nm(n/m)^{\delta},nm\log (n/m), \sqrt{n/m})$,
for any $1\leq m\leq n/\log^2 n$.
We also have another faster randomized result of complexity
$O(nm\log (n/m),nm\log (n/m), \sqrt{n/m})$, for any $m$ with $1\leq m\leq
n/\log^2 n$, where the preprocessing time is expected while the query time
holds with high probability. Our
results improve all previous work except the randomized result of Chan and
Zheng~\cite{ref:ChanSi23}.
For example, if $m=1$, our data structure is the only deterministic
one whose query time is $O(\sqrt{n})$ with near linear space; if $m=n/\log^2 n$,
our result achieves $O(\log n)$ query time while the preprocessing time and
space are all subquadratic, better than those by Cole and
Yap~\cite{ref:ColeGe83} and Lee and Ching~\cite{ref:LeeTh85}.

\subparagraph{Other related work.}
If all segments are pairwise disjoint, then the segment-closest-point
problem was solved in $O(n\log^2 n)$ time by
Bespamyatnikh~\cite{ref:BespamyatnikhCo03}, improving over the $O(n\log^3 n)$ time
algorithm of Bespamyatnikh and Snoeyink~\cite{ref:BespamyatnikhQu00}.

If every segment of $S$ is a single point, then the problem can be easily solved in
$O(n\log n)$ time using the Voronoi diagram of $P$.
Also, for any segment $s\in S$, if the point of $s$ closest to $P$ is an endpoint of $s$,
then finding the closest point of $s$ in $P$ can be done using the Voronoi
diagram of $P$. Hence, the remaining issue is to find the first point of $P$
hit by $s$ if we drag $s$ along the directions perpendicularly to $s$. If all
segments of $S$ have the same slope, then the problem can be solved in
$O(n\log n)$ time using the segment dragging query data structure of
Chazelle~\cite{ref:ChazelleAn88}, which can answer each query in $O(\log n)$
time after $O(n)$ space and $O(n\log n)$ time preprocessing. However, the
algorithm~\cite{ref:ChazelleAn88} does not work if the query segments have
arbitrary slopes. As such, the challenge of the problem is to solve the dragging
queries for all segments of $S$ when their slopes are not the same.

The {\em segment-farthest-point} problem has also been studied, where
one wants to find for each segment of $S$ its farthest point in $P$. The problem
appears much easier. For the line query problem (i.e., given a query line, find its farthest point in $P$),
Daescu et al.~\cite{ref:DaescuFa06} gave a data structure of
complexity $O(n\log n, n, \log n)$. Using this result, they also
proposed a data structure of complexity $O(n\log n, n\log n, \log^2
n)$ for the segment query problem.
Using this segment query data structure, the segment-farthest-point can
be solved in $O(n\log^2 n)$ time.

\subparagraph{Outline.}
The rest of the paper is organized as follows. In Section~\ref{sec:pre}, we
introduce some notation and concepts. In Section~\ref{sec:segment}, we
present our $O(n^{4/3})$ deterministic time algorithm for the segment-closest-point problem.
We actually solve a more general problem where the number of points is not equal
to the number of segments, referred to as the {\em asymmetric case}, and our
algorithm runs in $O(n^{2/3}m^{2/3}+n\log n+m\log^2 n)$ time with $n$ as the
number of points and $m$ as the number of segments.
We present a simpler algorithm for the
line case of the problem in Section~\ref{sec:line} where all segments are lines,
and the algorithm also runs
in $O(n^{4/3})$ time (and $O(n^{2/3}m^{2/3}+(n+m)\log n)$ time for the
asymmetric case). The line query problem is discussed in
Section~\ref{sec:linequery} while the segment query problem is solved in
Section~\ref{sec:segmentquery}.

\section{Preliminaries}
\label{sec:pre}

For two closed subsets $A$ and $B$ in the plane, let $d(A,B)$ denote the minimum
distance between any point of $A$ and any point of $B$. The point $p$ of $A$ closest
to $B$, i.e., $d(p,B)=d(A,B)$, is called the {\em closest point} of $B$ in $A$.


For any two points $a$ and $b$ in the plane, we use $\overline{ab}$ to denote
the segment with $a$ and $b$ as its two endpoints.

For any point $p$ in the plane, we use $x(p)$ and $y(p)$ to denote its $x$- and
$y$-coordinates, respectively. For a point $p$ and a region $A$ in the plane, we
say that $p$ is {\em to the left} of $A$ if $x(p)\leq x(q)$ for all points $q\in
A$, and $p$ is {\em strictly to the left} of $A$ if $x(p)< x(q)$ for all points $q\in
A$; the concepts {\em (strictly) to the right} is defined symmetrically.

For a set $Q$ of points in the plane, we usually use $\vd(Q)$ to denote the Voronoi diagram of $Q$ and use $\ch(Q)$ to denote the convex hull of $Q$; we also use $Q(A)$ to denote the subset of $Q$ in $A$, i.e., $Q(A)=Q\cap A$, for any region $A$ in the plane.

\subparagraph{Cuttings.}
Let $H$ be a set of $n$ lines in the plane. Let $H_A$ denote the subset of
lines of $H$ that intersect the interior of $A$ (we also say that these lines
{\em cross} $A$), for a compact region $A$ in the plane.
A {\em cutting} is a collection $\Xi$ of closed cells (each of which is a triangle) with disjoint interiors, which together cover the entire plane~\cite{ref:ChazelleCu93,ref:MatousekRa93}.
The {\em size} of $\Xi$ is the number of cells in $\Xi$. For a parameter $r$ with $1\leq r\leq n$, a {\em $(1/r)$-cutting} for $H$ is a cutting $\Xi$ satisfying $|H_{\sigma}|\leq n/r$ for every cell $\sigma\in \Xi$.



A cutting $\Xi'$ {\em $c$-refines} another cutting $\Xi$ if every cell of $\Xi'$ is contained in a
single cell of $\Xi$ and every cell of $\Xi$ contains at most $c$
cells of $\Xi'$. A {\em hierarchical $(1/r)$-cutting} for $H$ (with two
constants $c$ and $\rho$) is a sequence
of cuttings $\Xi_0,\Xi_1,\ldots,\Xi_k$ with the following properties.
$\Xi_0$ is the entire plane. For each $1\leq i\leq k$, $\Xi_i$ is a
$(1/\rho^i)$-cutting of size $O(\rho^{2i})$ which $c$-refines
$\Xi_{i-1}$. In order to make
$\Xi_k$ a $(1/r)$-cutting, we set $k=\lceil\log_{\rho} r\rceil$. Hence, the size of the last cutting
$\Xi_k$ is $O(r^2)$. If a cell $\sigma\in
\Xi_{i-1}$ contains a cell $\sigma'\in \Xi_i$, we say that $\sigma$ is
the {\em parent} of $\sigma'$ and $\sigma'$ is a {\em child} of
$\sigma$. As such, one could view $\Xi$ as a tree in which each node corresponds
to a cell $\sigma\in \Xi_i$, $0\leq i\leq k$.

For any $1\leq r\leq n$, a hierarchical $(1/r)$-cutting of size
$O(r^2)$ for $H$ (together with $H_{\sigma}$ for every cell
$\sigma$ of $\Xi_i$ for all $i=0,1,\ldots,k$) can be computed in
$O(nr)$ time~\cite{ref:ChazelleCu93}. Also, it is easy to check that $\sum_{i=0}^k\sum_{\sigma\in \Xi_i}|H_{\sigma}|=O(nr)$.

\section{The segment-closest-point problem}
\label{sec:segment}

In this section, we consider the segment-closest-point problem. Let $P$ be a set
of $n$ points and $S$ a set of $n$ segments in the plane. The problem is to
compute for each segment of $S$ its closest point in $P$. We
make a general position assumption that no segment of $S$ is vertical (for a vertical segment, its closest point can be easily found, e.g., by building a segment dragging query data structure~\cite{ref:ChazelleAn88} along with the Voronoi diagram of $P$).

%

We start with a review of an algorithm of
Bespamyatnikh~\cite{ref:BespamyatnikhCo03}, which will be needed in our new
approach.

\subsection{A review of Bespamyatnikh's algorithm~\cite{ref:BespamyatnikhCo03}}
\label{sec:review}

As we will deal with subproblems in which the number of lines is not equal to
the number of segments, we let $m$ denote the number of segments in $S$ and $n$
the number of points in $P$. As such, the size of our original problem $(S,P)$ is $(m,n)$.

Let $H$ be the set of the supporting lines of the segments of $S$.
For a parameter $r$ with $1\leq r\leq \min\{m,\sqrt{n}\}$, compute a hierarchical
$(1/r)$-cutting $\Xi_0,\Xi_1,\ldots,\Xi_k$ for $H$.
For each cell $\sigma \in \Xi_i$, $0\leq i\leq k$, let $P(\sigma)=P\cap \sigma$,
i.e., the subset of the points of $P$ in $\sigma$; let $S(\sigma)$ denote the
subset of the segments of $S$ intersecting $\sigma$.
We further partition each cell of $\Xi_k$ into triangles so that each triangle
contains at most $n/r^2$ points of $P$ and the number of new triangles in
$\Xi_k$ is still bounded by $O(r^2)$. For convenience, we consider the new
triangles as new cells of $\Xi_k$ (we still define $P(\sigma)$ and $S(\sigma)$
for each new cell $\sigma$ in the same way as above; so now $|P(\sigma)|\leq
	n/r^2$ and $|S(\sigma)|\leq m/r$ hold for each cell $\sigma\in \Xi_k$).

For each cell $\sigma\in \Xi_k$, form a subproblem $(S(\sigma),P(\sigma))$ of
size $(m/r,n/r^2)$, i.e., find for each segment $s$ of $S(\sigma)$ its closest
point in $P(\sigma)$. After the subproblem is solved, to find the closest point
of $s$ in $P$, it suffices to find its closest point in $P\setminus P(\sigma)$.
To this end, observe that $P\setminus P(\sigma)$ is exactly the union of
$P(\sigma'')$ for all cells $\sigma''$ such that $\sigma''$ is a child of an
ancestor $\sigma'$ of $\sigma$ and $s\not\in S(\sigma'')$.
As such,
for each of such cells $\sigma''$, find the closest point of $s$
in $P(\sigma'')$. For this, since $s\not\in S(\sigma'')$, $s$ is outside
$\sigma''$ and thus is outside the convex hull of $P(\sigma'')$. Hence, finding
the closest point of $s$ in $P(\sigma'')$ is an outside-hull segment query and
thus the data structure of Bespamyatnikh and Snoeyink~\cite{ref:BespamyatnikhQu00}
(referred to as {\em the BS data structure} in the rest of the paper) is used,
which takes $O(|P(\sigma'')|)$ space and $O(|P(\sigma'')|\log |P(\sigma'')|)$
time preprocessing and can answer each query in $O(\log |P(\sigma'')|)$ time.
More precisely, the processing can be done in $O(|P(\sigma'')|)$ time if the
Voronoi diagram of $P(\sigma'')$ is known.

For the time analysis, let $T(m,n)$ denote the time of the
algorithm for solving a problem of size $(m,n)$.
Then, solving all subproblems takes $O(r^2)\cdot
T(m/r,n/r^2)$ time as there are $O(r^2)$ subproblems of size $(m/r,n/r^2)$.
Constructing the hierarchical cutting as well as computing $S(\sigma)$
for all cells $\sigma$ in all cuttings $\Xi_i$, $0\leq i\leq k$, takes
$O(mr)$ time~\cite{ref:ChazelleCu93}. Computing $P(\sigma)$ for
all cells $\sigma$ can be done in $O(n\log r)$ time.
Preprocessing for constructing the BS data structure for $P(\sigma)$ for all cells $\sigma$
can be done in $O(n\log n\log r)$ time as $\sum_{\sigma\in
\Xi_i}|P(\sigma)|=n$ for each $0\leq i\leq k$, and $k=O(\log r)$. We can further reduce the time
to $O(n(\log r+\log n))$ as follows. We build the BS data structure for
cells of the cuttings in a bottom-up manner, i.e., processing cells of
$\Xi_k$ first and then $\Xi_{k-1}$ and so on. After the preprocessing
for $P(\sigma)$ for a cell $\sigma\in \Xi_k$, which takes $O(|P(\sigma)|\log
(n/r^2))$ time since $|P(\sigma)|\leq n/r^2$, the Voronoi diagram of $P(\sigma)$
is available. After the preprocessing for all cells $\sigma$ of $\Xi_k$ is done, for
each cell $\sigma'$ of $\Xi_{k-1}$, to construct the Voronoi diagram
of $P(\sigma')$, merge the Voronoi diagrams of $P(\sigma)$ for
all children $\sigma$ of $\sigma'$. To this end, as $\sigma'$ has $O(1)$ children, the
merge can be done in $O(|P(\sigma')|)$ time by using the algorithm
of Kirkpatrick~\cite{ref:KirkpatrickEf79}, and thus the preprocessing
for $P(\sigma')$ takes only linear time. In this way, the total
preprocessing time for all cells in all cuttings $\Xi_i$, $0\leq i\leq k$, is bounded by
$O(n(\log r+\log (n/r^2)))$ time, i.e., the time spent on cells of
$\Xi_k$ is $O(n\log (n/r^2))$ and the time on other cuttings is
$O(n\log r)$ in total. Note that $\log r+\log (n/r^2)=\log (n/r)$.
As for the outside-hull segment queries,
according to the properties of the hierarchical cutting,
$\sum_{i=0}^{k}\sum_{\sigma\in \Xi_i}|S(\sigma)|=O(mr)$. Hence, the total number of
outside-hull segment queries on the BS data structure is $O(mr)$ and
thus the total query time is $O(mr\log n)$. In summary, the following
recurrence is obtained for any $1\leq r\leq \min\{m,\sqrt{n}\}$:
\begin{align}\label{equ:10}
  T(m,n)= O(n\log ({n}/{r}) + mr\log n)+O(r^2)\cdot T({m}/{r},{n}/{r^2}).
\end{align}

Using the duality, Bespamyatnikh~\cite{ref:BespamyatnikhCo03} gave a second
algorithm (we will not review this algorithm here because it is not relevant to our
new approach) and obtained the following recurrence for any $1\leq r\leq
\min\{n,\sqrt{m}\}$:
\begin{align}\label{equ:20}
  T(m,n)= O(nr\log n + m\log r\log n)+O(r^2)\cdot T({m}/{r^2},{n}/{r}).
\end{align}

By setting $m=n$ and applying \eqref{equ:20} and \eqref{equ:10} in succession (using the same $r$), it follows that
\begin{align*}
  T(n,n)= O(nr\log n)+O(r^4)\cdot T({n}/{r^3},{n}/{r^3}).
\end{align*}
Setting $r=n^{1/3}/\log n$ leads to
\begin{align}\label{equ:30}
  T(n,n)= O(n^{4/3}) + O((n/\log^3 n)^{4/3})\cdot T(\log^3 n,\log^3 n).
\end{align}
The recurrence solves to $T(n,n)=n^{4/3}2^{O(\log^*n)}$, which is the time bounded obtained by  Bespamyatnikh~\cite{ref:BespamyatnikhCo03}.

\subsection{Our new algorithm}

In this section, we improve the algorithm to $O(n^{4/3})$ time.

By applying recurrence \eqref{equ:30} three times we obtain the following:
\begin{align}\label{equ:40}
  T(n,n)= O(n^{4/3}) + O((n/b)^{4/3})\cdot T(b,b),
\end{align}
where $b=(\log\log\log n)^3$.

Using the property that $b$ is tiny, we show in the following that
after $O(n)$ time preprocessing, we can solve each subproblem $T(b,b)$ in
$O(b^{4/3})$ time (for convenience, by slightly abusing the notation, we also use $T(m,n)$ to denote a subproblem of size $(m,n)$). Plugging the result into \eqref{equ:40}, we obtain
$T(n,n)=O(n^{4/3})$.

More precisely, we show that after $O(2^{\poly(b)})$ time preprocessing, where
$\poly(\cdot)$ is a polynomial function, we can solve each $T(b,b)$ using $O(b^{4/3})$ comparisons,
or alternatively, $T(b,b)$ can be solved
by an algebraic decision tree of height $O(b^{4/3})$.  As $b=(\log\log\log
n)^3$, $2^{\poly(b)}$ is bounded by $O(n)$. To turn this into an algorithm under
the standard real-RAM model, we explicitly construct the algebraic decision tree
for the above algorithm (we may also consider this step as part of preprocessing
for solving $T(b,b)$), which can again be done in $O(2^{\poly(b)})$ time.
As such, that after $O(n)$ time preprocessing, we can solve each $T(b,b)$ in $O(b^{4/3})$ time.
In the following, for notational convenience, we will use $n$ to denote $b$, and
our goal is to prove the following lemma.
\begin{lemma}\label{lem:10}
After $O(2^{\poly(n)})$ time preprocessing, $T(n,n)$
can be solved using $O(n^{4/3})$ comparisons.
\end{lemma}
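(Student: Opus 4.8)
The plan is to follow the $\Gamma$-algorithm framework of Chan and Zheng~\cite{ref:ChanHo22}: we want a ``primitive'' decision-tree algorithm that solves $T(n,n)$ for the segment-closest-point problem using only $O(n^{4/3})$ comparisons (algebraic tests of bounded degree among the input coordinates), while allowing an arbitrarily expensive — but only $2^{\poly(n)}$-time — bookkeeping phase that never touches the actual numerical values. First I would recall the two recurrences~\eqref{equ:10} and~\eqref{equ:20} from Bespamyatnikh's algorithm. Applied in succession they reduce a problem of size $(n,n)$ to $O(r^4)$ subproblems of size $(n/r^3,n/r^3)$ at an additive cost dominated by cutting construction, point location, Voronoi-diagram merging, and outside-hull segment queries. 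The key point is that all of these steps — constructing a hierarchical $(1/r)$-cutting, distributing points and segments among cells, building and merging Voronoi diagrams via Kirkpatrick's algorithm, and answering outside-hull queries — can be carried out so that the number of \emph{comparisons involving the input coordinates} is near-linear, even though the total running time on a real-RAM is not. So the strategy is to iterate the reduction all the way down: choosing $r$ appropriately at each level, after $O(\log\log n)$ (or even $O(\log^* n)$) levels the base subproblems have size $O(\poly\log n)$ or constant, and each can be solved by brute force; the total comparison count telescopes to $O(n^{4/3})$.

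The crucial obstacle — and the reason this is a lemma requiring a genuinely new idea rather than a direct citation — is the outside-hull segment query subroutine. The BS data structure~\cite{ref:BespamyatnikhQu00} answers each such query in $O(\log n)$ \emph{time}, but its query algorithm makes data-dependent branching decisions (binary searches guided by numerical comparisons whose outcomes steer the search) in a way that does not fit the $\Gamma$-algorithm framework: the framework needs the sequence of comparisons a query performs to be, in effect, predetermined by combinatorial structure rather than adaptively chosen based on earlier numerical answers, so that the comparisons across all $O(mr)$ queries can be batched and their total count controlled. The second main step, therefore, is to design an alternative outside-hull segment query procedure. I would aim for one that, given the Voronoi diagram (or convex hull) of $P(\sigma'')$ as combinatorial data, locates the closest point of a query segment by a search whose branching pattern depends only on this combinatorial data — for instance, walking along the convex hull or performing a search over a balanced hierarchical decomposition of it — so that the whole batch of queries at one level of the recursion can be implemented with $O(\,|S(\sigma)|\cdot\poly\log\,)$ comparisons total. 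Establishing correctness of this new search — that it really reports the closest point for a segment known to lie outside the hull — via the geometry of the closest-point/Voronoi structure, is where most of the technical work lies; this is flagged in the introduction as a result of independent interest.

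Concretely, the steps I would carry out, in order, are: (1) state precisely what counts as a comparison and reproduce the $\Gamma$-algorithm framework's requirement (a family of decision trees of near-linear comparison depth, composable over a recursion); (2) re-derive the cost accounting of recurrences~\eqref{equ:10} and~\eqref{equ:20} but now counting only input-coordinate comparisons, verifying that cutting construction, point distribution, and Voronoi merging contribute only $O((n+m)\poly\log)$ comparisons; (3) present the new outside-hull segment query algorithm and prove its correctness and its $O(\poly\log)$-comparison bound per query; (4) plug everything into the composed recurrence $T(n,n)=O(n^{4/3})+O(r^4)\cdot T(n/r^3,n/r^3)$ counting comparisons, iterate until the base case is $O(1)$ (or $O(\poly\log n)$) sized, and sum the geometric series to get $O(n^{4/3})$ comparisons; (5) observe that all non-comparison work — building all the auxiliary structures, and in particular \emph{constructing} the resulting algebraic decision tree explicitly so the final procedure runs on a real-RAM — is a function of the combinatorial type of the input only and so can be done in $2^{\poly(n)}$ time, which is the claimed preprocessing bound.

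I expect step (3) — the comparison-efficient, framework-compatible outside-hull segment query — to be the main obstacle, both because the obvious $O(\log n)$-time algorithm is adaptive in a way the framework forbids, and because proving the new algorithm correct requires a careful case analysis of where the closest point of a segment lying outside $\ch(P(\sigma''))$ can fall (an endpoint versus an interior ``dragging'' contact) and how that interacts with the Voronoi diagram of $P(\sigma'')$. A secondary subtlety is bookkeeping the $O(\log^* n)$ (or $O(\log\log\log n)$) levels of recursion so that the additive $\poly\log$ terms at every level still sum to $o(n^{4/3})$ and the $2^{\poly(n)}$ preprocessing budget — which shrinks as the subproblem size shrinks — remains consistent across the recursion; this is exactly why the excerpt sets $b=(\log\log\log n)^3$ and reduces to Lemma~\ref{lem:10} with that tiny $n$, so that $2^{\poly(n)}=2^{\poly(\log\log\log(\text{original }n))}=O(\text{original }n)$.
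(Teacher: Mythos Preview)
Your plan correctly isolates the outside-hull segment query as one obstacle and correctly proposes to replace the BS query by a $\Gamma$-framework-compatible search; that part matches the paper's first challenge. But step~(4) contains a genuine gap that breaks the argument. In the composed recurrence $T(n,n)=O(n^{4/3})+O(r^4)\cdot T(n/r^3,n/r^3)$, the series you obtain by iterating is \emph{not} geometric: since $r^4\cdot (n/r^3)^{4/3}=n^{4/3}$, every level contributes exactly $\Theta(n^{4/3})$ to the total, so after $k$ levels you have $\Theta(k\,n^{4/3})$. Iterating ``$O(\log\log n)$ or $O(\log^* n)$'' times therefore only reproduces the $n^{4/3}2^{O(\log^* n)}$ bound of Bespamyatnikh, not $O(n^{4/3})$. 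To terminate in $O(1)$ levels you would need $r=n^{\Theta(1)}$, but then after a single application of~\eqref{equ:10} the subproblem is the \emph{asymmetric} $T(n^{2/3},n^{1/3})$, and to re-symmetrize you need recurrence~\eqref{equ:20} (Bespamyatnikh's dual algorithm) to also fit the $\Gamma$-framework for segments---something you never address and the paper explicitly avoids (it states that the second algorithm ``is not relevant to our new approach'').

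What you are missing is the paper's second key idea, which is how the $2^{\poly(n)}$ preprocessing budget is actually spent. The paper applies~\eqref{equ:10} \emph{once} with $r=n^{1/3}$ to reach $T(n^{2/3},n^{1/3})$, and then solves this base case directly---not by further recursion---via a precomputation over \emph{refined order types}: it builds a decision tree $T_D$ of height $O(m^2)$ (with $m=n^{1/3}$) enumerating all combinatorial types of $m$-point sets, and at each leaf stores, for every pair of consecutive critical slopes and every index interval $[i,j]$, a sorted list enabling each segment-dragging query to be answered in $O(1-\Delta\Phi)$ comparisons via the search lemma. This preprocessing is where the $2^{\poly(n)}$ budget goes, and it is what lets each $T(n^{2/3},n^{1/3})$ subproblem be solved in $O(n^{2/3})$ comparisons. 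Your proposal has no analogue of this step, and without it the recursion cannot bottom out at the target bound.
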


We apply recurrence~\eqref{equ:10} by setting $m=n$ and $r=n^{1/3}$, and obtain the following
\begin{align}\label{equ:50}
  T(n,n)= O(n\log n + n^{4/3}\log n)+O(n^{2/3})\cdot T(n^{2/3},n^{1/3}).
\end{align}
Recall that the term $n^{4/3}\log n$ is due to that there are $O(n^{4/3})$
outside-hull segment queries.
To show that $T(n,n)$ can be solved by $O(n^{4/3})$ comparisons, there are two
challenges: (1) solve all outside-hull segment queries using $O(n^{4/3})$
comparisons; (2) solve each subproblem $T(n^{2/3},n^{1/3})$ using $O(n^{2/3})$
comparisons.

\subparagraph{$\Gamma$-algorithm framework.}
To tackle these challenges, we use a {\em
$\Gamma$-algorithm framework} for bounding decision tree complexities proposed
by Chan and Zheng~\cite{ref:ChanHo22}. We briefly review it here (see
Section~4.1~\cite{ref:ChanHo22} for the details).
Roughly speaking, this framework is an
algorithm that only counts the number of comparisons (called
$\Gamma$-comparisons in~\cite{ref:ChanHo22}) for determining whether a
point belongs to a semialgebraic set of $O(1)$ degree in a constant-dimensional
space. Solving our segment-closest-point problem is equivalent to locating the cell $C^*$ containing a
point $p^*$ parameterized by the input of our problem (i.e., the segments of $S$ and
the points of $P$) in an arrangement $\calA$ of the boundaries of $\poly(n)$
semialgebraic sets in $O(n)$-dimensional space. This arrangement can be built in
$O(2^{\poly(n)})$ time without examining the values of the input and thus does
not require any comparisons. In particular, the number of cells of $\calA$ is
bounded by $n^{O(n)}$. As a $\Gamma$-algorithm progresses, it maintains a set
$\Pi$ of cells of $\calA$. Initially, $\Pi$ consisting of all cells of $\calA$.
During the course of the algorithm, $\Pi$ can only shrink but always contains
the cell $C^*$. At the end of the algorithm, $C^*$ will be found.
Define the potential $\Phi=\log |\Pi|$. As $\calA$ has $n^{O(n)}$ cells,
initially $\Phi=O(n\log n)$. For any operation or subroutine of the algorithm,
we use $\Delta\Phi$ to denote the change of $\Phi$. As $\Phi$ only decreases
during the algorithm, $\Delta\Phi\leq 0$ always holds and the sum of
$-\Delta\Phi$ during the entire algorithm is $O(n\log n)$. This implies that we
may afford an expensive operation/subroutine during the algorithm as long as it decreases
$\Phi$ a lot.

Two algorithmic tools are developed in \cite{ref:ChanHo22} under the framework: {\em basic search lemma} (Lemma~4.1~\cite{ref:ChanHo22}) and {\em search lemma} (Lemma
A.1~\cite{ref:ChanHo22}). Roughly speaking, given $r$ predicates (each predicate is a test of whether $\gamma(x)$ is true for the input vector $x$), suppose it is promised that at least one of them is true for all inputs in the active cells; then the basic search lemma can find a predicate that is true by making $O(1-r\cdot \Delta\Phi)$ comparisons. Given a binary tree (or a more general DAG of $O(1)$ degree) such that each node $v$ is associated with a predicate $\gamma_v$, suppose for each internal node $v$, $\gamma_v$ implies $\gamma_u$ for a child $u$ of $v$ for all inputs in the active cells. Then, the search lemma can find a leaf $v$ such that $\gamma_v$ is true by making $O(1-\Delta\Phi)$ comparisons.

An application of both lemmas particularly discussed in \cite{ref:ChanHo22} is to find a predecessor of a query number among a sorted list of input numbers. In our algorithm, as will be seen later, the subproblem that needs the $\Gamma$-algorithm framework is also finding predecessors among sorted lists and thus both the basic search lemma and the search lemma are applicable.

\medskip
In the following two subsections, we will tackle the above two challenges,
respectively. By slightly abusing the notation, let $P$ be a set of $n$ points and $S$ a set of $n$ segments for the problem in recurrence~\eqref{equ:50}.

\subsection{Solving outside-hull segment queries}


Recall that we have used the BS data structure to answer the outside-hull segment
queries. Unfortunately the algorithm does not fit into the $\Gamma$-algorithm
framework. Indeed, the BS data structure is a binary tree. However, each node of the tree represents a convex hull of a subset of points and it is not associated with a predicate that we can use to apply the $\Gamma$-algorithm framework (e.g., the search lemma as discussed above).

\subparagraph{Remark.} We provide more details on why the BS data structure does not fit into the $\Gamma$-algorithm
framework. Roughly speaking, in order to fit into the framework, the search structure has to have certain kind of ``locality'' property. Let's use an example to explain this. Suppose we want to find the predecessor of a query number $x$ among a sorted list $X$ of numbers. Assume that the numbers of $X$ are stored in a binary search tree $T_X$. Hence, each node $v$ of $T_X$ is associated with a range formed by the leftmost and rightmost leaves of the subtree rooted at $v$. It is easy to see that if the predecessor of $x$ is in the range of $v$, then it must be in the range of exactly one of the two children of $v$. This is the locality property for the predecessor searching problem, i.e., by looking at the local information at $v$, we are able to determine which subtree of $v$ contains the search target.
For our outside-hull segment query problem, the search algorithm of the BS data structure does not have this locality property. For example, the algorithm searches a binary tree from the root and eventually obtain two candidates along two search paths from the root; finally, the better one from the two candidates is returned as the answer. Therefore, the search target does not only rely on the local information and it has to compare with the result from the other search path (in contrast a locality property always guarantees that only one search path is necessary).

\medskip

In the following, we first present a new algorithm for
solving the outside-hull segment queries. Our algorithm,
whose performance matches that of the BS data structure, is simpler, and thus may be of independent interest; more importantly, it leads to an algorithm that fits the $\Gamma$-algorithm framework to provide an $O(n^{4/3})$ upper bound.

Let $Q$ be a set of $n'$ points. The problem is to preprocess $Q$
so that given any query segment $s$ outside the convex hull $\ch(Q)$ of $Q$, the
closest point of $s$ in $Q$ can be computed efficiently. Recall that in our original problem (i.e., the recurrence~\eqref{equ:50}) $Q$ is a subset of $P$ and the sum of $n'$ for all subsets of $P$ that we need to build the outside-hull query data structures is $O(n\log n)$. We make this an observation below, which will be referred to later.

\begin{observation}\label{obser:20}
The size of the subsets of $P$ that we need to build the outside-hull query data structures is $O(n\log n)$, i.e., $\sum n' =O(n\log n)$.
\end{observation}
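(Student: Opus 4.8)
The plan is to trace through the algorithm described in Section~\ref{sec:review} (applied with the parameters $m=n$, $r=n^{1/3}$ of recurrence~\eqref{equ:50}) and identify exactly which subsets of $P$ are handed to the outside-hull query data structure. Recall from the review that for a segment $s\in S(\sigma)$ with $\sigma\in\Xi_k$, after the subproblem $(S(\sigma),P(\sigma))$ is solved, we complete the query for $s$ by finding its closest point in $P\setminus P(\sigma)$, and that $P\setminus P(\sigma)$ is exactly the disjoint union of $P(\sigma'')$ over all cells $\sigma''$ that are children of an ancestor $\sigma'$ of $\sigma$ with $s\notin S(\sigma'')$. Each such $P(\sigma'')$ is queried via an outside-hull query, so the subsets of $P$ for which we must build the data structure are precisely $\{P(\sigma'') : \sigma''\in\Xi_i,\ 1\le i\le k\}$, i.e.\ one data structure per cell of the hierarchical cutting (at every level except possibly $\Xi_0$, which is the whole plane).

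The key step is then a counting argument exploiting the structure of the hierarchical cutting. For each fixed level $i$, the cells of $\Xi_i$ have pairwise disjoint interiors and cover the plane, so $\sum_{\sigma\in\Xi_i}|P(\sigma)| = |P| = n$ (points on cell boundaries can be assigned to a unique incident cell by the general-position convention). Summing over the $k+1$ levels and using $k=\lceil\log_\rho r\rceil = O(\log r) = O(\log n)$ (since $r=n^{1/3}$), we get
\[
  \sum n' \;=\; \sum_{i=1}^{k}\ \sum_{\sigma\in\Xi_i}|P(\sigma)| \;\le\; (k+1)\,n \;=\; O(n\log n).
\]
This is exactly the claimed bound $\sum n' = O(n\log n)$. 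The same bound $\sum_{i=0}^{k}\sum_{\sigma\in\Xi_i}|P(\sigma)| = O(n\log n)$ is already implicitly used in Section~\ref{sec:review} when bounding the preprocessing time for the BS data structures, so the observation is really just making that accounting explicit and recording it for later use.

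I do not expect any real obstacle here: the statement is essentially bookkeeping. The only point requiring a little care is being precise about \emph{which} cells contribute — namely that it suffices to build the outside-hull structure once per cell $\sigma''$ of each cutting (the structure for $\sigma''$ serves all segments $s$ with $s\notin S(\sigma'')$, regardless of which descendant leaf cell $s$ was originally assigned to), rather than once per (segment, cell) pair; this is what keeps the total at $O(n\log n)$ rather than something larger. Everything else is the disjoint-cover identity $\sum_{\sigma\in\Xi_i}|P(\sigma)|=n$ together with $k=O(\log n)$.
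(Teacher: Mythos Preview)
Your proposal is correct and matches the paper's reasoning: the paper does not give a separate proof of this observation but simply records the accounting already done in Section~\ref{sec:review}, namely that $\sum_{\sigma\in\Xi_i}|P(\sigma)|=n$ at each level and there are $k=O(\log r)=O(\log n)$ levels. Your identification of the contributing subsets and the per-level disjoint-cover argument are exactly what the paper relies on.
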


In the preprocessing, we compute the Voronoi diagram
$\vd(Q)$ of $Q$, from which we can obtain the convex hull $\ch(Q)$ in linear time. For each edge $e$ of $\ch(Q)$, we determine the subset
$Q_e$ of points of $Q$ whose Voronoi cells intersect $e$ in order
along $e$. This order is exactly the order of the perpendicular
projections of the points of $Q_e$ onto $e$~\cite{ref:BespamyatnikhQu00}.

Consider a query segment $s$ that is outside $\ch(Q)$. Let $p_s$ be the first
point of $Q$ hit by $s$ if we drag $s$ along the direction perpendicularly to
$s$ and towards $\ch(Q)$; see Fig.~\ref{fig:vd}. For ease of exposition, we assume that $p_s$ is unique. Our goal is to compute $p_s$ in the case where
the point of $s$ closest to $Q$ is not an endpoint of $s$ since the other case
can be easily solved by using $\vd(Q)$. Henceforth, we assume that
the point of $s$ closest to $Q$ is not an endpoint of $s$, implying that $p_s$ is the point of $Q$ closest to $s$.
Without loss of generality, we assume that $s$ is horizontal and $s$ is below
$\ch(Q)$.
Let $a$ and $b$ be the left and right endpoints of $s$,
respectively (see Fig.~\ref{fig:vd}).

\begin{figure}[t]
\begin{minipage}[t]{\textwidth}
\begin{center}
\includegraphics[height=2.3in]{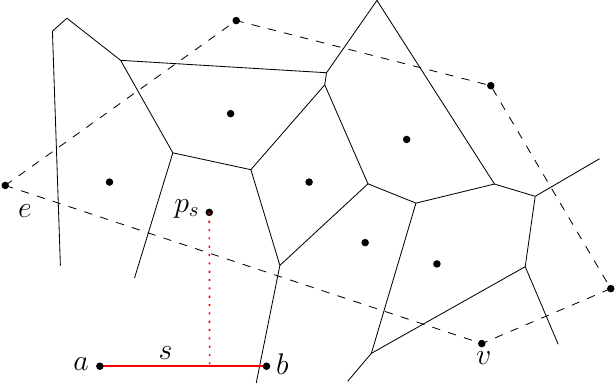}
\caption{\footnotesize Illustrating an outside-hull segment query.
}
\label{fig:vd}
\end{center}
\end{minipage}
\vspace{-0.15in}
\end{figure}

We first find the lowest vertex $v$ of $\ch(Q)$, which can be done in $O(\log n')$
time by doing binary search on $\ch(Q)$. If $x(a)\leq x(v)\leq x(b)$, then $v$ is $p_s$ and we are done with the query.
Otherwise, without loss of generality, we assume that $x(b)<x(v)$. By binary
search on $\ch(Q)$, we find the edge $e$ in the lower hull of $\ch(Q)$ that intersects
the vertical line through $b$. Since  $x(a)\leq x(b)<x(v)$,
$e$ must have a negative slope (see Fig.~\ref{fig:vd}).
Then, as discussed
in~\cite{ref:BespamyatnikhQu00}, $p_s$ must be in $Q_e$. To
find $p_s$ efficiently, we first make some observations (which were not discovered in the previous work).

Suppose $p_1,p_2,\ldots,p_m$ are the points of $Q_e$, sorted following the order of their Voronoi
cells in $\vd(Q)$ intersecting $e$ from left to right. We define two special indices $i^*$ and $j^*$ of $Q_e$ with respect to $a$ and $b$, respectively.
\begin{definition}
Define $j^*$ as the largest index of the point of $Q_e$ that is to the left of $b$.
Define $i^*$ as the smallest index of the point of $Q_e$ such that
$p_j$ is to the right of $a$ for all $j\geq i^*$.
\end{definition}
Note that $j^*$ must exist as $p_s$ is in $Q_e$ and is
to the left of $b$.
We have the following lemma.

\begin{lemma}\label{lem:20}
If $i^*$ does not exist or $i^*>j^*$, then $p_s$ cannot be the closest point of $s$ in $Q$.
\end{lemma}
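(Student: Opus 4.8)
The plan is to argue by contradiction: assume $p_s$ is the closest point of $s$ in $Q$ but that $i^*$ does not exist or $i^*>j^*$, and derive a contradiction with the geometry of the Voronoi cells along $e$ together with the fact that $s$ is horizontal and below $\ch(Q)$. Recall that $p_s$ is the point of $Q$ hit first when $s$ is dragged perpendicularly (i.e., vertically) upward toward $\ch(Q)$, and that under our standing assumption $p_s$ is the closest point of $s$ in $Q$. Since $p_s\in Q_e$ and $p_s$ lies to the left of $b$ (as noted after the definition), $p_s=p_t$ for some $t\le j^*$. The key structural fact I would first establish is that the perpendicular projection of $p_s$ onto $s$ lies strictly between $a$ and $b$; this is where I use the hypothesis that the closest point of $s$ to $Q$ is an interior point of $s$, not an endpoint. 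Equivalently, $x(a)<x(p_s)<x(b)$, because $s$ is horizontal, so the foot of the perpendicular from $p_s$ to the supporting line of $s$ is the point with the same $x$-coordinate as $p_s$.

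Next I would translate the two index conditions into statements about $x$-coordinates. The condition $x(p_s)<x(b)$ together with the definition of $j^*$ gives $t\le j^*$, which is consistent; the real leverage comes from $x(a)<x(p_s)$. I claim this forces $i^*$ to exist and satisfy $i^*\le t\le j^*$, contradicting the hypothesis. To see this, observe that the points $p_1,\dots,p_m$ are ordered by the positions of their Voronoi cells intersecting $e$, and—crucially—this order coincides with the order of their perpendicular projections onto $e$, which (since $e$ has negative slope and we are dealing with the lower hull) I would relate to the order of their $x$-coordinates. The clean way is: I would show that along $Q_e$ the $x$-coordinates are monotonically nondecreasing, or at least that once a point $p_j$ satisfies $x(p_j)>x(a)$, every later point $p_{j'}$ with $j'\ge j$ also satisfies $x(p_{j'})>x(a)$. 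Granting this monotonicity-type statement, since $x(p_t)=x(p_s)>x(a)$, the set $\{j : p_{j'} \text{ is to the right of } a \text{ for all } j'\ge j\}$ is nonempty (it contains $t$), so $i^*$ exists and $i^*\le t$. Combined with $t\le j^*$ we get $i^*\le j^*$, contradicting "$i^*$ does not exist or $i^*>j^*$." Hence the assumption that $p_s$ is the closest point is untenable.

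The step I expect to be the main obstacle is justifying the monotonicity/"closure under taking later indices" property of the relation "$p_j$ is to the right of $a$" along the ordered list $Q_e$. This is not literally the statement that $x(p_1)\le x(p_2)\le\cdots$—that need not hold in general—so I would instead argue it via the projection order on $e$: the ordering of $Q_e$ is the order of perpendicular projections onto $e$, and a point $p_j$ being to the left of $a$ (i.e., $x(p_j)\le x(a)$) while $s=\overline{ab}$ is horizontal below $\ch(Q)$ and $e$ has negative slope should be shown to be incompatible with $p_j$ projecting onto $e$ after some $p_{j'}$ that lies to the right of $a$; the argument will use that $p_j$'s Voronoi cell meets $e$, that $e$ is an edge of the lower hull with the stated slope, and elementary facts about how perpendicular projections onto a line of negative slope compare $x$-coordinates. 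Once that incompatibility is in hand, everything else is bookkeeping with the definitions of $i^*$ and $j^*$.
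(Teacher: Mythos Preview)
Your monotonicity claim---``once a point $p_j$ satisfies $x(p_j)>x(a)$, every later $p_{j'}$ with $j'\ge j$ also satisfies $x(p_{j'})>x(a)$''---is false in general, and it cannot be salvaged from projection considerations alone. The order of $Q_e$ is by perpendicular projection onto $e$, and since $e$ has negative slope, the projection parameter depends on both coordinates (roughly $x-\lambda y$ for some $\lambda>0$). So a point $p_{j'}$ slightly to the right of $a$ but high up can project onto $e$ \emph{before} a point $p_j$ slightly to the left of $a$ but much lower; both can lie above $e$ and have their Voronoi cells meet $e$. Thus your attempted reduction to a suffix property of ``being to the right of $a$'' does not go through, and the step you flag as ``the main obstacle'' is in fact a genuine gap.

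The paper's proof avoids this by never attempting any monotonicity. It uses the hypothesis that $p_s$ is the closest point of $s$ in $Q$ in a much stronger way than just $x(a)\le x(p_s)\le x(b)$: it uses that the entire vertical segment from $p_s$ down to $s$ lies in the Voronoi cell of $p_s$. Concretely, assuming $i^*>j^*$ (or $i^*$ does not exist), the paper first shows $p_{j^*}$ (respectively $p_m$) is strictly to the left of $a$ while having index larger than that of $p_s$. Then, because the Voronoi cells along $e$ are ordered by index, the bisector of $p_s$ and this point meets $e$ to the left of the vertical line through $p_s$; hence the point $q$ where that vertical line meets $e$ is closer to $p_{j^*}$ (or $p_m$) than to $p_s$. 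But $q$ sits on the vertical segment from $p_s$ to $s$, which is entirely in the Voronoi cell of $p_s$---contradiction. This Voronoi/bisector argument is the missing ingredient in your outline.
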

\begin{proof}
We first assume that $i^*$ exists and $i^*>j^*$. In the following, we prove that $p_s$ cannot be the closest point of $s$ in $Q$.
By the definition of $j^*$, for any $i\geq
j^*+1$, $p_i$ is to the right of $b$ and thus is to the right of $a$.
Hence, by the definition of $i^*$, it holds that $i^*\leq j^*+1$.
Since $i^*>j^*$, $i^*$ must be $j^*+1$.
Again by the definition of $i^*$, $p_{j^*}$ must be strictly to the left
of $a$ since otherwise $i^*\leq j^*$ must hold.

\begin{figure}[t]
\begin{minipage}[t]{\textwidth}
\begin{center}
\includegraphics[height=1.5in]{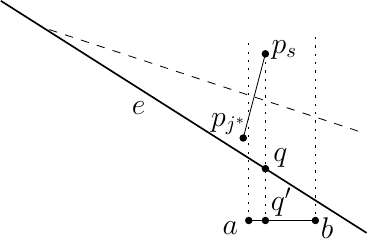}
\caption{\footnotesize Illustrating the proof of Lemma~\ref{lem:20}. The point $p_i$ is $p_s$. The dashed line between $p_{j^*}$ and $p_s$ is their bisector.
}
\label{fig:fig10}
\end{center}
\end{minipage}
\vspace{-0.15in}
\end{figure}

Let $q$ (resp., $q'$) be the intersections between $e$ (resp., $s$)
and the vertical line through
$p_s$ (e.g., see Fig.~\ref{fig:fig10}). Note that both $q$ and $q'$ must exit by the definition of
$p_s$.

Assume to the contradiction that $p_s$ is the closest point of $s$ in $Q$. Without loss of
generality, assume $p_s$ is $p_i$ for some index $i\in [1,m]$. Then,
since $x(a)\leq x(p_i)\leq x(b)$, we have $i\leq j^*$ by the
definition of $j^*$. Further, $i$ cannot be $j^*$ since $p_{j^*}$ is strictly
to the left of $a$. Therefore, $i<j^*$.
Recall that the index order of $Q_e$ follows the intersections of
the Voronoi cells with $e$ from left to right. Since $i<j^*$ and
$x(p_{j^*})<x(a)\leq x(p_i)$, the bisector of $p_i$ and $p_{j^*}$ intersects
$e$ at a point to the left of $a$ (e.g., see Fig.~\ref{fig:fig10}).
As $q$ is to the right of $a$, $q$ must be closer to $p_{j^*}$ than to $p_i$.

On the other hand, since $p_i$ is the closest point of $s$ in $Q$ and
$\overline{p_iq'}$ is perpendicular to $s$, for any point $p\in
\overline{p_iq'}$, $p$'s closest point in $Q$ is $p_i$. Hence, $q$, which is a
point on $\overline{p_iq'}$,
must be closer to $p_i$ than to $p_{j^*}$. We thus obtain contradiction.

We next argue that if $i^*$ does not exist, then $p_s$ cannot be the closest point of $s$ in $Q$. The argument is similar as above. First notice that $p_m$ must be strictly to the left of $a$, since otherwise $i^*$ would exit. Without loss of generality, assume $p_s$ is $p_i$ for some index $i\in [1,m]$. Since $x(a)\leq x(p_i)\leq x(b)$, $i$ cannot be $m$ and thus $i<m$. Hence, we have $i<m$ and $x(p_m)<x(a)\leq x(p_i)$. Then, by applying the same argument as above (just replace $p_{j^*}$ by $p_m$), we can prove that $p_s$ cannot be the closest point of $s$ in $Q$.
\end{proof}

By Lemma~\ref{lem:20}, if $i^*$ does not exist or if $i^*>j^*$, then we can simply stop the query algorithm.
In the following, we assume that $i^*$ exists and $i^*\leq j^*$.
Let $Q_e[i^*,j^*]$ denote the subset of points of $Q_e$ whose indices
are between $i^*$ and $j^*$ inclusively. The following lemma implies that we can
use the supporting line of $s$ to search $p_s$.

\begin{lemma}\label{lem:30}
Suppose $p_s$ is the closest point of $s$ in $Q$. Then, $p_s$ is the
point of $Q_e[i^*,j^*]$ closest to the supporting line of $s$ (i.e., the line containing $s$).
\end{lemma}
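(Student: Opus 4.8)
The plan is to show that among the points of $Q_e[i^*,j^*]$, the one closest to the supporting line $\ell$ of $s$ is exactly $p_s$, using the fact that $p_s$ is already known to lie in $Q_e$ (by the discussion in~\cite{ref:BespamyatnikhQu00}) together with the index-order structure of $Q_e$ along $e$. First I would note that $p_s \in Q_e[i^*,j^*]$: since $p_s$ is the closest point of $s$ in $Q$ and the point of $s$ realizing the distance is not an endpoint, the foot of the perpendicular from $p_s$ to $\ell$ lies on $s$, so $x(a) \le x(p_s) \le x(b)$; by the definitions of $j^*$ and $i^*$ this forces $i^* \le \mathrm{index}(p_s) \le j^*$. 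So the candidate set does contain the true answer, and it remains to show no other point of $Q_e[i^*,j^*]$ is strictly closer to $\ell$ than $p_s$.

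Next I would argue by contradiction: suppose some $p_k \in Q_e[i^*,j^*]$ has $d(p_k,\ell) < d(p_s,\ell)$, where $s$ is horizontal and below $\ch(Q)$, so ``distance to $\ell$'' is just vertical height above $\ell$. Let $q'$ be the foot of the perpendicular from $p_s$ to $s$ (which lies on $s$), and let $q$ be the point of $e$ directly above $q'$. The key geometric point is that for every point $p$ on the segment $\overline{p_s q'}$, the closest point of $Q$ to $p$ is $p_s$ (this follows since $\overline{p_sq'}\perp s$ and $p_s$ is the closest point of $s$ in $Q$); in particular $q$ is closer to $p_s$ than to $p_k$, i.e.\ $q$ lies in the Voronoi cell of $p_s$ (or on its boundary). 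On the other hand, since $p_k$ is lower than $p_s$ (smaller height over $\ell$) but both have $x$-coordinate in $[x(a),x(b)]$, I would use the ordering of Voronoi cells along $e$ — the same mechanism as in the proof of Lemma~\ref{lem:20} — to locate where the bisector of $p_s$ and $p_k$ crosses $e$ relative to $q$: because $p_k$ is closer to $\ell$ (hence, in a suitable sense, ``lower and thus competing for the part of $e$ near its own projection''), the point $q$ ends up on the $p_k$-side of that bisector, contradicting that $q$ is in the Voronoi cell of $p_s$. Combining $d(q,p_s)\le d(q,p_k)$ from the perpendicularity argument with the strict inequality $d(q,p_k) < d(q,p_s)$ from the bisector/ordering argument yields the contradiction, so $p_s$ minimizes the distance to $\ell$ over $Q_e[i^*,j^*]$, and since it attains the minimum it equals the point returned.

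I would carry out the steps in this order: (1) establish $p_s \in Q_e[i^*,j^*]$; (2) set up $q$, $q'$, and record $d(q,p_s)\le d(q,p_k)$ for any competing $p_k$ via the perpendicularity of $\overline{p_sq'}$ to $s$; (3) translate ``$d(p_k,\ell) < d(p_s,\ell)$'' together with $x(p_k)\in[x(a),x(b)]$ and the left-to-right ordering of the $Q_e$-cells along $e$ into a statement about which side of the $p_s p_k$ bisector the point $q$ lies on; (4) derive the contradiction. I expect step~(3) to be the main obstacle: one has to be careful that ``closer to the line $\ell$'' is the right surrogate for the Voronoi-cell ordering along $e$, and handle the two sub-cases according to whether $p_k$'s index is smaller or larger than that of $p_s$ (as in Lemma~\ref{lem:20}, the sign of the slope of $e$ and the fact that $e$ lies on the lower hull are what make the bisector cross $e$ on the correct side). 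The endpoint-versus-interior assumption on the closest point of $s$ is what guarantees the foot $q'$ actually lies on $s$ and keeps the whole argument from degenerating.
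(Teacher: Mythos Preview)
Your overall strategy is sound and close in spirit to the paper's, but there are two genuine gaps.

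\textbf{Step (1) is not immediate.} From $x(a)\le x(p_s)\le x(b)$ you correctly get $\mathrm{index}(p_s)\le j^*$ by the definition of $j^*$. But $x(p_s)\ge x(a)$ does \emph{not} force $\mathrm{index}(p_s)\ge i^*$: by definition $i^*$ is the smallest index such that every $p_j$ with $j\ge i^*$ lies to the right of $a$, so there can be points $p_i$ with $i<i^*$ that are themselves to the right of $a$ (while some $p_j$ with $i<j<i^*$ is strictly left of $a$). Ruling out that $p_s$ is such a $p_i$ requires exactly the bisector/Voronoi argument you invoke later; the paper proves this as a separate claim (its part ``(2)'') rather than treating it as definitional.

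\textbf{Step (3) attacks the wrong case.} You assume the hypothetical competitor $p_k\in Q_e[i^*,j^*]$ has $x(p_k)\in[x(a),x(b)]$. First, membership in $Q_e[i^*,j^*]$ does not guarantee this (only $x(p_k)\ge x(a)$ is automatic). Second, if it \emph{did} hold, then $d(p_k,s)=d(p_k,\ell)<d(p_s,\ell)=d(p_s,s)$ already contradicts the minimality of $p_s$ with no bisector argument needed. The nontrivial case is precisely when $x(p_k)>x(b)$, where $d(p_k,s)>d(p_k,\ell)$ and you must argue that such a $p_k$ cannot be the $\ell$-minimizer. The paper handles this by showing that any $p_i\in Q_e[i^*,j^*]$ with $x(p_i)>x(b)$ is beaten (in distance to $\ell$) by $p_{j^*}$, using that $i<j^*$, the cell ordering along $e$, and the negative slope of $e$ to conclude $y(p_i)>y(p_{j^*})$; this is the piece your outline is missing. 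Once you establish that the $\ell$-minimizer $p^*$ over $Q_e[i^*,j^*]$ satisfies $x(p^*)\in[x(a),x(b)]$, the identification $p^*=p_s$ is immediate from $d(p^*,s)=d(p^*,\ell)\le d(p_s,\ell)=d(p_s,s)$, and no further bisector argument is required.
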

\begin{proof}
Let $\ell$ denote the supporting line of $s$ and let $p^*$ denote the
point of $Q_e[i^*,j^*]$ closest to $\ell$. Our goal is to prove that
$p^*$ is $p_s$, i.e., $p^*$ is the closest point of $s$ in $Q$.
To this end, it suffice to prove the following: (1) $x(a)\leq x(p^*)\leq x(b)$;
(2) for any point $p$ of $Q_e$ not in
$Q_e[i^*,j^*]$,
$p$ cannot be $p_s$.

We first prove (1).
Consider a point $p_i\in Q_e[i^*,j^*]$ such that $x(p_i)\not\in [x(a),x(b)]$.
In the following we prove that $Q_e[i^*,j^*]$ must have another point $p_j$ such that
$d(p_j,\ell)<d(p_i,\ell)$ and $x(p_j)\in [x(a),x(b)]$. This will
lead to (1).


\begin{figure}[t]
\begin{minipage}[t]{0.48\textwidth}
\begin{center}
\includegraphics[height=1.5in]{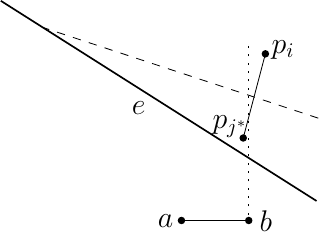}
\caption{\footnotesize The dashed line between $p_{j^*}$ and $p_i$ is the bisector of the two points.}
\label{fig:fig20}
\end{center}
\end{minipage}
\hspace{0.05in}
\begin{minipage}[t]{0.48\textwidth}
\begin{center}
\includegraphics[height=1.5in]{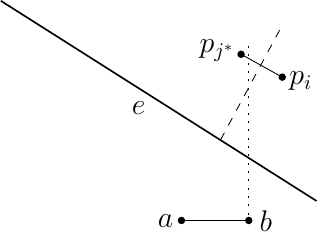}
\caption{\footnotesize The dashed line between $p_{j^*}$ and $p_i$ is the bisector of the two points.}
\label{fig:fig25}
\end{center}
\end{minipage}
\end{figure}

Since $i\in [i^*,j^*]$, $p_i$ is to the right of $a$.
As $x(p_i)\not\in [x(a),x(b)]$, $p_i$ must be strictly to the right of $b$.  Since $p_{j^*}$
is to the left of $b$, we obtain $x(p_{j^*})\leq x(b)<x(p_i)$ and $i<j^*$.
Recall that the index order of $Q_e$ follows the intersections of
the Voronoi cells of $Q_e$ with $e$ from left to right. Since $i<j^*$, the portion of $e$ closer to $p_i$ is to the left of
the portion of $e$ closer to $p_{j^*}$.
This is possible only if $y(p_i)>y(p_{j^*})$ (e.g., see Fig.~\ref{fig:fig20}).
Indeed, assume to the contradiction that $y(p_i)\leq y(p_{j^*})$ (e.g., see Fig.~\ref{fig:fig25}). Then,
since the slope of $e$ is negative and $x(p_{j^*})\leq x(b)< x(p_i)$, the portion of $e$ closer to $p_i$ must be to the right of
the portion of $e$ closer to $p_{j^*}$, incurring contradiction. As such, $y(p_i)>y(p_{j^*})$ must hold.
Hence,
$d(p_{j^*},\ell)<d(p_i,\ell)$ as $\ell$ is horizontal.
Notice that $x(p_{j^*})\in [x(a),x(b)]$. Indeed, by definition,
$p_{j^*}$ is to the left of $b$. On the other hand, since $i^*\leq j^*$,
$p_{j^*}$ is to the right of $a$. Hence, $x(p_{j^*})\in [x(a),x(b)]$. This proves (1)
since $p_{j^*}$ is a point in $Q_e[i^*,j^*]$.

We now prove (2). Consider any point $p_i$ of $Q_e\setminus
Q_e[i^*,j^*]$. Our goal is to prove that $p_i$ is not $p_s$.
Recall that $p_s$ is the closest point of $Q$ and $x(p_s)\in
[x(a),x(b]$. First of all, if $x(p_i)\not\in [x(a),x(b)]$, then it is vacuously
true that $p_i\neq p_s$. We now assume that $x(p_i)\in [x(a),x(b)]$.
Hence, $p_i$ is to the left of $b$. By the definition of $j^*$,
$i\leq j^*$ holds. As $i\not\in [i^*,j^*]$, we have $i<i^*$.
Since $p_i$ is to the right of $a$, by the definition of $i^*$,
there must be a point $p_j$ with $i<j<i^*$ such that $p_j$ is strictly
to the left of $a$. Recall that the index order of $Q_e$ follows the intersections of
the Voronoi cells of $Q_e$ with $e$ from left to right.
Since $i<j$ and $x(p_j)<x(a)\leq x(p_i)$, the bisector of $p_i$ and
$p_j$ intersects $e$ at a point to the left of $a$ (e.g., see Fig.~\ref{fig:fig30}). This means that
$q$ is closer to $p_j$ than to $p_i$, where $q$ is the intersection
between $e$ and the vertical line through $p_i$.

\begin{figure}[t]
\begin{minipage}[t]{\textwidth}
\begin{center}
\includegraphics[height=1.7in]{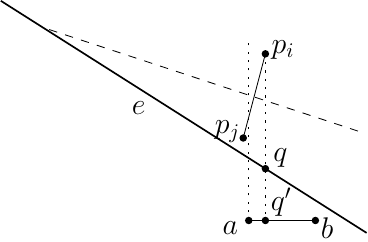}
\caption{\footnotesize The dashed line between $p_j$ and $p_i$ is the bisector of the two points.
}
\label{fig:fig30}
\end{center}
\end{minipage}
\vspace{-0.15in}
\end{figure}

Assume to the contrary that $p_i$ is $p_s$. Let $q'$ be the
intersection between $s$ and the vertical line through $p_i$ (e.g., see Fig.~\ref{fig:fig30}). Since
$p_i$ is the closest point of $s$ in $Q$, every point of
$\overline{q'p_i}$ has $p_i$ as its closest point in $Q$. In
particular, $q$, which is on $\overline{q'p_i}$, is closer to $p_i$ than to $p_j$. But this incurs
contradiction. This proves (2).
\end{proof}

Based on Lemma~\ref{lem:30}, we have the following three steps to
compute $p_s$: (1) compute $j^*$; (2) compute $i^*$; (3) find the point of
$Q_e[i^*,j^*]$ closest to the supporting line $\ell_s$ of $s$.

The following Lemma~\ref{lem:40}, which is for outside-hull segment queries, is a by-product of our above observations. Its complexity is the same as that in \cite{ref:BespamyatnikhQu00}. However, we feel that our new query algorithm is simpler and thus this result may be interesting in its own right. The query algorithm of
the lemma actually does not fit the $\Gamma$-algorithm framework. Instead, following the above observations
we will give another query algorithm that fits the $\Gamma$-algorithm framework.

\begin{lemma}\label{lem:40}
Given a set $Q$ of $n'$ points in the plane, we can build a data structure of
$O(n')$ space in $O(n'\log n')$ time such that each outside-hull query can be
answered in $O(\log n')$ time. The preprocessing time is $O(n')$ if the Voronoi
diagram of $Q$ is known.
\end{lemma}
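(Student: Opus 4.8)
The plan is to turn the three-step recipe derived from Lemmas~\ref{lem:20} and~\ref{lem:30} into a concrete data structure and to bound each step by $O(\log n')$ query time with $O(n')$ space and $O(n'\log n')$ preprocessing (or $O(n')$ if $\vd(Q)$ is given). First I would describe the preprocessing: compute $\vd(Q)$ in $O(n'\log n')$ time (or take it as given), extract $\ch(Q)$ in linear time, and for every lower-hull edge $e$ of negative slope, store the ordered list $Q_e=\langle p_1,\dots,p_m\rangle$ of sites whose Voronoi cells meet $e$ from left to right. Since each Voronoi cell is counted $O(1)$ times across all hull edges (a cell meets only a contiguous portion of the hull boundary), $\sum_e |Q_e| = O(n')$, so all these lists together use $O(n')$ space and can be produced in $O(n')$ time by walking $\vd(Q)$ along the hull. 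I would also keep each $Q_e$ in an array indexed by position so that binary search and random access are available.

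Next I would handle a query segment $s=\overline{ab}$ outside $\ch(Q)$. After the reduction already spelled out in the text — normalize so that $s$ is horizontal and below $\ch(Q)$, find the lowest vertex $v$ of $\ch(Q)$ by binary search on the hull in $O(\log n')$ time, and, if $v$ does not lie vertically above $s$, binary-search the lower hull for the edge $e$ hit by the vertical line through the appropriate endpoint (say $b$, assuming $x(b)<x(v)$) — we know $p_s\in Q_e$ and $e$ has negative slope. Now I carry out the three steps. Step (1): compute $j^*$, the largest index with $x(p_{j^*})<x(b)$. Because the $x$-coordinates of the points of $Q_e$ need not be monotone in the index order, I cannot binary-search on $x$-coordinate directly; instead I observe that the projections of the $p_i$ onto $e$ are in index order, so I binary-search for the point of $Q_e$ whose Voronoi cell contains the intersection of $e$ with the vertical line through $b$ and then adjust by $O(1)$ to get $j^*$. (More carefully: the predicate ``$p_i$ is to the left of $b$'' is monotone in $i$ only if the $x$-order agrees with the projection order; since $e$ has negative slope and the projection order is the index order, the vertical line through $b$ separates $Q_e$ into a prefix and a suffix by $x$-coordinate precisely when we restrict to the cells meeting $e$, so a binary search does find $j^*$.) Step (2): compute $i^*$ similarly from $a$; if $i^*$ does not exist or $i^*>j^*$ then by Lemma~\ref{lem:20} we stop. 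Step (3): by Lemma~\ref{lem:30}, $p_s$ is the point of the contiguous block $Q_e[i^*,j^*]$ minimizing distance to the horizontal line $\ell_s$, i.e.\ maximizing $y$-coordinate; this is a range-maximum query over a fixed array, answerable in $O(\log n')$ time with a balanced binary search tree built over each $Q_e$ (or $O(1)$ with a sparse table), all within $O(n')$ total space.

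The main obstacle I expect is Step (1)/(2): justifying that $j^*$ and $i^*$ are computable by a \emph{single} binary search rather than by scanning. The subtlety is that ``$p_i$ is to the left of $b$'' is a statement about $x$-coordinates while the list is sorted by projection onto the slanted edge $e$. I would resolve this by arguing that, for points whose Voronoi cells meet a common edge $e$ of negative slope and which all lie on one side of $e$ (the hull side), the left-to-right order of their perpendicular projections onto $e$ coincides with the left-to-right order of their $x$-coordinates — or, if that is not literally true, that the positions where the ``left of $b$'' predicate flips form a single contiguous range, which is already enough for a binary search up to an $O(1)$ local fix. A clean way to see this is to reuse exactly the geometric fact exploited in the proof of Lemma~\ref{lem:30}: if $i<j$ but $x(p_i)>x(p_j)$ then $y(p_i)>y(p_j)$ (because $e$ has negative slope and the closer-to-$p_i$ portion of $e$ is to the left), which constrains the $x$-order enough to make the relevant predicates unimodal along the list. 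Once that monotonicity/unimodality is in hand, everything else is a routine assembly: three $O(\log n')$ searches plus the hull binary searches, $O(n')$ space for $\vd(Q)$, the $Q_e$ arrays, and the range-max structures, and $O(n'\log n')$ (resp.\ $O(n')$) preprocessing. Summing over all subsets via Observation~\ref{obser:20} then gives the $O(n\log n)$ aggregate cost used in recurrence~\eqref{equ:50}.
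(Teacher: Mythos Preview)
Your proposal has two genuine gaps, both stemming from the same oversight: the quantities ``left of $b$'', ``right of $a$'', and ``$y$-coordinate'' are all defined \emph{after} rotating so that the query segment $s$ is horizontal, and different queries have different slopes. The arrays $Q_e$ are built once at preprocessing time and must serve queries of arbitrary orientation.

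For Steps~(1) and~(2), your binary-search argument does not go through. The predicate ``$p_i$ is to the left of $b$'' (i.e.\ the projection of $p_i$ onto the direction of $s$ is at most that of $b$) is \emph{not} monotone in $i$, and it is not even unimodal: the proof of Lemma~\ref{lem:30} explicitly exhibits indices $i<j^*$ with $x(p_i)>x(b)$ while $x(p_{j^*})\le x(b)$, so the index set $\{i:x(p_i)\le x(b)\}$ can have holes strictly below $j^*$. The observation ``$i<j$ and $x(p_i)>x(p_j)$ imply $y(p_i)>y(p_j)$'' constrains the configuration but does not yield the unimodality you need. What the paper does instead is build a balanced tree $T_e$ over $Q_e$ and, at each node $u$, ask for the \emph{leftmost} point of the canonical subset $Q_e(u)$ in the rotated frame---an extreme-point (tangent) query on $\ch(Q_e(u))$. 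The test ``$\min_{p\in Q_e(u)} x(p)\le x(b)$?'' correctly decides whether $j^*$ lies in the right subtree, with no monotonicity assumption on the list itself.

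For Step~(3), a precomputed range-maximum on a fixed array answers $\max y$ only for one orientation. Since ``closest to $\ell_s$'' means extreme in the direction perpendicular to $s$, and $s$ varies, you again need convex hulls of canonical subsets, not a sparse table. The paper stores the hulls of the $Q_e(v)$ in an Overmars--van~Leeuwen--style hull tree with fractional cascading on tangent slopes, so that along any root-to-leaf path the extreme point in the query direction at each node is found in $O(1)$ amortized time after an initial $O(\log n')$ tangent search at the root. This is what keeps each of the three steps at $O(\log n')$ while staying within $O(n')$ space, and it is the missing ingredient in your plan.
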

\begin{proof}
The preprocessing algorithm is essentially the same as that in~\cite{ref:BespamyatnikhQu00}.
We first compute the Voronoi diagram
$\vd(Q)$ of $Q$, from which we can obtain the convex hull $\ch(Q)$ of $Q$ in linear time. Then, we determine $Q_e$ for each edge $e$ of
$\ch(Q)$. We preprocess each $Q_e$ as follows.
We build a balanced binary search tree $T_e$ whose leaves corresponding to
the points of $Q_e=\{p_1,p_2,\ldots,p_m\}$
in their index order as discussed before. For each node $v$ of $T_e$, we use
$Q_e(v)$ to denote the set of points in the leaves of the subtree
rooted at $v$. Before enhancing $T_e$ with additional information, we describe
our query algorithm.

Consider a query segment $s$. Without loss of generality, we assume that $s$ is
horizontal and below $\ch(Q)$. Let $a$ and $b$ be the left and right endponits
of $s$, respectively. As discussed before,
we first find the lowest vertex $v$ of $\ch(Q)$, which can be done in $O(\log n)$
time by doing binary search on $\ch(Q)$. If $x(a)\leq x(v)\leq x(b)$, then $v$
must be the closest point of $Q$ to $s$ and we are done with the query.
Otherwise, without loss of generality, we assume that $x(b)<x(v)$. By binary
search on $\ch(Q)$, we find in $O(\log n)$ time
the edge $e$ in the lower hull of $\ch(Q)$ that intersects
the vertical line through $b$. Then, the closest point of $s$ in $Q$ must be in
$Q_e$~\cite{ref:BespamyatnikhQu00}.
We next find $p_s$ using $T_e$ and the algorithm has three steps as
discussed above.

\begin{enumerate}
\item
First, for computing $j^*$, starting from the
root of $T_e$, for each node $v$, we do the following. Let $u$ be
the right child of $v$. Let $x_{l}$ and $x_r$ denote the
$x$-coordinates of the leftmost and rightmost points of $Q_e(u)$, respectively.
If $x_l\leq x(b)$, then $p_{j^*}$ is in $Q_e(u)$ and we proceed on
$u$. Otherwise, $p_{j^*}$ cannot be in $Q_e(u)$ and we proceed on the
left child of $v$.

\item
Second, for computing $i^*$, starting from the
root of $T_e$, for each node $v$, we do the following. Let $u$ be
the right child of $v$. Let $w$ be the left child of $v$. Let $x_{l}$ and $x_r$ denote the
$x$-coordinates of the leftmost and rightmost points of $Q_e(u)$, respectively.
If $x_l< x(a)\leq x_r$, then $p_{i^*}$ must be in $Q_e(u)$ and we
proceed on $u$. If $x(a)\leq x_l$, then we proceed on $w$ since $p_{i^*}$ is either in $Q_e(w)$ or
the leftmost leaf of the subtree rooted at $u$ (the latter case will be handled
next). If $x(a)>x_r$, then $p_{i^*}$ is $w''$, where
$w''$ is the right neighboring leaf of $w'$ and $w'$ is the rightmost leaf of
the subtree rooted at $u$.

\item
After $i^*$ and $j^*$ are found, by standard approach, we can obtain a set $V$ of
$O(\log n)$ nodes of
$T_e$ such that the union of $Q_e(v)$ of all nodes $v\in V$ is exactly
$Q_e[i^*,j^*]$. For each node $v\in V$, we find the lowest point of
$Q_e(v)$ as a candidate; finally among all such candidate points, we return the lowest one as $p_s$.
\end{enumerate}

To implement all above three steps in $O(\log n')$ time, we enhance $T_e$
in the same way as that in~\cite{ref:BespamyatnikhQu00}. We briefly
discuss it here for completeness. We first
store the convex hull $\ch(Q_e)$ at the root of $T_e$. Any other internal node
stores the portion of its convex hull that is not stored by its
ancestors. For this, a key observation is that the convex hull of a
node can be obtained from the convex hulls of its children by
computing the two common tangents (this is because the two subsets of
points at the two children are separated by a line perpendicular to $e$).
In addition, we construct a fractional cascading
structure~\cite{ref:ChazelleFr86} so that if a tangent to the convex
hull at a node $v$
is known, then the tangents of the same slope to the convex hulls of
the two children can be found in constant time. The total
preprocessing time for $Q$ is $O(n'\log n')$ and the space is $O(n')$. In
addition, if the Voronoi diagram of $Q$ is known, then the
preprocessing time can be reduced to $O(n')$.
In this way, all above three steps (and thus the entire query algorithm)
can be implemented in $O(\log n')$
time. For example, to compute $j^*$, we need to access $O(\log n')$
nodes and for each such node $v$, we need to find the leftmost and
rightmost points of $Q_e(v)$. This can be done in $O(\log n')$ time
using the fractional cascading structure by computing the tangents to
their convex hulls of a slope perpendicular to $s$.

In summary, the above gives an $O(n')$ space data structure that can
support each outside-hull segment query in $O(\log n')$ time. The
data structure can be built in $O(n'\log n')$ time or in $O(n')$ time if
the Voronoi diagram of $Q$ is known.
\end{proof}

We now give a new algorithm that fits the $\Gamma$-algorithm framework.
The new algorithm requires slightly more preprocessing than Lemma~\ref{lem:40}.
But for our purpose, we are satisfied with $O(n^{4/3})$ preprocessing time.
We have different preprocessing for each of the three steps of the query algorithm, as follows.

\subparagraph{The first step: computing $\boldsymbol{j^*}$.}
For computing $j^*$, we will use the {\em basic search lemma} (i.e.,
Lemma~4.1) in~\cite{ref:ChanHo22}. In order to apply the lemma,
we perform the following preprocessing.

Recall that $Q_e=\{p_1,p_2,\ldots,p_m\}$ is ordered by their Voronoi cells
intersecting $e$. We partition the sequence into $r$ contiguous subsequences of
size roughly $m/r$ each. Let $Q_e^i$ denote the $i$-th subsequence, with $1\leq i\leq r$.
For each $i\in [1,r]$, we compute and explicitly
maintain the convex hull $\ch(i)$ of all points in the union
of the subsequences $Q_e^j$, $j=i,i+1,\ldots,r$.
Next, for each subsequence $Q_e^i$, we further partition it into $r$ contiguous sequences
of size roughly $|Q_e^i|/r$ and process it in the same way as above. We do
this recursively until the subsequence has no more than $r$ points. In this
way, we obtain a tree $T$ with $m$ leaves such that each node has $r$ children. For each node $v$, we use $\ch(v)$ to denote the convex hull that is computed above corresponding to $v$ (e.g., if $v$ is the child of the root corresponding to $Q_e^i$, then $\ch(v)$ is $\ch(i)$ defined above).
The total time for constructing $T$ can be easily bounded by $O(mr\log m\log_r
m)$ as the height of $T$ is $O(\log_rm)$.

Now to compute $j^*$, we search the tree $T$: starting from
the root, for each node $v$, we apply the basic search lemma on all $r$
children of $v$. Indeed, this is possible due to the following. Consider the
root $v$. For each $i$ with $1\leq i\leq r$, let $x_i$ denote the
$x$-coordinate of the leftmost point of the union of the subsequences
$Q_e^j$, $j=i,i+1,\ldots,r$; note that $x_i$ is also the leftmost vertex of $\ch(i)$.
It is not difficult to see that $x_1\leq x_2\leq \ldots\leq
x_{r}$. Observe that $p_{j^*}$ is in $Q_e^i$ if and only if $x_i\leq x(b)<x_{i+1}$.
Therefore, we find the index $i$ such that $x_i\leq x(b)<x_{i+1}$ and then
proceed to the child of $v$ corresponding to $Q_e^i$. This property satisfies the
condition of the basic search lemma (essentially, we are looking for the
predecessor of $b$ in the sequence $x_1,x_2,\ldots,x_{r}$ and this is somewhat
similar to the insertion sort algorithm of Theorem~4.1~\cite{ref:ChanHo22}, which uses the basic search lemma). By
the basic search lemma, finding the index $i$ can be done using $O(1-r\Delta\Phi)$
comparisons provided that the $x$-coordinates $x_1,x_2,\ldots,x_{r}$ are available to
us (we will discuss how to compute them later). We then follow the same idea recursively until we reach a leaf.
In this way, the total number of comparisons for computing $j^*$ is
$O(\log_rm-r\Delta\Phi)$.

By setting $r=m^{\epsilon}$ for a small constant $\epsilon$, the
preprocessing time is $O(m^{1+\epsilon}\log m)$ and
computing $j^*$ can be done using $O(1-m^{\epsilon}\Delta\Phi)$ comparisons.
Recall that there are $O(n^{4/3})$ queries in our original problem (i.e., recurrence~\eqref{equ:50}) and the total time for $-\Delta\Phi$ during the entire algorithm is $O(n\log n)$. Also, since $m$ is the number of points of $Q$ whose Voronoi cells intersecting the edge $e$ of $\ch(Q)$,
the sum of $m$ for all outside-hull segment query data structures for all edges of $\ch(Q)$ is $|Q|$, which is $n'$. By Observation~\ref{obser:20}, the sum of $n'$ for all data structures in our original problem is $O(n\log n)$.
Hence, the total preprocessing time for our original problem is $O(n^{1+\epsilon}\log^{2+\epsilon}n)$, which is bounded by $O(n^{4/3})$ if we set $\epsilon$ to a small constant (e.g.,
$\epsilon=1/4$). As such,
with a preprocessing step of $O(n^{4/3})$ time, we can compute $j^*$ for all
queries using a total of $O(n^{4/3})$ comparisons.

The above complexity analysis for computing $j^*$ is based on the assumption that the leftmost
point of $\ch(v)$ for each node $v$ of $T$ is known. To find these points during
the queries, we take advantage of the property that all queries are
offline, i.e., we know all query segments before we start the queries.
Notice that although there are
$O(n^{4/3})$ queries, the number of distinct query segments is $n$, i.e., those
in $S$ (a segment may be queried on different subsets of $P$).
Let $s$ be the current query segment and $p$ be the leftmost point of a convex
hull $CH(v)$ with respect to $s$ (i.e., by assuming $s$ is horizontal). Let
$\rho_1$ be the ray from $p$ going vertically upwards. Let $\rho_2$ be another ray from $p$
going through the clockwise neighbor of $p$ on $CH_v$, i.e., $\rho_2$ contains
the clockwise edge of $CH_v$ incident to $v$. Observe that for another query
segment $s'$, $p$ is still the leftmost point of $CH_v$ with respect to $s'$ as
long as the direction perpendicular to $s'$ is within the angle from $\rho_1$
clockwise to $\rho_2$.
Based on this observation, before we start any query, we sort the perpendicular
directions of all segments of $S$ along with the directions of all edges of all
convex hulls of all nodes of the trees $T$ for all outside-hull segment query
data structures in our original problem (i.e., the recurrence~\eqref{equ:50}).
As analyzed above, the total size of convex hulls of all trees $T$ is
$O(n^{1+\epsilon}\log^{2+\epsilon} n)$. Hence, the sorting can be done in
$O(n^{1+\epsilon}\log^{3+\epsilon} n)$ time. Let $L$ be the sorted list. We
solve the queries for segments following their order in $L$.
Let $s$ and $s'$ be two consecutive segments of $S$ in $L$. After we solve all
queries for $s$, the directions between $s$ and $s'$ in $L$ correspond to those
nodes of the trees $T$ whose leftmost points need to get updated, and we then
update the leftmost points of those nodes before we solve queries for $s'$. The
total time we update the tree nodes for all queries is proportional to the total
size of all trees, which is $O(n^{1+\epsilon}\log^{2+\epsilon} n)$.

In summary, after $O(n^{4/3})$ time preprocessing, computing $j^*$ for all
$O(n^{4/3})$ outside-hull segment queries can be done using $O(n^{4/3})$
comparisons.

\subparagraph{The second step: computing $\boldsymbol{i^*}$.}
For computing $i^*$, the idea is similar and we only sketch it.
In the preprocessing, we build the same tree $T$ as above for the first step.
One change is that we add the first point $p$ of the subsequence $Q_e^i$ to the
end of $Q_e^{i-1}$, i.e., $p$ appears in both $Q_e^i$ and $Q_e^{i-1}$. This does
not change the complexities asymptotically.

For each query, to compute $i^*$, consider the root $v$. Observe that $i^*$ is
in $Q_e^i$ if and only if $x_i<x(a)\leq x_{i+1}$ ($x_i$ and $x_{i+1}$ are
defined in the same way as before). As such, we can apply the basic search lemma to
find $i^*$ in $O(1-m^{\epsilon}\Delta\Phi)$ comparisons. We can use the same
approach as above to update the leftmost points of convex hulls of nodes of the
trees $T$ (i.e., computing a sorted list $L$ and process the queries of the
segments following their order in $L$).

In summary, after $O(n^{4/3})$ time preprocessing, computing $i^*$ for all
$O(n^{4/3})$ outside-hull segment queries can be done using $O(n^{4/3})$
comparisons.

\subparagraph{The third step.}
The third step of the query algorithm is to find the point $p_s$ of
$Q_e[i^*,j^*]$ closest to $\ell(s)$, where $\ell(s)$ is the supporting line of $s$.
We first discuss the preprocessing step on $Q_e$.

We build a balanced binary search tree $T_e$ whose leaves corresponding to
the points of $Q_e=\{p_1,p_2,\ldots,p_m\}$
in their index order as discussed before. For each node $v$ of $T_e$, we use
$Q_e(v)$ to denote the set of points in the leaves of the subtree
rooted at $v$.
For each node $v$ of $T_e$, we explicitly store the convex hull of
$Q_e(v)$ at $v$.
Further, for each leaf $v$, which stores a point $p_i$ of $Q_e$, for each ancestor
$u$ of $v$, we compute the convex hull $\ch_r(v,u)$ of all points
$p_i,p_{i+1},\ldots,p_j$, where $p_j$ is the point in the
rightmost leaf of the subtree at $u$. We do this in a bottom-up manner
starting from $v$ following the path from $v$ to $u$. More specifically,
suppose we are currently at a node $w$, which is $v$ initially.
Suppose we have the convex hull $\ch_r(v,w)$.
We proceed on the parent $w'$ of $w$ as follows. If $w$ is the right child of $w'$, then $\ch_r(v,w')$ is $\ch_r(v,w)$ and thus we do nothing.
Otherwise, we merge $\ch_r(v,w)$ with the convex hull of $Q_e(w'')$ at $w''$, where
$w''$ is the right child of $w'$. Since points of $\ch_r(v,w)$ are
separated from points of $Q_e(w'')$ by a line perpendicular to
$e$~\cite{ref:BespamyatnikhQu00}, we can merge the two hulls by
computing their common tangents in $O(\log m)$
time~\cite{ref:OvermarsMa81}. We use a persistent tree to maintain
the convex hulls (e.g., by a path-copying
method)~\cite{ref:DriscollMa89,ref:SnarnakPl86} so that after the
merge we still keep $\ch_r(v,w)$. In this way, we have
computed $\ch_r(v,w')$ and we then proceed on the parent
of $w'$. We do this until we reach the root. As such, the total time and
extra space for computing the convex hulls for a leaf $v$ is $O(\log^2
m)$, and the total time and space for doing this for all leaves
is $O(m\log^2 m)$. Symmetrically, for each leaf $v$, which stores a point $p_i$ of $Q_e$, for each ancestor $u$ of $v$, we compute the convex hull $\ch_l(v,u)$ of all points
$p_h,p_{h+1},\ldots,p_i$, whether $p_h$ is the point in the
leftmost leaf of the subtree at $u$. Computing the convex hulls $\ch_l(v,u)$ for all ancestors $u$ for all leaves $v$ can be done in $O(m\log^2 m)$ in a similar way as above.
In addition, we construct a lowest common ancestor (LCA) data
structure on the tree $T_e$ in $O(m)$ time so that the LCA of any two query nodes of
$T_e$ can be found in $O(1)$ time~\cite{ref:BenderTh00,ref:HarelFa84}.
The total preprocessing time for constructing the tree $T_e$ as above is $O(m\log^2 m)$.
Recall that the sum of $m$ for all outside-hull segment query data structures is $O(n\log n)$. Therefore, the total preprocessing time as above for all data structures is $O(n\log^3 n)$.

Now consider the third step of the query algorithm. Suppose
$i^*$ and $j^*$ are known. The problem is to compute the point $p_s$ of
$Q_e[i^*,j^*]$ closest to the supporting line $\ell(s)$ of $s$. Let
$u$ and $v$ be the two leaves of $T_e$ storing the two points $p_{i^*}$
and $p_{j^*}$, respectively. Let $w$ be the lowest common ancestor of $u$
and $v$. Let $u'$ and $v'$ be the left and right children of $w$,
respectively. It is not difficult to see that the convex hull of
$\ch_r(u,u')$ and $\ch_l(v,v')$ is the convex hull of $Q_e[i^*,j^*]$. As
such, to find $p_s$, it suffices to compute the vertex of
$\ch_r(u,u')$ closest to $\ell(s)$ and the vertex of $\ch_l(v,v')$ closest to
$\ell(s)$, and among the two points, return the one closer to $\ell(s)$ as $p_s$.
To implement the algorithm, finding $w$ can be done in $O(1)$ time
using the LCA data structure~\cite{ref:BenderTh00,ref:HarelFa84}.
To find the closest vertex of $\ch_r(u,u')$ to $\ell(s)$, recall that the preprocessing computes a balanced binary search tree (maintained by a persistent tree), denoted by $T_r(u,u')$, for maintaining $\ch_r(u,u')$.
We apply a {\em search lemma} of Chan and Zheng (Lemma
A.1~\cite{ref:ChanHo22}) on the tree $T_r(u,u')$. Indeed, the problem is
equivalent to finding the predecessor of the slope of $\ell(s)$ among the slopes
of the edges of $\ch_r(u,u')$. Using the search lemma, we can find the
vertex of $\ch_r(u,u')$ closest to $\ell(s)$ using $O(1-\Delta\Phi)$ comparisons.
Similarly, the vertex of $\ch_l(v,v')$ closest to $\ell(s)$ can be found
using $O(1-\Delta\Phi)$ comparisons. In this way, $p_s$ can be computed using $O(1-\Delta\Phi)$ comparisons.

In summary, with $O(n\log^3 n)$ time preprocessing, the third step of the query algorithm for all
$O(n^{4/3})$ queries can be done using a total of $O(n^{4/3})$ comparisons (recall that the sum of
$-\Delta\Phi$ in the entire algorithm is $O(n\log n)$).

\subparagraph{Summary.}
Combining the three steps discussed above, 
all $O(n^{4/3})$ outside-hull segment queries can be solved using $O(n^{4/3})$ comparisons.
Recall that the above only discussed
the query on the data structure for a single edge $e$ of the convex hull of $Q$.
As the first procedure of the query, we need to find the vertex of $\ch(Q)$ closest to
the supporting line of $s$.
For this, we can maintain the convex hull $\ch(Q)$ by a balanced binary search tree
and apply the search lemma of Chan and Zheng (Lemma
A.1~\cite{ref:ChanHo22}) in the same way as discussed above. As such, this procedure for all
queries uses $O(n^{4/3})$ comparisons. The second procedure of the query is to find the edge of
$\ch(Q)$ intersecting the line through one of the endpoints of $s$ and perpendicular to
$s$. This operation is essentially to find a predecessor of the above endpoint of $s$ on the vertices of the lower hull of $\ch(Q)$. Therefore, we
can also apply the search lemma of Chan and Zheng, and thus this procedure for all
queries also uses $O(n^{4/3})$ comparisons.
As such, we can solve all $O(n^{4/3})$ outside-hull segment queries using $O(n^{4/3})$ comparisons, or
alternatively, we have an algebraic decision tree of height $O(n^{4/3})$ that
can solve all $O(n^{4/3})$ queries.


\subsection{Solving the subproblems $\boldsymbol{T(n^{2/3},n^{1/3})}$}

We now tackle the second challenge, i.e., solve each subproblem $T(n^{2/3},n^{1/3})$ in recurrence~\eqref{equ:50} using $O(n^{2/3})$ comparisons, or solve all $O(n^{2/3})$ subproblems $T(n^{2/3},n^{1/3})$ in~\eqref{equ:50} using $O(n^{4/3})$ comparisons.

Recall that $P$ is the set of $n$ points and $S$ is the set of $n$ segments for the original problem in recurrence~\eqref{equ:50}.
If the closest point of a segment $s\in S$ to $P$ is an endpoint of $s$, then finding the closest point of $s$ in $P$ can be easily done using the Voronoi diagram of $P$. Hence, it suffices to find the first point of $P$ hit by $s$ if we drag $s$ along the directions perpendicularly to $s$. There are two such directions, but in the following discussion we will only consider dragging $s$ along the upward direction perpendicularly to $s$ (recall that $s$ is not vertical due to our general position assumption) and let $p_s$ be the first point of $P$ hit by $s$, since the algorithm for the downward direction is similar. As such, the goal is to compute $p_s$ for each segment $s\in S$.

For notational convenience, let $m=n^{1/3}$ and thus we want to
solve $T(m^2,m)$ using $O(m^2)$ comparisons. More specifically, we are given $m$ points
and $m^2$ segments; the problem is to compute for each segment $s$ the point $p_s$ (with respect to the $m$ points, i.e., the first point hit by $s$ if we drag $s$ along the upward direction perpendicular to $s$). Our goal is to solve all $O(m^2)$ segment dragging queries using $O(m^2)$ comparisons after certain preprocessing. In what follows, we begin with the preprocessing algorithm.

\subparagraph{Preprocessing.}
For two sets $A=\{a_1,a_2,\ldots,a_m\}$ and $B=\{b_1,b_2,\ldots,b_m\}$ of $m$ points each, we say that they have the same {\em order type} if for each $i$, the index order of the points of $A$ sorted around $a_i$ is the same as that of the points of $B$ sorted around $b_i$ (equivalently, in the dual plane, the index order of the dual lines intersecting the dual line of $a_i$ is the same as that of the dual lines intersecting the dual line of $b_i$); the concept has been used elsewhere, e.g.,~\cite{ref:AronovSu21,ref:ChanHo22,ref:GoodmanMu83}. Because constructing the arrangement of a set of $m$ lines can be computed in $O(m^2)$ time~\cite{ref:ChazelleTh85}, we can decide whether two sets $A$ and $B$ have the same order type in $O(m^2)$ time, e.g., simply follow the incremental line arrangement construction algorithm~\cite{ref:ChazelleTh85}. We actually build an algebraic decision tree $T_D$ so that each node of $T_D$ corresponds to a comparison of the algorithm. As such, the height of $T_D$ is $O(m^2)$ and $T_D$ has $2^{O(m^2)}$ leaves, each of which corresponds to an order type (note that the number of distinct order types is at most $m^{6m}$~\cite{ref:GoodmanUp86}, but here using $2^{O(m^2)}$ as an upper bound suffices for our purpose).

Let $Q$ be a set of $m$ points whose order type corresponds to a leaf $v$ of $T_D$.
Let $K_Q$ denote the set of the slopes of all lines through pairs of points of $Q$. Note that $|K_Q|=O(m^2)$. We sort the slopes of $K_Q$. Consider two consecutive slopes $k_1$ and $k_2$ of the sorted $K_Q$. In the dual plane, for any vertical line $\ell$ whose $x$-coordinate is between $k_1$ and $k_2$, $\ell$ intersects the dual lines of all points of $Q$ in the same order (because $k_1$ and $k_2$ respectively are $x$-coordinates of two consecutive vertices of the arrangement of dual lines). This implies the following in the primal plane. Consider any two lines $\ell_1$ and $\ell_2$ whose slopes are between $k_1$ and $k_2$ such that all points of $Q$ are above $\ell_i$ for each $i=1,2$. Then, the order of the lines of $Q$ by their distances to $\ell_1$ is the same as their order by the distances to $\ell_2$. However, if we project all points of $Q$ onto $\ell_1$ and $\ell_2$, the orders of their projections along the two lines may not be the same. To solve our problem, we need a stronger property that the above projection orders are also the same. To this end, we further refine the order type as follows.

For each pair of points $q_i$ and $q_j$ of $Q$, we add the slope of the line perpendicular to the line through $p$ and $q$ to $K_Q$. As such, the size of $K_Q$ is still $O(m^2)$. Although $K_Q$ has $O(m^2)$ values, all these values are defined by the $m$ points of $Q$. Using this property, $K_Q$ can be sorted using $O(m^2)$ comparisons~\cite{ref:ChanHo22,ref:FredmanHo76}.

For two sets $Q=\{q_1,q_2,\ldots,q_m\}$ and $Q'=\{q_1',q_2',\ldots,q_m'\}$ of
$m$ points each with the same order type, we say that they have the same {\em
refined order type} if the order of $K_Q$ is the same as that of $K_{Q'}$, i.e.,
the slope of the line through $q_i$ and $q_j$ (resp., the slope of the line
perpendicular to the line through $q_i$ and $q_j$) is in the $k$-th position of
the sorted list of $K_Q$ if and only if the slope of the line through $q_i'$ and
$q_j'$ (resp., the slope of the line perpendicular to the line through $q_i'$
and $q_j'$) is in the $k$-th position of the sorted list of $K_{Q'}$. We further
enhance the decision tree $T_D$ by attaching a new decision tree at each leaf
$v$ of $T_D$ for sorting $K_Q$ (recall that $K_Q$ can be sorted using $O(m^2)$
comparisons, i.e., there is an algebraic decision tree of height $O(m^2)$ that
can sort $K_Q$), where $Q$ is a set of $m$ points whose order type corresponds
to $v$. We still use $T_D$ to refer to the new tree. The height of $T_D$ is
still $O(m^2)$.

We perform the following preprocessing work for each leaf $v$ of $T_D$.
Let $Q$ be a set of $m$ points that has the refined order type of $v$.
We associate $Q$ with $v$,
compute and sort $K_Q$, and store the sorted list using a balanced
binary search tree.
Let $k_1$ and $k_2$ be two consecutive slopes in the sorted list of $K_Q$.
Consider a line $\ell$ whose slope is in $(k_1,k_2)$ such that $\ell$ is below all points of $Q$. We project all points
perpendicularly onto $\ell$. According to the definition of $K_Q$, the order of
the projections is fixed for all such lines $\ell$ whose slopes are in
$(k_1,k_2)$. Without loss of generality, we assume that $\ell$ is horizontal.
Let $q_1,q_2,\dots,q_m$ denote the points of $Q$ ordered by their projections on
$\ell$ from left to right and we maintain the sorted list in a balanced binary
search tree. For each pair $(i,j)$ with $1\leq i\leq j\leq m$, let
$Q[i,j]=\{q_i,q_{i+1},\ldots,q_j\}$; we sort all points of $Q[i,j]$ by their distances to $\ell$ and store the sorted list in a balanced binary
search tree. As such, the time we spent on the preprocessing at $v$ is
$O(m^5\log m)$.

Since $T_D$ is a decision tree of height $O(m^2)$, the number of leaves of $T_D$
is $2^{O(m^2)}$. Therefore, the total preprocessing time for all leaves of $T_D$
is $m^5\log m\cdot 2^{O(m^2)}$.
As the decision tree $T_D$ can be built in $O(2^{\poly(m)})$ time, the total preprocessing time is bounded by $O(2^{\poly(m)})$.

\subparagraph{Solving a subproblem $\boldsymbol{T(m^2,m)}$.} Consider a
subproblem $T(m^2,m)$ with a set $P'$ of $m$ points and a set $S'$ of $m^2$
segments. We arbitrarily assign indices to points of $P'$ as
$\{p_1,p_2,\ldots,p_m\}$. By using the decision tree $T_D$, we first find the
leaf $v$ of $T_D$ that corresponds to the refined order type of $P'$, which can be done
using $O(m^2)$ comparisons as the height of $T_D$ is $O(m^2)$.
Let $Q=\{q_1,q_2,\ldots,q_m\}$ be the set of $m$ points associated with $v$. Below
we find for each segment $s\in S'$ its point $p_s$ in $P'$. Let $\ell$ denote
the supporting line of $s$.

We first find two consecutive slopes $k_1$ and $k_2$ in $K_{P'}$ such that the slope
of $\ell$ is in $[k_1,k_2)$. Note that we do
not explicitly have the sorted list of $K_{P'}$, but recall that we have the sorted
list of $K_Q$ stored at $v$. Since $P'$ and $Q$ have the same refined
order type, a slope defined by two points $p_i$ and $p_j$ is in the
$k$-th position of $K_{P'}$ if and only if the slope defined by
two points $q_i$ and $q_j$ is in the $k$-th position of
$K_Q$. Hence, we can search $K_Q$ instead; however, whenever we need to use a slope whose
definition involves a point $q_i\in Q$, we use $p_i$ instead. In this way, we
could find $k_1$ and $k_2$ using $O(\log m)$ comparisons. Further, since we have the
balanced binary search tree storing $K_Q$, we can apply the search lemma of Chan and
Zheng~\cite{ref:ChanHo22} as discussed above to find $k_1$ and $k_2$ using only $O(1-\Delta\Phi)$
comparisons.

Without loss of generality, we assume that $s$ is horizontal.
Let $a$ and $b$ denote the left and right endpoints of $s$, respectively.
Suppose we project all points of $P'$ perpendicularly onto $\ell$. Let
$p_{\pi(1)},p_{\pi(2)},\ldots,p_{\pi(m)}$ be the sorted list following their
projections along $\ell$ from left to right, where $\pi(i)$ is the index of the
$i$-th point in this order. We wish to find the index $i$ such that $a$ is between
$p_{\pi(i-1)}$ and $p_{\pi(i)}$ as well as the index $j$ such that $b$ is between
$p_{\pi(j)}$ and $p_{\pi(j+1)}$.
To this end, we do the following. Since $P'$ and $Q$ have the same refined order type,
if we project all points of $Q$ perpendicularly onto $\ell$, then
$q_{\pi(1)},q_{\pi(2)},\ldots,q_{\pi(m)}$ is the sorted list following their
projections along $\ell$ with the same permutation $\pi(\cdot)$.
Hence, to find the index $i$, we can query $a$ in
the sorted list $q_{\pi(1)},q_{\pi(2)},\ldots,q_{\pi(m)}$, which is maintained
at $v$ due to our preprocessing, but again, whenever we need to use a point $q_{\pi(k)}$, we use $p_{\pi(k)}$ instead.
Using the search lemma of Chan and Zheng as discussed before, we can find $i$ using $O(1-\Delta\Phi)$
comparisons. Similarly, the index $j$ can be found using $O(1-\Delta\Phi)$
comparisons.

Let $P'_{\ell}[i,j]=\{p_{\pi(i)},p_{\pi(i+1)},\ldots,p_{\pi(j)}\}$. By the
definitions of $i$ and $j$, the point
$p_s$ we are looking for is the point of $P'_{\ell}[i,j]$ closest to the line $\ell$. To
find $p_s$, we do the following. Let $\ell'$ be a line parallel to $\ell$
but is below all points of $P'$ and $Q$. Let $P'_{\ell'}[i,j]$ denote the sorted
list of $P'_{\ell}[i,j]$ ordered by their distances from $\ell'$. Then, $p_s$
can be found by binary search on $P'_{\ell'}[i,j]$. Since $P'$ and $Q$ have the
same refined order type, we can instead do binary search on
$Q_{\ell'}[i,j]$, whose order is consistent with that of $Q[i,j]$, which is maintained at $v$ due to the preprocessing. As such we can search $Q[i,j]$, but again whenever the algorithm wants to use a point $q_k\in Q[i,j]$, we will use $p_k$ instead to perform a comparison. Using the search lemma of Chan and Zheng, we can find $p_s$ using $O(1-\Delta\Phi)$ comparisons.

The above shows that $p_s$ can be found using $O(1-\Delta\Phi)$
comparisons. Therefore, doing this for all $O(m^2)$ segments can be done using $O(m^2-\Delta\Phi)$
comparisons.


In summary, with $O(2^{\poly(n)})$ time preprocessing, we can solve each
subproblem $T(n^{2/3},n^{1/3})$ using $O(n^{2/3})$ comparisons without
considering the term $-\Delta\Phi$, whose total sum in the entire algorithm of
recurrence~\eqref{equ:50} is $O(n\log n)$.


\subsection{Wrapping things up}

The above proves Lemma~\ref{lem:10}, and thus $T(n,n)$ in \eqref{equ:50} can be bounded by
$O(n^{4/3})$ after $O(2^{\poly(n)})$ time preprocessing as discussed before. Equivalently,
$T(b,b)$ in \eqref{equ:40} can be bounded by $O(b^{4/3})$ after $O(2^{\poly(b)})$ time
preprocessing. Notice that the preprocessing work is done only once and for all
subproblems $T(b,b)$ in \eqref{equ:40}.
Since $b=(\log\log\log n)^3$, we have $2^{\poly(b)}=O(n)$.
As such, $T(n,n)$ in \eqref{equ:40} solves to $O(n^{4/3})$ and we have the
following conclusion.

\begin{theorem}\label{theo:segment}
Given a set of $n$ points and a set of $n$ segments in the plane, we can find
for each segment its closest point in $O(n^{4/3})$ time.
\end{theorem}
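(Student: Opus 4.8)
The plan is to obtain the theorem by bootstrapping the two recurrences of Bespamyatnikh down to subproblems so tiny that Lemma~\ref{lem:10} applies with its $2^{\poly(\cdot)}$ preprocessing costing only $O(n)$ in total. First I would recall that combining~\eqref{equ:20} and~\eqref{equ:10} with $m=n$ and $r=n^{1/3}/\log n$ gives~\eqref{equ:30}, and then iterate~\eqref{equ:30} three times to reach~\eqref{equ:40}, i.e.\ $T(n,n)=O(n^{4/3})+O((n/b)^{4/3})\cdot T(b,b)$ with $b=(\log\log\log n)^3$. The additive $O(n^{4/3})$ overhead survives all three iterations because at every level the additive cost has the form $O(N^{4/3})$ on an instance of size $N$ and the sizes shrink geometrically in the exponent, so the geometric series of overheads is dominated by its first term.

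Next I would apply Lemma~\ref{lem:10} with $n$ replaced by the tiny value $b$: after $O(2^{\poly(b)})$ preprocessing, $T(b,b)$ can be solved using $O(b^{4/3})$ comparisons, and by explicitly materializing the associated algebraic decision tree (again in $O(2^{\poly(b)})$ time) this becomes an $O(b^{4/3})$-time real-RAM procedure. The numerical crux is the choice $b=(\log\log\log n)^3$: then $\poly(b)=\poly(\log\log\log n)=o(\log n)$, so $2^{\poly(b)}=O(n)$; moreover all $O((n/b)^{4/3})$ bottom-level subproblems share the same size $(b,b)$, so this preprocessing is done once and contributes only $O(n)$ overall.

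Substituting $T(b,b)=O(b^{4/3})$ into~\eqref{equ:40} then yields $T(n,n)=O(n^{4/3})+O((n/b)^{4/3})\cdot O(b^{4/3})=O(n^{4/3})$, and adding the one-time $O(n)$ preprocessing does not change this; that finishes the theorem. The real difficulty is not this wrap-up but Lemma~\ref{lem:10}, i.e.\ solving an $(n,n)$-instance with only $O(n^{4/3})$ comparisons. Via recurrence~\eqref{equ:50} with $r=n^{1/3}$ this reduces to (a) answering the $O(n^{4/3})$ outside-hull segment queries and (b) solving the $O(n^{2/3})$ subproblems $T(n^{2/3},n^{1/3})$ within the comparison budget, which is exactly what the $\Gamma$-algorithm framework of Chan and Zheng delivers once we replace the BS data structure by the new locality-respecting outside-hull query algorithm built on Lemmas~\ref{lem:20} and~\ref{lem:30}, and precompute refined order types so that each $T(n^{2/3},n^{1/3})$ collapses to a constant number of predecessor / closest-to-a-line searches on presorted lists. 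The delicate point there is the potential accounting: since $\Phi=\log|\Pi|$ decreases by at most $O(n\log n)$ in total, each of the $O(n^{4/3})$ searches may carry an extra additive $O(-\Delta\Phi)$ term for free, and these extra terms sum to $O(n\log n)=O(n^{4/3})$.
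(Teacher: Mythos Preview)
Your proposal is correct and follows essentially the same route as the paper: iterate~\eqref{equ:30} three times to reach~\eqref{equ:40} with $b=(\log\log\log n)^3$, invoke Lemma~\ref{lem:10} at size $b$ (materializing the decision tree so that $T(b,b)=O(b^{4/3})$ in the real-RAM model after a one-time $O(2^{\poly(b)})=O(n)$ preprocessing), and substitute back. Your summary of what Lemma~\ref{lem:10} requires---the two challenges resolved via the $\Gamma$-algorithm framework, the new outside-hull query algorithm, and the refined-order-type precomputation, with the $-\Delta\Phi$ terms summing to $O(n\log n)$---also matches the paper's development.
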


The following solves the asymmetric case of the problem.
\begin{corollary}
Given a set of $n$ points and a set of $m$ segments in the plane, we can find for each segment its closest point in  $O(n^{2/3}m^{2/3}+n\log n+m\log^2 n)$ time.
\end{corollary}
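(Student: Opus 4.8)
The plan is to reduce the asymmetric problem to \emph{balanced} instances, which Theorem~\ref{theo:segment} already solves optimally, by running a bounded number of levels of Bespamyatnikh's recursion---using the plain recurrences \eqref{equ:10} and \eqref{equ:20}, whose $\log$-factor overhead is harmless for this weaker bound---and invoking the appropriate recurrence according to whether $m<n$ or $m>n$. If $m$ and $n$ are within a constant factor of each other, I would simply pad the smaller set with dummy elements and call Theorem~\ref{theo:segment} directly, at cost $O(\max\{m,n\}^{4/3})=O(n^{2/3}m^{2/3})$. Otherwise, consider the case $m<n$ (the case $m>n$ is symmetric, with \eqref{equ:20} in place of \eqref{equ:10}), and iterate \eqref{equ:10} with a parameter $r=(n/m)^{\varepsilon}$ for a small constant $\varepsilon>0$. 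After $k=O(1/\varepsilon)=O(1)$ levels we have $r^{k}=\Theta(n/m)$, so the surviving subproblems have size $(m/r^{k},n/r^{2k})$, i.e.\ both parameters are $\Theta(m^{2}/n)$ and the instances are balanced; solving each of the $O(r^{2k})=O((n/m)^{2})$ of them with Theorem~\ref{theo:segment} costs $O\big((n/m)^{2}\cdot(m^{2}/n)^{4/3}\big)=O(n^{2/3}m^{2/3})$ in total, which is exactly the leading term. (The outside-hull combining between a subproblem and the rest of $P$ is already charged, level by level, inside the overhead terms of \eqref{equ:10}.)

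Next I would bound the overhead generated by the recursion levels. By \eqref{equ:10}, level $i$ contributes $O(r^{2i})\cdot O\big((n/r^{2i})\log n+(m/r^{i})\,r\log n\big)=O(n\log n+m\,r^{i+1}\log n)$; summing over the $O(1)$ levels, the first part is $O(n\log n)$ and the second telescopes to $O(m\,r^{k}\log n)=O(n\log n)$. Keeping the number of levels constant is precisely why $r$ is chosen polynomially large in $n/m$ rather than constant: with a constant $r$ the first part would acquire an extra $\log(n/m)$ factor. In the dual case $m>n$ one uses \eqref{equ:20} with a constant $r$, and there the ``$m\log r\log n$'' terms of \eqref{equ:20} sum to $O(m\log(m/n)\log n)=O(m\log^{2}n)$ since $m\le n^{2}$ in that subcase; this is the origin of the $m\log^{2}n$ summand in the stated bound. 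Along the way one must check that $r\le\min\{m',\sqrt{n'}\}$ (resp.\ $r\le\min\{n',\sqrt{m'}\}$) holds at every level, which reduces to the single inequality $n\le m^{2}$ (resp.\ $m\le n^{2}$)---exactly the range in which this part of the argument is used.

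It remains to dispatch the two extreme regimes, where a balanced subproblem would have size below a constant. If $m\le\sqrt n$, one application of \eqref{equ:10} with $r=m$ produces $O(m^{2})$ subproblems $T(1,n/m^{2})$, each solved by brute force in $O(n/m^{2})$ time, with recursion overhead $O(n\log(n/m)+m^{2}\log n)=O(n\log n)$. Symmetrically, if $m\ge n^{2}$, one application of \eqref{equ:20} with $r=n$ produces $O(n^{2})$ subproblems $T(m/n^{2},1)$, each solved in $O(m/n^{2})$ time, with overhead $O(n^{2}\log n+m\log^{2}n)=O(m\log^{2}n)$. The four regimes together yield the claimed $O(n^{2/3}m^{2/3}+n\log n+m\log^{2}n)$ bound. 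The main obstacle I anticipate is the parameter bookkeeping in the central case: one must make the recursion terminate at \emph{exactly} balanced instances (so Theorem~\ref{theo:segment} applies with no loss in the $n^{2/3}m^{2/3}$ term), keep the number of levels $O(1)$ so that the per-level $O(n\log n)$ cost is not multiplied by $\Theta(\log n)$, and respect the admissibility bounds on $r$ at every level, all at once. The brute-force base cases, and treating Theorem~\ref{theo:segment} as a black box---its internal $O(2^{\poly(\log\log\log n)})$ preprocessing is input-oblivious and, performed once per size class, is absorbed into $O(n\log n)$---are then routine.
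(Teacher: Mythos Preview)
Your plan is correct and follows the same route as the paper: reduce the asymmetric instance to balanced subproblems via one of the two recurrences \eqref{equ:10}/\eqref{equ:20}, invoke Theorem~\ref{theo:segment} on those, and dispatch the extreme ranges $n\ge m^{2}$ and $m\ge n^{2}$ by a single level that leaves only trivial subproblems. The paper, however, does the central cases more cleanly: a \emph{single} application of \eqref{equ:10} with $r=n/m$ (when $m<n<m^{2}$), respectively of \eqref{equ:20} with $r=m/n$ (when $n<m<n^{2}$), already produces $O(r^{2})$ subproblems of exactly balanced size $m/r=n/r^{2}$ (respectively $m/r^{2}=n/r$), and the admissibility condition $r\le\min\{m,\sqrt{n}\}$ (respectively $r\le\min\{n,\sqrt{m}\}$) is precisely the hypothesis $n\le m^{2}$ (respectively $m\le n^{2}$). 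Your parameter $\varepsilon$ and the $O(1/\varepsilon)$-level recursion are thus unnecessary; taking $\varepsilon=1$ is the paper's argument.

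One genuine slip to fix: in the dual case $m>n$ you write ``use \eqref{equ:20} with a constant $r$,'' which contradicts your earlier ``symmetric'' claim. Taken literally this is wrong: iterating \eqref{equ:20} $k=\Theta(\log(m/n))$ times with a fixed constant $r$ multiplies the number of subproblems by $C^{k}$, where $C>1$ is the hidden constant in ``$O(r^{2})$ cells,'' so you end up with $(m/n)^{2+\delta}$ subproblems for some $\delta>0$ and a leading term $m^{2/3+\delta}n^{2/3-\delta}$ rather than $m^{2/3}n^{2/3}$. The overhead computation you give happens to be insensitive to this (since $k\log r=\log(m/n)$ either way), but the subproblem count is not. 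Using the symmetric choice $r=(m/n)^{\varepsilon}$ with $O(1)$ levels---or simply one level with $r=m/n$ as the paper does---repairs this immediately.
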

\begin{proof}
Depending on whether $m\geq n$, there are two cases.
\begin{enumerate}
  \item If $m\geq n$, depending on whether $m< n^2$ there are two subcases.

  \begin{enumerate}
    \item If $m<n^2$, then let $r=m/n$, and thus $m/r^2 = n/r$. Applying
	\eqref{equ:20} and solving $T(m/r^2,n/r)$ by Theorem~\ref{theo:segment}
	gives $T(m,n)=O(m\log^2 n+n^{2/3}m^{2/3})$.

    \item If $m\geq n^2$, then applying recurrence \eqref{equ:20} with $r=n$, we
	obtain the following
\begin{align*}
  T(m,n)= O(m\log^2 n)+O(n^2)\cdot T(m/n^2,1).
\end{align*}
For $T(m/n^2,1)$, the problem is to find for each of the $m/n^2$ lines its
closest point among a single point, which can be trivially solved in $O(m/n^2)$
time.
Hence, the above recurrence solves to $O(m\log^2 n)$.
  \end{enumerate}
  Hence in the case where $m\geq n$, we can solve the problem in $O(m\log^2
  n+n^{2/3}m^{2/3})$ time.

  \item If $m< n$, depending on whether $n< m^2$ there are two subcases.
  \begin{enumerate}
	\item If $n<m^2$, then let $r=n/m$, and thus $m/r=n/r^2$. Applying
	\eqref{equ:10} and solving $T(m/r,n/r^2)$ by Theorem~\ref{theo:segment}
	gives $T(m,n)=O(n\log n+n^{2/3}m^{2/3})$.

	\item If $n\geq m^2$, then applying recurrence \eqref{equ:10} with $r=m$, we
	obtain the following
\begin{align*}
  T(m,n)= O(n\log n)+O(m^2)\cdot T(1,n/m^2).
\end{align*}
For $T(1,n/m^2)$, the problem is to find the closes point to a single segment among
$O(n/m^2)$ points, which can be solved in $O(n/m^2)$ time by brute force. As
such, the above recurrence solves to $O(n\log n)$.
  \end{enumerate}
  Hence in the case where $m<n$, we can solve the problem in $O(n\log
  n+n^{2/3}m^{2/3})$ time.
\end{enumerate}

Combining the two cases, the corollary follows.
\end{proof}

\section{A simpler algorithm for the line case}
\label{sec:line}

In this section, we present a simpler solution for the line case, where all
segments of $S$ are lines. The algorithm still runs in $O(n^{4/3})$ time.

In the following, we will present two algorithms, one in the primal plane and the other in the dual plane.
We begin with the first algorithm for the primal plane, which can be viewed as a
simplified version of Bespamyatnikh's algorithm reviewed in Section~\ref{sec:review}.

\subsection{The first algorithm -- in the primal plane}

We again let $n=|P|$ and $m=|S|$.

For a parameter $r$ with $1\leq r\leq \min\{m,\sqrt{n}\}$, we compute a
hierarchical $(1/r)$-cutting $\Xi_0,\Xi_1,\ldots,\Xi_k$ for $S$.
For each cell $\sigma \in \Xi_i$, $0\leq i\leq k$, let $P(\sigma)=P\cap \sigma$; let $S(\sigma)$ denote the subset of the lines of $S$ intersecting $\sigma$. 
We further partition each cell of the last cutting $\Xi_k$ into triangles so
that each triangle contains at most $n/r^2$ points of $P$ and the number of new
triangles in $\Xi_k$ is still bounded by $O(r^2)$. For convenience, we consider
the new triangles as new cells of $\Xi_k$ (we still define $P(\sigma)$ and
$S(\sigma)$ for each new cell $\sigma$ in the same way as above; so we have
$|P(\sigma)|\leq n/r^2$ and $|S(\sigma)|\leq m/r$ for each cell $\sigma\in\Xi_k$).

For each cell $\sigma\in \Xi_k$, we form a subproblem $(S(\sigma),P(\sigma))$ of
size $(m/r,n/r^2)$, i.e., find for each line $\ell$ of $S(\sigma)$ its closest
point in $P(\sigma)$. After the subproblem is solved, to find the closest point
of $\ell$ in $P$, it suffices to find its closest point in $P\setminus
P(\sigma)$. To this end, observe that $P\setminus P(\sigma)$ is exactly the union of
$P(\sigma'')$ for all cells $\sigma''$ such that $\sigma''$ is a child of an
ancestor $\sigma'$ of $\sigma$ and $s\not\in S(\sigma'')$.
As such,
for each of such cells $\sigma''$, we find the closest point of $s$
in $P(\sigma'')$. For this, since $s\not\in S(\sigma'')$, $s$ is outside
$\sigma''$ and thus is outside the convex hull of $P(\sigma'')$.
Hence, it suffices to find the vertex of the convex hull of $P(\sigma')$
closest to $\ell$, which we refer to as an {\em outside-hull line query} and is a much
easier problem than before for the segment case; this
is part of the reason the algorithm is easier in the line case. For answering
the queries, we compute and store the convex hull of $P(\sigma)$ for all cells
$\sigma\in \Xi_i$ for all $i=0,1,\ldots,k$.

For the time analysis, let $T(m,n)$ denote the total time of the
algorithm. Then, solving all subproblems takes $O(r^2)\cdot
T(m/r,n/r^2)$ time.
Constructing the hierarchical cutting as well as computing $S(\sigma)$
for all cells $\sigma$ in all cuttings $\Xi_i$, $0\leq i\leq k$, takes
$O(mr)$ time~\cite{ref:ChazelleCu93}. Computing $P(\sigma)$ for
all cells $\sigma$ can be done in $O(n\log r)$ time.
Computing the convex hulls for $P(\sigma)$ for all cells $\sigma$ in the cutting
can be done in $O(n(\log n/r^2+\log r))$ time in a bottom-up manner.
Indeed, initially, we compute the convex hull for $P(\sigma)$ for every cell
$\sigma\in \Xi_k$ by sorting all points of $P(\sigma)$ first, which takes $O(|P(\sigma)|\log
(n/r^2))$ time since $|P(\sigma)|\leq n/r^2$. After processing all cells of
$\Xi_k$, for each cell $\sigma'$ of $\Xi_{k-1}$, to compute the convex hull of
of $P(\sigma')$, we can sort $P(\sigma')$ by
merging the sorted lists of $P(\sigma)$ for
all children $\sigma$ of $\sigma'$, which have already been computed.
As $\sigma'$ has $O(1)$ children, the
merge can be done in $O(|P(\sigma')|)$ time, and thus computing the convex hull
for $P(\sigma')$ takes only linear time. In this way, the total
preprocessing time for all cells in all cuttings is bounded by
$O(n(\log r+\log (n/r^2)))$ time, which is $O(n\log (n/r))$.
We consider the hierarchical cutting as a tree $T$ such that each node has
$O(1)$ children and each node maintains a convex hull.
We compute a fractional cascading
structure~\cite{ref:ChazelleFr86} on the convex hulls of all nodes of $T$
so that if a tangent to the convex hull at a node $v$
is known, then the tangents of the same slope to the convex hulls of
the children of $v$ can be found in constant time. Constructing the fractional
cascading structure takes time linear in the total size of all convex hulls in
the cutting, which is $O(n\log r)$.

Since $\sum_{i=0}^{k}\sum_{\sigma\in X_i}|S(\sigma)|=O(mr)$, the total number of
outside-hull line queries is $O(mr)$. The total
query time is $O(mr\log n)$, but can be reduced to $O(mr+m\log n)$ using
the fractional cascading structure. Indeed, $S$ has $n$ lines. For
each line $\ell\in S$, for each node $v\in T$ such that $\ell$ crosses the cell
$\sigma$ at $v$, we perform a query on the children $\sigma'$ of $\sigma$ if
$\ell$ does not cross $\sigma'$. Notice that the nodes of $T$ whose cells are
crossed by $\ell$ form a subtree $T_{\ell}$ that contains the root. As such, to
solve all queries for $\ell$, we can start a binary search on the convex hull at
the root using the slope of $\ell$, which takes $O(\log n)$ time, and then solve
each query on other nodes of $T_{\ell}$ in $O(1)$ time each by following the subtree
$T_{\ell}$ in a top-down manner. As each node of $T$ has $O(1)$ children,
answering all queries for $\ell$ takes $O(\log n+|T_{\ell}|)$ time. As such,
solving all queries for all lines $\ell\in S$ takes $O(m\log
n+\sum_{\ell}|T_{\ell}|)$. As $\sum_{\ell\in S}|T_{\ell}|=O(mr)$, the total time for
all queries is $O(m\log n+mr)$.

In summary, we obtain the following
recurrence for any $1\leq r\leq \min\{m,\sqrt{n}\}$
\begin{align}\label{equ:60}
  T(m,n)= O(n\log ({n}/{r}) + m\log n+ mr)+O(r^2)\cdot T(m/r,{n}/{r^2}).
\end{align}
Comparing to \eqref{equ:10} for the segment case, the factor $mr\log n$ is reduced to $m\log n+mr$.

\subsection{The second algorithm -- in the dual plane}

Without loss of generality, we assume that no line of $S$ is vertical.
Let $P^*$ denote the set of all lines dual to the points of $P$ and
$S^*$ the set of all points dual to the lines of $S$.

For each line $\ell\in S$, to find its closest point in $P$, it suffices to find
its closest point among all points of $P$ above $\ell$ and
its closest point among all points of $P$ below $\ell$. In the following, we
only compute for each line $\ell$ of $S$ its closest point among all points of $P$ above $\ell$, since the other case can be handled similarly.
In the dual plane, this is to find for each dual point $\ell^*\in S^*$, the first line of $P^*$
hit by the vertically downward ray $\rho(\ell^*)$ from $\ell^*$.

We compute a $(1/r)$-cutting $\Xi$ for $P^*$ for $1\leq r\leq \min\{n,\sqrt{m}\}$. This time instead of having each cell of $\Xi$ as a triangle, we make each cell of $\Xi$ a trapezoid that is bounded by two vertical edges, an upper edge, and a lower edge. This can be done by slightly changing Chezelle's algorithm~\cite{ref:ChazelleCu93}, i.e., instead of triangulating each cell of a line arrangement, we decompose it into trapezoids (i.e., draw a segment from each vertex of the cell until the cell boundary). Computing such a cutting $\Xi$ can  be done in $O(nr)$ time~\cite{ref:ChazelleCu93}. A property of the cutting produced by Chezelle's algorithm~\cite{ref:ChazelleCu93} is that the upper/lower edge of each trapezoid must lie on a line of $P^*$ unless it is unbounded. For each cell $\sigma$ of $\Xi$, let $P^*(\sigma)$ denote the lines of $P^*$ crossing $\sigma$ and let $S^*(\sigma)=S^*\cap \sigma$. Hence, $|P^*(\sigma)|\leq n/r$. We further cut each cell of $\Xi$ by adding vertical segments so that each new cell contains at most $m/r^2$ points of $S^*$. We still use $\Xi$ to refer to the new cutting. The number of cells of $\Xi$ is still $O(r^2)$.
Computing $P^*(\sigma)$ for all cells $\sigma$ and adding the cutting segments as above to obtain this new cutting $\Xi$ together can be done in $O(nr+m\log r)$ time.

For each cell $\sigma\in \Xi$, we form a subproblem $(S^*(\sigma),P^*(\sigma))$
of size $(m/r^2,n/r)$, i.e., find for each point $\ell^*\in S^*(\sigma)$, the
first line $p^*$ of $P^*(\sigma)$ hit by $\rho(\ell^*)$. A key observation is
that if $p^*$ exists, then it is the ray-shooting answer for $\ell^*$; otherwise,
since $\rho(\ell^*)$ will hit the lower edge of $\sigma$, which lies on a line
$p_1^*\in P^*$, $p_1^*$ is the ray-shooting answer. As such, it
suffices to only solve these subproblems $(S^*(\sigma),P^*(\sigma))$ for all
cells $\sigma\in \Xi$. We thus obtain the following recurrence for any
$1\leq r\leq \min\{n,\sqrt{m}\}$:
\begin{align}\label{equ:70}
  T(m,n)= O(nr + m\log r)+O(r^2)\cdot T({m}/{r^2},{n}/{r}).
\end{align}
Comparing to \eqref{equ:20} for the segment case, the factor $nr\log n$ is reduced to $nr$ and the factor $m\log r\log n$ is reduced to $m\log r$.

\subsection{Combining the two algorithms}

By setting $m=n$ and applying \eqref{equ:60} and \eqref{equ:70} in succession (using the same $r$), we obtain the following recurrence
\begin{align}\label{equ:80}
  T(n,n)= O(n\log n+nr\log r)+O(r^4)\cdot T({n}/{r^3},{n}/{r^3}).
\end{align}
Setting $r=n^{1/3}/\log n$ leads to
\begin{align*}
  T(n,n)= O(n^{4/3}) + O((n/\log^3 n)^{4/3})\cdot T(\log^3 n,\log^3 n).
\end{align*}

By applying the above recurrence three times we can derive the following:
\begin{align}\label{equ:90}
  T(n,n)= O(n^{4/3}) + O((n/b)^{4/3})\cdot T(b,b),
\end{align}
where $b=(\log\log\log n)^3$.

Next we show that after $O(2^{\poly(b)})$ time preprocessing, each $T(b,b)$ can
be solved in $O(b^{4/3})$ time.
For notational convenience, we still use $n$ to represent $b$. Hence, our goal
is to show that after $O(2^{\poly(n)})$ time preprocessing, $T(n,n)$ can
be solved in $O(n^{4/3})$ time. To this end, we show that $T(n,n)$ can
be solved using $O(n^{4/3})$ comparisons, or alternatively, $T(n,n)$ can be
solved by an algebraic decision tree of height $O(n^{4/3})$. The problem now becomes much easier than the segment case.

A close examination of recurrence~\eqref{equ:80} shows that it is the point
location in the above second algorithm that prevents us from obtaining an
$O(n^{4/3})$ time bound for $T(n,n)$; more precisely, each point location introduces an additional logarithmic factor.
To overcome the issue, we can again use the $\Gamma$-algorithm
framework of Chan and Zheng~\cite{ref:ChanHo22}. Indeed, point location is the
main issue Chan and Zheng intended to solve for Hopcroft's problem. For this,
Chan and Zheng proposed the basic search lemma. We can follow the similar idea as
theirs (see Lemma~4.2~\cite{ref:ChanHo22}).

We modify the second algorithm with the following change.  To find the cell of $\Xi$ containing
each point of $S^*$, we apply Chan and Zheng's basic search lemma on the $O(r^2)$ cells of
$\Xi$, which can be done using $O(1-r^2\Delta\Phi)$ comparisons (instead of
$O(\log r)$). Excluding the $O(-r^2\Delta\Phi)$ terms, we obtain a new
recurrence for any $1\leq r\leq \min\{n,\sqrt{m}\}$:
\begin{align}\label{equ:100}
  T(m,n)= O(nr + m)+O(r^2)\cdot T(m/r^2,n/r).
\end{align}

Using the same $r$, we stop the recursion until $n=\Theta(r)$, which is the
base case. In the base case we have $T(m,n)=O(m+r^2)$ (again excluding the term
$O(-r^2\Delta\Phi)$) by simply constructing the
vertical decomposition of the $\Theta(r)$ dual lines in $O(r^2)$ time and then apply the basic
search lemma to find the cell of the decomposition containing each point.
In this way, the recurrence~\eqref{equ:100} solves to $T(m,n)=(n^2+m)\cdot 2^{O(\log_rn)}$.
By setting $r=n^{\epsilon/2}$, we obtain the following bound
on the number of comparisons excluding the term
$O(-n^{\epsilon}\Delta\Phi)$.
\begin{align}\label{equ:110}
T(m,n)=O(n^2+m).
\end{align}

Now we apply recurrence \eqref{equ:60} with $m=n$ and $r=n^{1/3}$ and obtain the
following
\begin{align}\label{equ:120}
T(n,n)=O(n^{4/3}) + O(n^{2/3})\cdot T(n^{2/3},n^{1/3}).
\end{align}
Applying \eqref{equ:110} for $T(n^{2/3},n^{1/3})$ gives $T(n,n)=O(n^{4/3})$ with the excluded terms sum
to $O(n^{\epsilon}\cdot n\log n)$. As such, by setting $\epsilon$ to a small value (e.g., $\epsilon=1/4$), we conclude that $T(n,n)$ can be solved using
$O(n^{4/3})$ comparisons, or alternatively, we have an algebraic decision tree of height $O(n^{4/3})$ that can solve $T(n,n)$.

Now back to the recurrence~\eqref{equ:90}, i.e., our original problem, we apply the above decision tree algorithm on $T(b,b)$. If we build the decision tree beforehand, which can be done in $2^{\poly(b)}$ time, then we can bound the time for $T(b,b)$ by $O(b^{4/3})$. Note that we only build one decision tree and use it to solve all subproblems $T(b,b)$. As $b=O((\log\log\log n)^3)$, we have $2^{\poly(b)}=O(n)$. Hence, the total time of the algorithm is bounded by $O(n^{4/3})$.

\begin{theorem}\label{theo:line}
Given a set of $n$ points and a set of $n$ lines in the plane, we can find
for each line its closest point in $O(n^{4/3})$ time.
\end{theorem}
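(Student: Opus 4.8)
The plan is to build two mutually recursive algorithms for the asymmetric line problem (with $n$ points and $m$ lines), one in the primal plane and one in the dual plane, combine them, drive the recursion down to subproblems so tiny that $2^{\poly(\cdot)}$ preprocessing is affordable, and then solve those base subproblems with $O(\cdot^{4/3})$ comparisons via the $\Gamma$-algorithm framework of Chan and Zheng~\cite{ref:ChanHo22}.

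First I would give the primal algorithm: compute a hierarchical $(1/r)$-cutting for the lines of $S$, refine $\Xi_k$ so that each cell holds $\le n/r^2$ points and $\le m/r$ lines, recurse on each cell's subproblem $(S(\sigma),P(\sigma))$, and handle the points outside a cell $\sigma$ by answering, for every child $\sigma''$ of an ancestor of $\sigma$ that is not crossed by the line, an \emph{outside-hull line query}, i.e.\ the vertex of $\ch(P(\sigma''))$ closest to the line. Unlike the segment case, these queries are cheap: maintain $\ch(P(\sigma))$ at all cutting nodes (built bottom-up in $O(n\log(n/r))$ time by merging sorted lists of children), add a fractional-cascading layer, and answer all queries for a fixed line $\ell$ in $O(\log n+|T_\ell|)$ time by one slope-binary-search at the root propagated down the subtree $T_\ell$ of cells crossed by $\ell$. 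Summing $\sum_\ell|T_\ell|=O(mr)$ yields recurrence~\eqref{equ:60}. For the dual algorithm I would dualize to lines $P^*$ and points $S^*$, reducing the task to vertical ray shooting: for each $\ell^*\in S^*$, the first line of $P^*$ hit by the downward ray $\rho(\ell^*)$. Take a \emph{trapezoidal} $(1/r)$-cutting of $P^*$ (decompose each arrangement cell into vertical trapezoids instead of triangulating), refine so each trapezoid holds $\le m/r^2$ points of $S^*$, and use the key fact that either $\rho(\ell^*)$ hits a line crossing the trapezoid (solve recursively) or it exits through the lower edge, which lies on a line of $P^*$ that is then the answer. This yields recurrence~\eqref{equ:70}.

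Next I would combine the two: setting $m=n$ and applying \eqref{equ:60} and \eqref{equ:70} in succession with a common $r$ gives \eqref{equ:80}, $T(n,n)=O(n\log n+nr\log r)+O(r^4)\cdot T(n/r^3,n/r^3)$. Choosing $r=n^{1/3}/\log n$ gives $T(n,n)=O(n^{4/3})+O((n/\log^3 n)^{4/3})\cdot T(\log^3 n,\log^3 n)$, and iterating three times produces \eqref{equ:90} with base subproblems $T(b,b)$ for $b=(\log\log\log n)^3$. It then remains to solve each $T(b,b)$ in $O(b^{4/3})$ time after only $O(n)$ total preprocessing; equivalently, renaming $b$ back to $n$, to show $T(n,n)$ can be solved with $O(n^{4/3})$ comparisons and a precomputed decision tree of $2^{\poly(n)}$ construction time, which is $O(n)$ since $2^{\poly(b)}=O(n)$.

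The main obstacle is that recurrence~\eqref{equ:80} does not close to $O(n^{4/3})$ on its own: the culprit, on inspection, is the $O(\log r)$ cost of locating each point of $S^*$ in the trapezoidal cutting in the dual algorithm. To eliminate it I would apply Chan and Zheng's basic search lemma to locate each of the $m$ points among the $O(r^2)$ cells using $O(1-r^2\Delta\Phi)$ comparisons instead of $O(\log r)$; discarding the $-\Delta\Phi$ terms, whose sum over the whole algorithm is $O(n\log n)$, turns \eqref{equ:70} into the clean recurrence~\eqref{equ:100}, $T(m,n)=O(nr+m)+O(r^2)\cdot T(m/r^2,n/r)$. Recursing with fixed $r$ down to $n=\Theta(r)$, whose base case is handled by building the vertical decomposition of $\Theta(r)$ lines in $O(r^2)$ time and again invoking the basic search lemma, gives $T(m,n)=(n^2+m)\cdot 2^{O(\log_r n)}$; taking $r=n^{\epsilon/2}$ yields \eqref{equ:110}, $T(m,n)=O(n^2+m)$ comparisons. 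Finally, applying the primal recurrence~\eqref{equ:60} once with $m=n$, $r=n^{1/3}$ gives \eqref{equ:120}, $T(n,n)=O(n^{4/3})+O(n^{2/3})\cdot T(n^{2/3},n^{1/3})$, and plugging \eqref{equ:110} in for $T(n^{2/3},n^{1/3})$ closes this to $O(n^{4/3})$ comparisons (the excluded potential terms summing to $O(n^{\epsilon}\cdot n\log n)$, absorbed by a small $\epsilon$). Precomputing the corresponding decision tree in $2^{\poly(b)}=O(n)$ time and running it on every $T(b,b)$ in \eqref{equ:90} then gives the claimed $O(n^{4/3})$ overall bound. I expect the delicate part to be exactly this removal of the point-location overhead via the basic search lemma while bookkeeping that the potential changes sum to only $O(n\log n)$; the two-algorithm combination and the trapezoidal-cutting observation are comparatively routine.
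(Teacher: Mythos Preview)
Your proposal is correct and follows essentially the same approach as the paper: the two primal/dual algorithms yielding recurrences~\eqref{equ:60} and~\eqref{equ:70}, their combination and three-fold iteration to reach~\eqref{equ:90} with $b=(\log\log\log n)^3$, and then the removal of the point-location $\log r$ overhead in the dual algorithm via the basic search lemma to obtain~\eqref{equ:100}, \eqref{equ:110}, and finally~\eqref{equ:120}. The parameter choices, the trapezoidal-cutting observation for the dual ray shooting, the fractional-cascading treatment of the outside-hull line queries, and the $2^{\poly(b)}=O(n)$ decision-tree precomputation all match the paper's argument.
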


The following solves the asymmetric case of the problem.
\begin{corollary}
Given a set of $n$ points and a set of $m$ lines in the plane, we can find for each line its closest point in $O(n^{2/3}m^{2/3}+(n+m)\log n)$ time.
\end{corollary}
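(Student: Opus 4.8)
The plan is to reuse the case analysis from the proof of the asymmetric segment-closest-point corollary, but with the cheaper line-case recurrences \eqref{equ:60} and \eqref{equ:70} replacing \eqref{equ:10} and \eqref{equ:20}, and with Theorem~\ref{theo:line} supplying the $O(N^{4/3})$ bound whenever a subproblem becomes balanced (both dimensions equal to some $N$). I would split into the two cases $m\ge n$ and $m<n$, and in each case further split according to whether the larger of $m,n$ exceeds the square of the smaller.

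When $m\ge n$: if $m<n^2$, I would apply \eqref{equ:70} with $r=m/n$; one checks that $1\le r\le\min\{n,\sqrt m\}$ and that the recursion produces subproblems of the single balanced size $T(n^2/m,n^2/m)$, which Theorem~\ref{theo:line} solves in $O((n^2/m)^{4/3})$ time, so the $O(r^2)$ copies contribute $O(n^{2/3}m^{2/3})$, while the additive $O(nr+m\log r)=O(m+m\log(m/n))$ is $O(m\log n)$ since $m<n^2$. If $m\ge n^2$, I would apply \eqref{equ:70} with $r=n$, reducing to $O(n^2)$ trivial subproblems $T(m/n^2,1)$ (closest point to a line among a single point), each solvable in $O(m/n^2)$ time, for a total of $O(m\log n)$. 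Thus $T(m,n)=O(m\log n+n^{2/3}m^{2/3})$ in this case.

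When $m<n$: if $n<m^2$, I would apply \eqref{equ:60} with $r=n/m$, verifying $1\le r\le\min\{m,\sqrt n\}$ and that one step of recursion yields the balanced size $T(m^2/n,m^2/n)$; Theorem~\ref{theo:line} then makes the $O(r^2)$ copies contribute $O(n^{2/3}m^{2/3})$, while the additive $O(n\log(n/r)+m\log n+mr)$ is $O(n\log n)$. If $n\ge m^2$, I would apply \eqref{equ:60} with $r=m$, reducing to $O(m^2)$ trivial subproblems $T(1,n/m^2)$ solvable by brute force in $O(n/m^2)$ time each, for a total of $O(n\log n)$. Thus $T(m,n)=O(n\log n+n^{2/3}m^{2/3})$ in this case.

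Combining the two cases, and using that $(n+m)\log n=\Theta(\max\{m,n\}\log n)$, I obtain $T(m,n)=O(n^{2/3}m^{2/3}+(n+m)\log n)$. The only nontrivial part is the bookkeeping: confirming in each of the four subcases that the chosen $r$ lies in the admissible range of the recurrence being invoked, that one level of recursion makes the two subproblem dimensions equal so that Theorem~\ref{theo:line} applies, and that the resulting $O(r^2)\cdot O(N^{4/3})$ term collapses to exactly $O(n^{2/3}m^{2/3})$. I expect this bookkeeping to be the only ``obstacle,'' and it is purely routine algebra.
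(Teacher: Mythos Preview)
Your proposal is correct and follows essentially the same case analysis and recurrence choices as the paper. The only difference is in the subcase $m\ge n^2$: you apply recurrence~\eqref{equ:70} with $r=n$ (mirroring the segment-case corollary), whereas the paper instead builds the vertical decomposition of the arrangement of the $n$ dual lines directly and does point location for each of the $m$ dual points; both routes yield $O(m\log n)$ time in this subcase.
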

\begin{proof}
Depending on whether $m\geq n$, there are two cases.
\begin{enumerate}
  \item If $m\geq n$, depending on whether $m< n^2$, there are two subcases.

  \begin{enumerate}
    \item If $m<n^2$, then let $r=m/n$, and thus $m/r^2 = n/r$. Applying \eqref{equ:70} and solving $T(m/r^2,n/r)$ by Theorem~\ref{theo:line} give $T(m,n)=O(m\log n+n^{2/3}m^{2/3})$.

    \item If $m\geq n^2$, then we solve the problem in the dual plane. We first construct the vertical decomposition $D$ of the arrangement of the dual lines of the points of $S$ in $O(n^2)$ time and then build a point location data structure on the decomposition in $O(n^2)$ time~\cite{ref:KirkpatrickOp83,ref:EdelsbrunnerOp86}. Next, for each dual point of each line of $S$, we find the cell of $D$ that contains the point in $O(\log n)$ time using the point location data structure. This takes $O(n^2+m\log n)$ time in total, which is $O(m\log n)$ as $m\geq n^2$.
  \end{enumerate}
   Hence in the case where $m\geq n$, we can solve the problem in $O(m\log n+n^{2/3}m^{2/3})$ time.

  \item If $m< n$, depending on whether $n< m^2$, there are two subcases.
  \begin{enumerate}
	\item If $n<m^2$, then let $r=n/m$, and thus $m/r=n/r^2$. Applying
	\eqref{equ:60} and solving $T(m/r,n/r^2)$ by Theorem~\ref{theo:line} give $T(m,n)=O(n\log m+n^{2/3}m^{2/3})$.

	\item If $n\geq m^2$, then applying recurrence \eqref{equ:60} with $r=m$, we
	obtain the following
\begin{align*}
  T(m,n)= O(n\log n)+O(m^2)\cdot T(1,n/m^2).
\end{align*}
For $T(1,n/m^2)$, the problem is to find the closest point to a single line among
$O(n/m^2)$ points, which can be solved in $O(n/m^2)$ time by brute force.
Hence, the above recurrence solves to $O(n\log n)$.
  \end{enumerate}
  Hence in the case where $m<n$, we can solve the problem in $O(n\log
  n+n^{2/3}m^{2/3})$ time.
\end{enumerate}

Combining the two cases, the corollary follows.
\end{proof}

\section{The line query problem}
\label{sec:linequery}

In this section, we discuss the query problem for the line case. Let $P$ be a set of $n$ points in the plane. We wish to build a data structure so that the point of $P$ closest to a query line $\ell$ can be computed efficiently. The main idea is to adapt the simplex range searching data structures~\cite{ref:ChanOp12,ref:MatousekEf92,ref:MatousekRa93} (which works in any fixed dimensional space; but for our purpose it suffices to only consider half-plane range counting queries in the plane).

The rest of this section is organized as follows. After giving an overview of our approach, we present a randomized result based on Chan's partition tree~\cite{ref:ChanOp12} in Section~\ref{sec:chan}. In the subsequent two subsections we present two deterministic results, one based on Matou\v{s}ek's partition
tree~\cite{ref:MatousekEf92} and the other based on Matou\v{s}ek's hierarchical cuttings~\cite{ref:MatousekRa93}. Finally in Section~\ref{sec:tradeoff} we derive trade-offs between preprocessing and query time.

\subparagraph{Overview.}
Each of these half-plane range counting query data structures~\cite{ref:ChanOp12,ref:MatousekEf92,ref:MatousekRa93} defines canonical subsets of $P$ and usually only maintains the cardinalities of them. To solve our problem, roughly speaking, the change is that we compute and maintain the convex hulls of these canonical subsets, which increases the space by a factor proportional to the height of the underlying trees (which is $O(\log n)$ for the data structures in~\cite{ref:ChanOp12,ref:MatousekRa93} and is $O(\log\log n)$ for the one in~\cite{ref:MatousekEf92}). To answer a query, we follow the similar algorithms as half-plane range counting queries on these data structures. The difference is that for certain canonical subsets, we do binary search on their convex hulls to find their closest vertices to the query line, which does not intersect these convex hulls (in the half-plane range counting query algorithms only the cardinalities of these canonical subsets are added to a total count). This increases the query time by a logarithmic factor comparing to the original half-plane range counting query algorithms. We manage to reduce the additional logarithmic factor using fractional cascading~\cite{ref:ChazelleFr86} on the data structures of~\cite{ref:ChanOp12,ref:MatousekRa93} because each node in the underlying trees of these data structures has $O(1)$ children.
Some extra efforts are also needed to achieve the claimed performance.
Finally, the trade-off is obtained by combining these results with cuttings in the dual space.

\subsection{A randomized result based on Chan's partition tree~\cite{ref:ChanOp12}}
\label{sec:chan}

We first review Chan's partition tree~\cite{ref:ChanOp12}.
Chan's partition tree $T$ for the point set $P$ is a tree structure by recursively subdividing the plane into triangles. Each node $v$ of $T$ is associated with a triangle $\triangle(v)$, which is the entire plane if $v$ is the root. If $v$ is an internal node, it has $O(1)$ children, whose associated triangles form a disjoint partition of $\triangle(v)$. Let $P(v)=P\cap \triangle(v)$, i.e., the subset of points of $P$ in $\triangle(v)$. For each internal node $v$, the cardinality $|P(v)|$ is stored at $v$. If $v$ is a leaf, then $|P(v)|=O(1)$ and $P(v)$ is explicitly stored at $v$. The height of $T$ is $O(\log n)$ and the space of $T$ is $O(n)$. Let $\alpha(T)$ denote the maximum number of triangles $\triangle(v)$ among all nodes $v$ of $T$  crossed by any line in the plane. Given $P$, Chan's randomized algorithm can compute $T$ in $O(n\log n)$ expected time such that $\alpha(T)=O(\sqrt{n})$ holds with high probability.

To solve our problem, we modify the tree $T$ as follows. For each node $v$, we compute the convex hull $\ch(v)$ of $P(v)$ and store $\ch(v)$ at $v$. This increases the space to $O(n\log n)$, but the preprocessing time is still bounded by $O(n\log n)$.

Given a query line $\ell$, our goal is to compute the point of $P$ closest to $\ell$. We only discuss how to find the closest point of $\ell$ among all points of $P$ below $\ell$ since the other case is similar. Starting from the root of $T$, consider a node $v$. We assume that $\ell$ crosses $\triangle(v)$, which is true initially when $v$ is the root. For each child $u$ of $v$, we do the following. If $\ell$ crosses $\triangle(u)$, then we proceed on $u$ recursively. Otherwise, if $\triangle(u)$ is below $\ell$, we do binary search on the convex hull $\ch(u)$ to find in $O(\log n)$ time the closest point to $\ell$ among the vertices of $\ch(u)$ and keep the point as a candidate. Since each internal node of $T$ has $O(1)$ children, the algorithm eventually finds $O(\alpha(T))$ candidate points and among them we finally return the one closest to $\ell$ as our solution. The total time of the algorithm is $O(\alpha(T)\cdot \log n)$.

To further reduce the query time, we observe that all nodes $v$ whose triangles
$\triangle(v)$ are crossed by $\ell$ form a subtree $T_{\ell}$ of $T$ containing the root.
This is because if the triangle $\triangle(v)$ of a node $v$ is crossed by
$\ell$, then the triangle $\triangle(u)$ is also crossed by $\ell$
for any ancestor $u$ of $v$. In light of the observation, we can 	
further reduce the query algorithm time to $O(\alpha(T)+\log n)$ by
constructing a fractional cascading
structure~\cite{ref:ChazelleFr86} on the convex hulls of all nodes of $T$
so that if a tangent to the convex hull at a node $v$
is known, then the tangents of the same slope to the convex hulls of
the children of $v$ can be found in constant time. The total time for
constructing the fractional cascading structure is linear in the total size of
all convex hulls, which is $O(n\log n)$. With the fractional cascading structure, we only
need to perform binary search on the convex hull at the root and then spend only $O(1)$ time on each node of $T_{\ell}$ and each of their children. As such, the query time becomes $O(\alpha(T)+ \log
n)$, which is bounded by $O(\sqrt{n})$ with high probability.


The following lemma summarizes the result.
\begin{lemma}\label{lem:linechan}
Given a set $P$ of $n$ points in the plane, we can build a data structure of $O(n\log n)$ space in $O(n\log n)$ expected time such that for any query line its closest point in $P$ can be computed in $O(\sqrt{n})$ time with high probability.
\end{lemma}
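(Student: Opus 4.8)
The plan is to build directly on Chan's partition tree $T$ reviewed above, augmenting each node with the convex hull of its point subset. First I would invoke Chan's randomized construction~\cite{ref:ChanOp12} to obtain $T$ --- of height $O(\log n)$, with $O(1)$ children per internal node and crossing number $\alpha(T)=O(\sqrt n)$ with high probability --- in $O(n\log n)$ expected time. Then at every node $v$ I would store $\ch(v)=\ch(P(v))$. Since the sets $P(v)$ over the nodes $v$ of a single level of $T$ partition $P$, and $T$ has $O(\log n)$ levels, the total size of all these hulls, hence the space, is $O(n\log n)$; the hulls can all be built within $O(n\log n)$ time by sorting $P$ once, pushing the sorted sublists down the tree (splitting a node's sublist among its $O(1)$ children costs linear time via point location in the child triangles), and computing each $\ch(v)$ from the already-sorted $P(v)$ in linear time. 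This keeps the preprocessing at $O(n\log n)$ expected time and the space at $O(n\log n)$.

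For a query line $\ell$ I would find the closest point of $P$ below $\ell$ and, symmetrically, the closest point above $\ell$, returning the better of the two. To handle the ``below'' case, note that the nodes $v$ with $\triangle(v)$ crossed by $\ell$ form a connected subtree $T_\ell$ of $T$ containing the root (an ancestor of a crossed triangle is crossed). I would descend $T_\ell$; whenever I reach a child $u$ of a $T_\ell$-node with $\triangle(u)$ not crossed by $\ell$ and lying below $\ell$, I compute the vertex of $\ch(u)$ closest to $\ell$ and record it as a candidate, and at each leaf of $T_\ell$ I simply scan its $O(1)$ stored points. Correctness follows because every point of $P$ below $\ell$ lies in exactly one such $\triangle(u)$ (or is scanned at a leaf), and since $\ell$ misses the convex region $\ch(u)$ the distance-to-$\ell$ function is affine on $\ch(u)$ and hence minimized at a vertex; so the true closest point below $\ell$ is among the $O(\alpha(T))$ candidates. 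Finding the closest vertex of $\ch(u)$ to $\ell$ is a tangent query with the slope of $\ell$, i.e.\ a binary search over the sorted edge slopes of $\ch(u)$, which by itself gives query time $O(\alpha(T)\log n)$.

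To remove the extra logarithmic factor I would layer a fractional-cascading structure~\cite{ref:ChazelleFr86} over the convex hulls along $T$: because each node has only $O(1)$ children, the catalogs --- the sorted edge-slope lists of the $\ch(v)$ --- form a bounded-degree tree, so a single $O(\log n)$ binary search at the root pins down the tangent there, and thereafter the tangent of the same slope at each child can be read off in $O(1)$ time as I walk $T_\ell$ and its fringe. This brings the query time to $O(\alpha(T)+\log n)=O(\sqrt n)$ with high probability, while the fractional-cascading structure adds only $O(1)$ per catalog element, i.e.\ $O(n\log n)$ additional space and preprocessing time; the same structure serves the ``above'' case, since a tangent query from either side searches the same edge-slope list. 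The main obstacle I anticipate is the routine-but-fiddly bookkeeping to make fractional cascading propagate a slope search correctly through hulls whose sizes vary greatly with depth --- but this is precisely the bounded-degree catalog-graph setting for which fractional cascading is designed, so it should go through cleanly, and the lemma follows.
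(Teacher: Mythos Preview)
Your proposal is correct and follows essentially the same approach as the paper: augment Chan's partition tree with the convex hull $\ch(v)$ at each node, answer a query by walking the crossed subtree $T_\ell$ and doing tangent queries on the hulls at its fringe, and eliminate the extra $\log n$ factor via fractional cascading over the bounded-degree tree of hulls. Your justification for the $O(n\log n)$ hull-construction time (sort once, split sorted lists down the tree) is slightly more explicit than the paper's, but otherwise the argument matches.
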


\subsection{A deterministic result based on Matou\v{s}ek partition
tree~\cite{ref:MatousekEf92}}
\label{sec:linem1}

We now present a deterministic result based on Matou\v{s}ek partition
tree~\cite{ref:MatousekEf92}. We first briefly review the partition tree in the plane for half-plane range counting queries.

A {\em simplicial partion} of size $m$ for the point set $P$ is a collection
$\Pi=\{(P_1,\triangle_1),\ldots,(P_m,\triangle_m)\}$ with the
following properties: (1) The subsets $P_i$'s form a disjoint
partition of $P$; (2) each cell $\triangle_i$ is an open triangle containing $P_i$; (3) $\max_{1\leq i\leq m}|P_i|\leq 2\cdot \min_{1\leq i\leq m}|P_i|$; (4) the cells may overlap and any cell $\triangle_i$ may contain points in $P\setminus P_i$. We define the {\em crossing number} of $\Pi$ as the largest number of cells crossed by any line in the plane.

\begin{lemma}\label{lem:simpartition}{\em (Matou\v{s}ek~\cite{ref:MatousekEf92})}
For any $s$ with $2\leq s\leq |P|$, there exists a simplicial partition $\Pi$ for $P$, whose subsets $P_i$'s satisfy $s\leq |P_i|<2s$, and whose crossing number is $O(\sqrt{r})$, where $r=|P|/s$.
\end{lemma}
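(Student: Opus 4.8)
The plan is to build the simplicial partition greedily, extracting one pair $(P_i,\triangle_i)$ at a time. Maintain the set $P^{(i)}$ of points not yet assigned to any group, with $P^{(1)}=P$ and $n_i:=|P^{(i)}|$; once $n_i<2s$, place all remaining points into one last group and stop. In a generic round $i$ with $n_i\ge 2s$, the goal is to find an open triangle $\triangle_i$ with $|\triangle_i\cap P^{(i)}|\ge s$ that is crossed by few lines; we then let $P_i$ be any $s$ of those points and shrink $\triangle_i$ to a sub-triangle containing exactly $P_i$ (which can only decrease its crossing number). Properties (1), (2), and (4) are then immediate, and the size bound $s\le|P_i|<2s$ follows because the only group that could be smaller than $s$ or as large as $2s$ is absorbed into a neighbouring group in the final round.

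The first tool is cuttings, as defined in Section~\ref{sec:pre}: if $H$ is a set of lines and we take a $(1/\rho_i)$-cutting $\Xi_i$ of $H$ with $\rho_i:=\lceil c\sqrt{n_i/s}\,\rceil$ for a suitably small constant $c$, then $\Xi_i$ has $O(\rho_i^2)\le n_i/s$ cells, so by the pigeonhole principle some cell of $\Xi_i$ contains at least $s$ points of $P^{(i)}$, while every cell of $\Xi_i$ is crossed by at most $|H|/\rho_i$ lines of $H$. The second tool is a \emph{test set}: rather than controlling the crossing number against all (infinitely many) lines directly, fix in advance a single set $H$ of $\mathrm{poly}(r)$ lines, obtained from a random sample of $P$ via a standard $\varepsilon$-approximation/VC-dimension argument, chosen so that any convex region crossed by few lines of $H$ is crossed by only a constant factor more lines of the whole plane, once the partition has $O(r)$ cells (the point being that a line's crossing pattern with respect to $m$ triangles takes only $O(m^2)$ combinatorially distinct values). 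All the cuttings $\Xi_i$ above are taken with respect to this one fixed $H$.

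The crux — and the step I expect to be the main obstacle — is the round-by-round bookkeeping that keeps \emph{every} line of $H$, not merely the average one, within a crossing budget of $O(\sqrt r)$. For each line $\ell\in H$ maintain a counter equal to the number of triangles among $\triangle_1,\dots,\triangle_{i-1}$ crossed by $\ell$, and call a cell of $\Xi_i$ \emph{safe} if no line crossing it has already reached its allotted ceiling. One has to show, by a counting argument that plays $|H|/\rho_i$ (lines per cell) against $O(\rho_i)$ (cells per line in one cutting) against the total budget $|H|\cdot O(\sqrt r)$ accumulated over the $O(r)$ rounds — using $n_i\approx s(r-i)$ so that $\sum_i |H|/\rho_i=O\!\big(|H|\sum_i (r-i)^{-1/2}\big)=O(|H|\sqrt r)$ — that in every round there still exists a safe cell containing at least $s$ remaining points. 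Picking that cell as $\triangle_i$ then never violates any budget. Finally one checks that $n_i$ drops by at least $s$ per round, so there are $O(r)$ rounds and $O(r)$ triangles, concludes that the crossing number of the partition against $H$ is $O(\sqrt r)$, and transfers this bound to all lines via the test-set property. The delicate heart of the argument is exactly this simultaneous balancing: choosing the $\rho_i$ across rounds so the accumulated crossing count stays $O(|H|\sqrt r)$ while still guaranteeing a heavy \emph{safe} cell in each round.
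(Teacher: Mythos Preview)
The paper does not prove this lemma; it is stated as a black-box citation of Matou\v{s}ek's result~\cite{ref:MatousekEf92}, with no proof sketch given. So there is no ``paper's own proof'' to compare against.

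That said, your sketch is essentially Matou\v{s}ek's original argument, and the overall architecture is right: greedy extraction, a $(1/\rho_i)$-cutting of a fixed test set $H$ in each round with $\rho_i\approx\sqrt{n_i/s}$, pigeonhole to locate a heavy cell, and the telescoping sum $\sum_i |H|/\rho_i = O(|H|\sqrt r)$ to bound the total crossing load. One point is stated a bit loosely: the test set is not built from a random sample via VC-dimension, and its defining property is not that ``a convex region crossed by few lines of $H$ is crossed by few lines overall.'' Rather, $H$ is constructed deterministically by dualizing the points of $P$, taking a $(1/r)$-cutting of the dual lines, and letting $H$ be the primal lines dual to the cutting's $O(r)$ vertices; the property one gets is that for every line $\ell$ there is some $h\in H$ with at most $n/r$ points of $P$ between $\ell$ and $h$, so any line crosses at most $O(1)$ more triangles of an $O(r)$-cell partition than its test-set proxy does. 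With that correction, the bookkeeping you outline (budgets per line, existence of a safe heavy cell each round) goes through as in Matou\v{s}ek's paper.
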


\begin{lemma}\label{lem:simpartitionalgo}{\em (Matou\v{s}ek~\cite{ref:MatousekEf92})}
For any fixed $\delta>0$ and $s\geq n^{\delta}$, a simplicical partition whose subsets $|P_i|$ satisfy $s\leq |P_i|<2s$ and whose crossing number is $O(\sqrt{r})$ can be constructed in $O(n\log r)$ time, where $r=|P|/s$.
\end{lemma}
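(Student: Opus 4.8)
The plan is to make the existence proof behind Lemma~\ref{lem:simpartition} constructive, paying only an $O(\log r)$ overhead in time while preserving the $O(\sqrt r)$ crossing number. I would build the partition incrementally, carving out one cell $(P_i,\triangle_i)$ at a time. The algorithm maintains the set $X$ of not-yet-covered points (initially $X=P$, and empty after $\Theta(r)$ rounds since each round removes between $s$ and $2s$ points), together with a weighted test set $H$ of lines (all weights initially $1$), rebuilt at the appropriate granularity as the construction proceeds. The point of $H$ is that it is small enough to manipulate within the time budget, yet, by $\varepsilon$-approximation properties, bounded crossing against $H$ certifies bounded crossing against \emph{all} lines; the hypothesis $s\ge n^{\delta}$ (hence $r=n/s\le n^{1-\delta}$ and $\log r=O(\log n)$) is what allows such a test set to be built and maintained cheaply.

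The core geometric primitive is: given the current $X$ with $|X|=n'$ and the current weighted $H$, find a triangle $\triangle$ with $|X\cap\triangle|\in[s,2s)$ crossed by test lines of total weight $O(w(H)\sqrt{s/n'})$. I would obtain it from a weighted $(1/t)$-cutting of $H$ with $t\asymp\sqrt{n'/s}$: it has $O(t^2)=O(n'/s)$ cells, so some cell $\sigma$ contains at least $s$ points of $X$ while being crossed by weight at most $w(H)/t=O(w(H)\sqrt{s/n'})$; if $\sigma$ holds more than $2s$ points of $X$, split it by vertical cuts (which only delete crossings, never create them) down into the range $[s,2s)$ and shrink to a triangle. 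Then set $\triangle_i:=\triangle$, $P_i:=X\cap\triangle$, double the weight of every line of $H$ crossing $\triangle_i$, and remove $P_i$ from $X$. A standard multiplicative-weights calculation closes the crossing-number analysis: $w(H)$ grows by a factor $1+O(\sqrt{s/n_i})$ in round $i$, and since the residual sizes $n_i$ run through roughly $s,2s,3s,\dots$ in reverse, $\sum_i\sqrt{s/n_i}=O(\sqrt r)$, so the final total weight is $|H|\cdot2^{O(\sqrt r)}$; a test line crossing $c(\ell)$ cells has final weight $2^{c(\ell)}$, whence $c(\ell)=O(\sqrt r)+\log|H|=O(\sqrt r)$, and by the choice of $H$ this bounds the crossing number against all lines.

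For the time bound, the naive version performs $\Theta(r)$ rounds, and even an $O(n')$-time round $i$ sums to $\Theta(nr)$, which is far too slow. The remedy is a hierarchical construction: rather than producing all $\Theta(r)$ cells at once, first produce a coarse simplicial partition into $t_0$ cells of $\sim n/t_0$ points each (by running the primitive $\Theta(t_0)$ times), then recurse inside each cell; choosing the branching factor $t_0$ suitably gives recursion depth $O(\log r)$ with $O(n)$ total work per level (again $s\ge n^{\delta}$ is what keeps each level's cutting/test-set work within $O(n)$), for a total of $O(n\log r)$. The crossing number composes multiplicatively over the $O(\log r)$ levels, and one verifies it still telescopes to $O(\sqrt r)$ — this is the delicate point, handled either by making $t_0$ large enough to kill the per-level constants or by running a single global weighting argument across all levels instead of a level-by-level one.

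I expect that last point to be the main obstacle: simultaneously achieving a clean $O(\sqrt r)$ crossing number (no stray logarithmic factors) \emph{and} $O(n\log r)$ time. A large test set certifies the crossing bound robustly but is expensive to build and re-weight; a small one is cheap but must be shown, via $\varepsilon$-net/$\varepsilon$-approximation arguments, to be an adequate proxy for all lines; and the hierarchical speedup replaces a single clean estimate by a product of per-level crossing numbers whose constant must be tamed. Reconciling these three pressures, rather than any individual inequality, is the crux of the proof.
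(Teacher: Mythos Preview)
The paper does not prove this lemma at all: it is stated with the attribution ``(Matou\v{s}ek~\cite{ref:MatousekEf92})'' and used as a black box, with no proof or sketch provided. Your proposal is essentially a faithful outline of Matou\v{s}ek's original argument --- the iterative reweighting of a test set of lines, extraction of a low-weight cell from a cutting, the multiplicative-weights bound $\sum_i\sqrt{s/n_i}=O(\sqrt r)$, and the hierarchical recursion to bring the running time down to $O(n\log r)$ --- so there is nothing to compare against in this paper itself. Your identification of the delicate point (keeping the crossing number at $O(\sqrt r)$ with no logarithmic loss while staying within $O(n\log r)$ time) is accurate; Matou\v{s}ek handles it by choosing the branching parameter large enough that the per-level constants are absorbed, exactly along the lines you suggest.
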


Matou\v{s}ek's algorithm~\cite{ref:MatousekEf92}
builds a half-plane range counting data structure in $O(n)$ space and $O(n\log
n)$ time as follows. The data structure is a {\em partition tree} $T$, which is
built by applying Lemma~\ref{lem:simpartitionalgo} recursively to partition $P$
into subsets of constant sizes, which form the leaves of $T$. Each internal node
$v$ of $T$ corresponds to a subset $P(v)$ of $P$ as well as a simplicial
partition $\Pi(v)$ of $P(v)$, which form the children of $v$. At each child $u$
of $v$, the cell $\triangle(u)$ of $\Pi(v)$ containing $P(u)$ and the
cardinality $|P(u)|$ are stored at $u$. In particular, $\triangle(u)$ is the entire plane if
$u$ is the root. The simplicial partition $\Pi(v)$ is computed by
Lemma~\ref{lem:simpartitionalgo} with $s=\sqrt{|P(v)|}$. As such, the height of
$T$ is $O(\log \log n)$. The time for constructing $T$ is $O(n\log n)$ because the size of $|P(v)|$ is geometrically decreasing in a top-down manner.

Given a query half-plane $h$ bounded by a line $\ell$, the algorithm for
computing the number of points of $P$ in $h$ works as follows. Starting from the
root of $T$, consider a node $v$. We assume that $\ell$ crosses $\triangle(v)$, which is
true initially when $v$ is the root. We check every child $u$ of $v$. If
$\triangle(u)$ is crossed by $\ell$, we proceed on $u$. Otherwise, if
$\triangle(u)$ is inside $h$, we add $|P(u)|$ to a total count. It can be
shown that the query time is bounded by $O(\sqrt{n}\log^{O(1)}
n)$~\cite{ref:MatousekEf92}; alternatively, the number of nodes visited
by the algorithm is bounded by $O(\sqrt{n}\log^{O(1)}
n)$~\cite{ref:MatousekEf92}.

We now modify the data structure for our problem. For each node $u$ of $T$, we
compute and store the convex hull $\ch(u)$ of $P(u)$ at each node $u$. As the
height of $T$ is $O(\log\log n)$, the total space increases to $O(n\log\log n)$.
If we pre-sort all points of $P$, for each node $u$, we can construct the convex
hull in $O(|P(u)|)$ time and thus the total preprocessing is still bounded by
$O(n\log n)$.

Given a query line $\ell$, our goal is to compute the point of $P$ closest to
$\ell$. We only discuss how to find the closest point of $\ell$ among all points
of $P$ below $\ell$ since the other case is similar. Starting from the root of
$T$, consider a node $v$. We assume that $\ell$ crosses $\triangle(v)$, which is
true initially when $v$ is the root. For each child $u$ of $v$, we do the
following. If $\ell$ crosses $\triangle(u)$, we proceed on $u$ recursively.
Otherwise, if $\triangle(u)$ is below $\ell$, we do binary search on the
convex hull $\ch(v)$
to find in $O(\log n)$ time the closest point to $\ell$ among the vertices of
$\ch(v)$ and keep the point as a candidate. Finally, among all candidate
points we return the one closest to $\ell$. Notice that the number of nodes
visited by the query algorithm is the same as that in the half-plane range
counting query algorithm. Thus, the total time of the algorithm is still
$O(\sqrt{n}\log^{O(1)} n)$ (our algorithm spend $O(\log n)$ additional time on each
visited node).


The following lemma summarizes the result.
\begin{lemma}\label{lem:linem1}
Given a set $P$ of $n$ points in the plane, we can build a data structure of
$O(n\log\log n)$ space in $O(n\log n)$ time such that for any query
line its closest point in $P$ can be computed in
$O(\sqrt{n}\log^{O(1)}n)$ time.
\end{lemma}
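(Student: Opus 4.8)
The plan is to assemble the data structure and its analysis from the pieces already laid out in the text preceding the statement. First I would build Matou\v{s}ek's partition tree $T$ on $P$ by invoking Lemma~\ref{lem:simpartitionalgo} recursively, using $s=\sqrt{|P(v)|}$ at every node $v$; as noted, the sizes $|P(v)|$ then decrease doubly exponentially down the tree, so $T$ has height $O(\log\log n)$ and is constructed in $O(n\log n)$ time. Next, after pre-sorting $P$ once in $O(n\log n)$ time, I would compute the convex hull $\ch(u)$ of $P(u)$ at every node $u$ of $T$, each in $O(|P(u)|)$ time by splitting the sorted order among children in a top-down pass. Since each level of $T$ stores every point of $P$ exactly once and there are $O(\log\log n)$ levels, the total hull size, hence the space, is $O(n\log\log n)$, and the total preprocessing time remains $O(n\log n)$.

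For a query line $\ell$ I would run the half-plane range counting traversal of $T$, restricted to the halfplane below $\ell$ (and symmetrically to the halfplane above $\ell$). Starting at the root and descending only into children whose cell is crossed by $\ell$, whenever a child $u$ has $\triangle(u)$ lying entirely below $\ell$ I would binary-search $\ch(u)$ for the vertex closest to $\ell$ and keep it as a candidate; at the end I return the closest candidate found over both halfplanes.

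The correctness point to verify is that the point of $P$ below $\ell$ nearest to $\ell$ is always captured in this way: it lies in $P(u)$ for the unique node $u$ whose cell is fully below $\ell$ while its parent's cell is crossed by $\ell$, which is exactly one of the canonical subsets Matou\v{s}ek's counting algorithm accumulates; and since $\triangle(u)\cap\ell=\emptyset$ and $P(u)\subseteq\triangle(u)$, the member of $P(u)$ closest to $\ell$ is a vertex of $\ch(u)$, so it is found by the binary search. For the running time, the set of nodes visited is precisely the set visited by Matou\v{s}ek's counting algorithm, which is $O(\sqrt{n}\log^{O(1)}n)$; my algorithm spends an additional $O(\log n)$ per visited node for the binary search, which is absorbed into the $\log^{O(1)}n$ factor, yielding the claimed $O(\sqrt{n}\log^{O(1)}n)$ query time.

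The main obstacle is not technical difficulty but making the correctness argument airtight — in particular, checking that pruning the traversal to cells crossed by $\ell$ still reaches every cell that is fully below $\ell$ but has a crossed parent, so that no candidate containing the true nearest point is ever skipped. Once that is established, the space, preprocessing, and query bounds follow directly from the $O(\log\log n)$ height of $T$ and the node-count bound of the underlying half-plane counting algorithm, and the lemma follows.
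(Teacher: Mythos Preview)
Your proposal is correct and follows essentially the same approach as the paper: build Matou\v{s}ek's partition tree with $s=\sqrt{|P(v)|}$, augment each node with the convex hull of its point set (yielding $O(n\log\log n)$ space via the $O(\log\log n)$ height), and answer a query by running the half-plane counting traversal while binary-searching the hull at each non-crossed child, absorbing the extra $O(\log n)$ per visited node into the $\log^{O(1)}n$ factor. The only minor addition you might make explicit is the leaf case (constant-size $P(v)$ checked directly), which covers the situation where every cell on the root-to-leaf path of the answer point is crossed by $\ell$.
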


\subsection{A deterministic result based on Matou\v{s}ek's hierarchical
cuttings~\cite{ref:MatousekRa93}}
\label{sec:linem2}

We present another deterministic result based on Matou\v{s}ek's another simplex
range searching data structure~\cite{ref:MatousekRa93}, which makes uses of
Chazelle's result on hierarchical cuttings~\cite{ref:ChazelleCu93}. We first
briefly review Matou\v{s}ek's data structure~\cite{ref:MatousekRa93}. The data structure works for simplex range counting queries in any fixed dimensional space. Again for simplicity and for our purpose, we only discuss it for half-plane range counting queries in the plane.

The algorithm first constructs a data structure for a subset $P'$ of at least half
points of $P$. To build a data structure for the whole $P$, the same
construction is performed for $P$, then for $P\setminus P'$, etc., and
thus $O(\log n)$ data structures with geometrically
decreasing sizes will be obtained. Because both the preprocessing time and space of
the data structure for $P'$ are $\Omega(n)$,
constructing all data structures for $P$ takes asymptotically the same time and space
as those for $P'$ only. To answer a half-plane range counting query on $P$, each of these data
structures will be called. Since the query time for $P'$ is
$\Omega(\sqrt{n})$, the total query time for $P$ is asymptotically the
same as that for $P'$. Below we describe the data structure for $P'$.

The data structure has a set of (not necessarily disjoint) cells that are triangles,
$\Psi_0=\{\triangle_1,\ldots,\triangle_t\}$ with $t=\sqrt{n}\log n$.
For each $1\leq i\leq t$, a subset $P_i\subseteq P$ of
$\frac{n}{2t}$ points contained in $\triangle_i$ will be computed.
The subsets $P_i$'s form a disjoint partition of $P'$.
A rooted tree $T_i$ is constructed for each $1\leq i\leq t$ such that each node of $T_i$
corresponds to a cell, which is a triangle, with $\triangle_i$ as the root. Each internal node of $T_i$
has $O(1)$ children whose cells are
interior-disjoint and together cover their parent cell. For each
cell $\triangle(v)$ of a node $v$ of $T_i$, let $P(v)=P_i\cap \triangle(v)$.
If $\triangle(v)$ is a leaf, then the points of $P(v)$ are
explicitly stored at $v$; otherwise only $\triangle(v)$ and
$|P(v)|$ are stored at $v$. Each point
of $P_i$ is stored in exactly one leaf cell of $T_i$.
The depth of each $T_i$ is $\kappa=O(\log n)$. Hence, the data structure is a forest of $t$
trees. Let $\Psi_j$ denote the set of all cells of all trees
$T_i$'s that lie at distance $j$ from the root (note that $\Psi_0$ is consistent with this definition). For any line $\ell$ in the
plane, let $K_j(\ell)$ be
the set of cells of $\Psi_j$ crossed by $\ell$; let $L_j(\ell)$ be the set of
leaf cells of $K_j(\ell)$. Define $K(\ell)=\bigcup_{j=0}^{\kappa} K_j(\ell)$ and
$L(\ell)=\bigcup_{j=0}^{\kappa} L_j(\ell)$.
Matou\v{s}ek~\cite{ref:MatousekRa93} proved the following lemma.

\begin{lemma}\label{lem:bound}{\em (Matou\v{s}ek~\cite{ref:MatousekRa93})}
\begin{enumerate}
\item $\sum_{j=0}^{\kappa}|\Psi_j|=O(n)$.
\item
For any line $\ell$ in the plane, $|K(\ell)|=O(\sqrt{n})$.
\item
For any line $\ell$ in the plane, $\sum_{v\in L(\ell)}|P(v)|=O(\sqrt{n})$.
\end{enumerate}
\end{lemma}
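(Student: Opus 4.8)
The plan is to reconstruct Matou\v{s}ek's analysis~\cite{ref:MatousekRa93}: all three bounds fall out of two standard facts about cuttings once the forest $\{T_i\}$ is described precisely. The two facts are that a $(1/r)$-cutting of $N$ lines has $O(r^2)$ cells, and that any line in the plane crosses $O(r)$ of them; consequently a line crosses $O(r)$ cells of a hierarchical $(1/r)$-cutting when summed over all $O(\log r)$ of its levels, because the per-level crossing numbers $O(\rho^j)$ form a geometric series dominated by its last term $O(\rho^{\log_\rho r})=O(r)$, and likewise such a hierarchical cutting has $\sum_j O(\rho^{2j})=O(r^2)$ cells in total.

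First I would pin down the construction the bounds refer to, viewing the forest as (a constant-factor refinement of) one hierarchical $(1/r)$-cutting with $r=\Theta(\sqrt n)$, built by Chazelle's construction~\cite{ref:ChazelleCu93} on a \emph{test set} of $\Theta(n)$ lines: at a node $v$ one intersects the next refinement level of the cutting with $\triangle(v)$ and then adds $O(1)$ vertical cuts so that no child holds more than a constant fraction of $P(v)$. Because the test set can be taken to be an $\varepsilon$-approximation for the lines-versus-$P$ range space, controlling test-line counts per cell also controls $|P(v)|$, so after $\kappa=O(\log n)$ levels the leaf cells hold $O(1)$ points of $P$, and the $t=\sqrt n\log n$ cells of $\Psi_0$ arise as the level at which $|P(v)|$ has dropped to $n/(2t)$. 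A non-vertical query line meets only $O(1)$ of the vertical subcells inside any one cutting cell, so the extra cuts inflate each level's crossing number and each level's size only by a constant factor.

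With this picture the three bounds are routine geometric-series estimates. For part~(1), $\sum_{j=0}^{\kappa}|\Psi_j|$ is the total number of cells, which is $O(r^2)=O(n)$ up to the constant from the vertical cuts; equivalently, each $T_i$ has $O(|P_i|)$ leaves, hence $O(|P_i|)$ nodes after contracting degree-one nodes, and $\sum_i|P_i|=|P'|=O(n)$. For part~(2), the line $\ell$ crosses $|K(\ell)|=\sum_j|K_j(\ell)|=O(r)=O(\sqrt n)$ cells of the hierarchical cutting (one can also see this through the two-level picture, since $\ell$ crosses $O(\sqrt t)$ of the top cells and $O(\sqrt{|P_i|})=O(\sqrt{n/t})$ cells of each crossed $T_i$, and the product is $O(\sqrt n)$ with the $t$ cancelling). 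For part~(3), each leaf cell stores $O(1)$ points of $P$, so $\sum_{v\in L(\ell)}|P(v)|=O(|L(\ell)|)=O(|K(\ell)|)=O(\sqrt n)$ by part~(2); if one stops the recursion earlier at larger leaves, the same bound follows by re-running the crossing argument with each level-$j$ cell weighted by its point count $\Theta(|P_i|\rho^{-2j})$, so level $j$ contributes $O(\rho^j)\cdot\Theta(|P_i|\rho^{-2j})$, again a geometric series.

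I expect the real work to be not this bookkeeping but the construction it rests on: one must exhibit a test set of lines whose cutting simultaneously controls the number of test lines and the number of points of $P$ in every cell, and whose crossing number is $O(r)$ for \emph{every} line rather than merely the test lines. That is exactly Matou\v{s}ek's $\varepsilon$-approximation/test-set machinery, and it is also what fixes the $t=\sqrt n\log n$ slack --- enough room to build the partition within the claimed time while keeping the crossing and size bounds exactly $O(\sqrt n)$ and $O(n)$. Once the construction with these parameters is granted, verifying (1)--(3) is the geometric-series computation above.
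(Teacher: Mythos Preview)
The paper does not prove this lemma at all; it simply states it and attributes it to Matou\v{s}ek~\cite{ref:MatousekRa93}, treating the result as a black box. There is thus no ``paper's own proof'' to compare your proposal against --- your write-up is a reconstruction of Matou\v{s}ek's original argument, which this paper does not reproduce.

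As a sketch of Matou\v{s}ek's analysis your outline is broadly accurate, and you correctly identify that the substantive content lies in the test-set construction you defer to. One terminological point worth sharpening: the mechanism by which the crossing bound for \emph{arbitrary} lines (not just test lines) is obtained is the \emph{guarding set} argument --- every line $\ell$ has three test lines whose arrangement zone around $\ell$ contains at most $n/\sqrt{r}$ points of $P$, so cells crossed by $\ell$ but not by any guard are few because they contain few points. The $\varepsilon$-approximation machinery is what makes the test set constructible in the stated time, but it is the guarding-set trick that transfers the bound to all lines. The present paper actually spells this argument out in the proof of its Lemma~\ref{lem:Sigma}, so you can see the mechanics there. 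Since you explicitly flag this extension as the place where Matou\v{s}ek's machinery is invoked, this is a refinement rather than a gap.
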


To construct the data structure described above, Matou\v{s}ek~\cite{ref:MatousekRa93} gave an algorithm whose runtime is polynomial in $n$, and the space is $O(n)$ due to Lemma~\ref{lem:bound}(1).

We next discuss our new algorithm for our problem. Using Matou\v{s}ek's algorithm~\cite{ref:MatousekRa93} we compute $\Psi_0=\{\triangle_1,\ldots,\triangle_t\}$ with $t=\sqrt{n}\log n$ as well as $T_i$, $\triangle_i$, and $P_i$ for all $i=1,2,\ldots,t$ in the same way as above. In addition, for each node $v$ of each tree $T_i$, we compute the convex hull of $P(v)$ and store it at $v$; let $\ch(v)$ denote the convex hull. As the height of each $T_i$ is $O(\log n)$, the total space increases to $O(n\log n)$.
Further, for each $T_i$, we construct a fractional cascading structure~\cite{ref:ChazelleFr86} on the convex hulls of all nodes of $T_i$ so that if a tangent to the convex hull at a node $v$ is known, then the tangents of the same slope to the convex hulls of the children of $v$ can be found in constant time. The total space is still $O(n\log n)$.

Given a query line $\ell$, our goal is to compute the point of $P$ closest to $\ell$. We only discuss how to find the closest point of $\ell$ among all points of $P$ below $\ell$ since the other case is similar.
We show how to compute the closest point of $\ell$ among all points of $P'$ below $\ell$ and then apply the same algorithm on other $O(\log n)$ subsets, so that we obtain a total of $O(\log n)$ candidate closest points. Finally, among all candidate points we return the one closest to $\ell$.

To find the closest point to $\ell$ among all points of $P'$ below $\ell$, our algorithm consists of the following four steps.
\begin{enumerate}
  \item
   Compute the point closest to $\ell$ among the points inside the cells $\triangle_i$ of $\Psi_0$ that are below $\ell$, and add the point to a set $S$ as a candidate point.
  \item
  Find the subset $\Sigma(\ell)$ of cells $\triangle_i$ of $\Psi_0$ that are crossed by $\ell$.
  \item
  For each $\triangle_i\in \Sigma(\ell)$, we find the closest point to $\ell$ among the points of $P_i$ below $\ell$ and add the point to $S$.
  \item
  Among all points of $S$, return the point closest to $\ell$ by checking every point of $S$.
\end{enumerate}
In what follows, we discuss the details for implementing the first three steps; the fourth step is trivial. We will show that after $O(\poly(n))$ time and $O(n\log n)$ space preprocessing, these steps can be implemented in $O(\sqrt{n})$ time for any $\ell$. The preprocessing time will be further reduced later.

\subparagraph{The first step.}
For the first step, we have the following lemma (recall that $t$ is the number of cells of $\Psi_0$ and $t=\sqrt{n}\log n$).

\begin{lemma}\label{lem:firststep}
With $O(t(\log t)^{O(1)}+n\log n)$ time and $O(t(\log t)^{O(1)}+n\log\log t)$ space preprocessing, the first step can be executed in $O(\sqrt{t}\cdot \exp(c\cdot \sqrt{\log t})\cdot \log n)$ time, for a constant $c$.
\end{lemma}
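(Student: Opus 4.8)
The plan is to solve the first step by turning it into an independent line-query problem on a derived instance whose ``points'' are the $t$ cells of $\Psi_0$: each cell $\triangle_i$ carries as payload the convex hull of its subset $P_i$, and a query line $\ell$ asks for the payload vertex nearest to $\ell$ among those cells lying entirely below $\ell$. On these cells I would build a bounded-height partition-tree structure in the spirit of Matou\v{s}ek's efficient partition tree (Lemmas~\ref{lem:linem1} and~\ref{lem:simpartitionalgo}), augmented so that it returns, in canonical form, the cells below $\ell$.

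For preprocessing, I would first compute $\ch(P_i)$ for every $i$; since $\sum_i|P_i|=O(n)$, each $|P_i|=O(n/t)$, and $\log(n/t)=\Theta(\log n)$, this costs $O(n\log n)$ time (or $O(n)$ if $P$ is pre-sorted by $x$-coordinate). Then I would apply Lemma~\ref{lem:simpartitionalgo} recursively to the $t$ cells using partition parameter $s=\sqrt{(\,\cdot\,)}$ at every level, producing a tree $T'$ of height $O(\log\log t)$ with $O(1)$ cells per leaf; because each invocation runs in time near-linear in the number of cells it handles and these counts decrease geometrically down the tree, $T'$ is built in $O(t(\log t)^{O(1)})$ time. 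At each node $w$ of $T'$, letting $\mathcal C(w)$ be the cells in its subtree, I would store the convex hull $\ch\bigl(\bigcup_{\triangle_i\in\mathcal C(w)}P_i\bigr)$, computed bottom-up by merging the $O(1)$ children hulls; since each point of $P'$ lies in only $O(\log\log t)$ of these hulls their total size is $O(n\log\log t)$, which gives the claimed space and adds only $O(n\log\log t)$ beyond the one-time sort. Finally I would lay a fractional-cascading structure~\cite{ref:ChazelleFr86} over the hulls along each root-to-node path of $T'$, so that once a tangent of a given slope to one hull is known the same-slope tangents of its children hulls are obtained in $O(1)$ time; this costs time linear in the total hull size.

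For a query line $\ell$, I would descend $T'$ from the root. The nodes whose cells are crossed by $\ell$ form a connected subtree $T'_\ell$ containing the root; at each node of $T'_\ell$ I classify every child as crossed (recurse), entirely below $\ell$ (mark it a canonical bucket), or entirely above $\ell$ (discard). The crossing-number bound of the simplicial partitions limits $|T'_\ell|$, and hence the number of canonical ``below'' nodes, to $O(\sqrt t\cdot 2^{O(\sqrt{\log t})})$, where the $2^{O(\sqrt{\log t})}$ factor absorbs the accumulated per-level overhead of the $O(\log\log t)$-level recursion. For each such canonical node $w$, its stored hull is not crossed by $\ell$, so one binary search on that hull---equivalently a predecessor search among its edge slopes---produces its vertex nearest to $\ell$ in $O(\log n)$ time; using the fractional-cascading structure along the descent, only the first tangent computation costs $O(\log n)$ while each further node of $T'_\ell$ costs $O(1)$. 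Returning the nearest of all collected candidate vertices finishes the first step in $O(\sqrt t\cdot 2^{O(\sqrt{\log t})}\cdot\log n)$ time.

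I expect the main obstacle to be controlling this recursion cleanly. Matou\v{s}ek's partition-tree construction requires the partition parameter to be polynomially large in the number of objects, so the cell-tree cannot be pushed to constant-size leaves in a single application; the $O(\log\log t)$-level recursion needed to fix this is precisely what injects the $2^{O(\sqrt{\log t})}$ factor into the query cost, and it would leak into the preprocessing as well unless the per-level partitions and hull merges are accounted for carefully. A secondary point is that the base objects are triangular cells rather than points, so the partition structure must be phrased for ``cell crossed by $\ell$'' / ``cell on a given side of $\ell$'' rather than for a point versus a line; this is standard for simplex range searching but has to be stated so that the crossing-number and construction-time guarantees of Lemma~\ref{lem:simpartitionalgo} still apply. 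The remaining ingredients---merging convex hulls of point sets separated by a line, fractional cascading on $O(1)$-degree trees, and binary search among convex-hull edge slopes---are the same tools already used elsewhere in the paper.
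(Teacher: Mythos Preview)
Your proposal has a real gap at the point you flag as ``secondary.'' Lemma~\ref{lem:simpartitionalgo} produces a simplicial partition of a \emph{point} set with a crossing-number guarantee for the \emph{partition cells}; it says nothing about partitioning a set of triangles so that few of the \emph{triangles themselves} are crossed by a line. If you associate to each $\triangle_i$ a representative point and build a partition tree on those $t$ points, then when a partition cell $\Delta(w)$ lies entirely below $\ell$ you only know that the representative vertices are below $\ell$; the other two vertices of each $\triangle_i$ may still be above $\ell$, so you cannot simply read off the convex hull stored at $w$. Filtering out those $\triangle_i$ that are \emph{not} entirely below $\ell$ is exactly the residual subproblem, and the ``standard simplex range searching'' way to handle the conjunction ``all three vertices are in $h$'' is a three-level structure: treat the vertices of each $\triangle_i$ as a $3$-tuple and nest three partition trees, one per coordinate of the tuple. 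That is what the paper does (following Lemma~6.2 of~\cite{ref:MatousekEf92}), and the factor $\exp(c\sqrt{\log t})$ in the query time arises from choosing the branching parameter $r=\exp(\sqrt{\log m})$ across the three levels, not from a single-level tree of height $O(\log\log t)$ as you suggest.

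There is a second, smaller issue: you invoke fractional cascading on the partition tree $T'$, but with $s=\sqrt{|\cdot|}$ each internal node has $\Theta(\sqrt{|\cdot|})$ children, so the catalog graph has unbounded degree and the standard fractional-cascading guarantee from~\cite{ref:ChazelleFr86} does not apply. The paper sidesteps this entirely by paying the $\log n$ factor for a binary search on the stored convex hull at each canonical node; the $\exp(c\sqrt{\log t})\cdot\log n$ bound absorbs that cost without needing cascading.
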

\begin{proof}
Let $h$ be the half-plane below $\ell$. Our goal is to compute the point closest to $\ell$ among the points inside the cells $\triangle_i$ of $\Psi_0$ that are completely contained in $h$. Note that $\triangle_i$ is in $h$ if and only if all three vertices of $\triangle_i$ are in $h$. We consider the three vertices of each cell $\triangle_i$ as a $3$-tuple and build a 3-level data structure for all cells of $\Psi_0$ by modifying a multi-level data structure of Lemma~6.2 in \cite{ref:MatousekEf92}.
Let $A$ be the set of all $3$-tuples for all $\triangle_i\in \Psi_0$. Hence, $|A|=t$.

We proceed by induction on $k$ with $k=1,2,3$, i.e., solving the $k$-tuple problem by constructing a data structure $D_k(A)$. For $k=1$ (i.e., $A$ is a set of $t$ points), we construct a half-plane range counting data structure of Matou\v{s}ek~\cite{ref:MatousekEf92} as reviewed in Section~\ref{sec:linem1} on $A$ with the following change: for each node $v$ of the partition tree, instead of storing $|A(v)|$ at $v$, we store the convex hull of the points of the union of the subsets $P_i$ for all cells $\triangle_i\in \Psi_0$ that have a vertex in $A$. In this way, $D_1(A)$ can be constructed in $O(t\log t + n\log n)$ time and $O(t+n\log\log t)$ space, following the analysis in Section~\ref{sec:linem1}.

For $k>1$, let $F$ be the set of first elements of all $k$-tuples of $A$.
To construct $D_k(A)$, we build a partition tree as before on $F$, by setting $r=m/s=\exp(\sqrt{\log m})$ in a node $v$ whose subset $F(v)$ has $m$ points. For each subset $F_i$ of the simplicial partition $\Pi(v)=\{(F_1,\triangle_1),(F_2,\triangle_2),\ldots\}$ for $F(v)$, we let $A_i\subseteq A$  be the set of $k$-tuples whose first elements are in $F_i$, and let $A_i'$ be the set of $(k-1)$-tuples arising by removing the first element from the $k$-tuples of $A_i$. We compute the data structure $D_{k-1}(A_i')$ and store it at the node $v$.

Given a query line $\ell$, let $h$ be the half-plane below $\ell$. Start from the root of the partition tree, for each node $v$, we find the cells of the simplicial partition $\Pi(v)$ contained in $h$, and for each such cell $\triangle_i$, we use the data structure $D_{k-1}(A_i')$ to find the $k$-tuples of $A_i$ contained in $h$; in particular, when $k=1$, we do binary search on the convex hull stored at $\triangle_i$ to find the vertex closest to $\ell$ as a candidate closest point. We also find the cells of $\Pi(v)$ crossed by $\ell$, and visit the corresponding subtrees of $v$ recursively. Finally, among all candidate closest points we return the one closest to $\ell$ as the answer.

For the preprocessing time and space, we separate the algorithm into two parts: the part on processing the $3t$ vertices of all cells of $\Psi_0$ and the part on the points of $\cup_{i=1}^tP_i$. For the first part, we can follow exactly the same time analysis as Lemma~6.2~\cite{ref:MatousekEf92} and obtain that the time and space are bounded by $O(t(\log t)^{O(1)})$. For the second part, it is only processed in the lowest level of the partition tree; as discussed above, the total time is $O(n\log n)$ and the space is $O(n\log\log t)$. For the query time, following exactly the same analysis as Lemma~6.2~\cite{ref:MatousekEf92}, we can obtain that the number of nodes of the partition tree accessed by the query algorithm is $O(\sqrt{t}\cdot \exp(c\cdot \sqrt{\log t}))$. Since we need to do binary search on the convex hulls in the lowest level partition tree, the total time is bounded by $O(\sqrt{t}\cdot \exp(c\cdot \sqrt{\log t})\cdot \log n)$.
\end{proof}

Note that $\exp(c\sqrt{\log t})=O(t^{\delta})$ for any small $\delta>0$. Since $t=\sqrt{n}\log n$, the preprocessing time of the above lemma is bounded by $O(n\log n)$, the space is $O(n\log\log n)$, and the query time is bounded by $O(\sqrt{n})$.
In summary, with additional $O(n\log n)$ time and $O(n\log\log n)$ space preprocessing, the first step of the query algorithm can be executed in $O(\sqrt{n})$ time.

\subparagraph{The second step.}
For the second step, we have the following lemma.

\begin{lemma}
With $O(t(\log t)^{O(1)})$ time and space preprocessing, the second
step can be executed in $O(\sqrt{t}\cdot(\log n)^{O(1)}+k)$ time, where $k$ is the output size.
\end{lemma}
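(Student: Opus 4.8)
The plan is to treat the second step as a line-crossing reporting problem on the $t$ triangles of $\Psi_0$ and reduce it to a constant number of multi-level half-plane range reporting queries. Assume general position, so that no vertex of any $\triangle_i$ lies on a query line $\ell$ (the finitely many exceptional lines can be handled separately, or by symbolic perturbation). Under this assumption $\ell$ crosses $\mathrm{int}(\triangle_i)$ if and only if $\ell$ strictly separates the two endpoints of at least one of the three edges of $\triangle_i$, and moreover a crossed triangle has exactly two crossed edges. Hence it suffices to (a) report all edges among the $3t$ triangle edges that are crossed by $\ell$, and then (b) map the reported edges back to triangles and discard duplicates; step (b) costs $O(1)$ per reported edge using a length-$t$ timestamp array (allocated during preprocessing, with a per-query fresh stamp), so it contributes only $O(k)$ to the query time.

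For step (a), I would orient each edge $e=\overline{p_eq_e}$ once and for all, say so that $p_e$ is lexicographically smaller. Then $e$ is crossed by $\ell$ exactly when $p_e$ lies in the open upper half-plane of $\ell$ and $q_e$ in the open lower half-plane, or vice versa. I would build two two-level half-plane range reporting structures on the $\le 3t$ ordered pairs $\{(p_e,q_e)\}$: in the first, the outer level is a half-plane range structure on the first points, and to each of its canonical nodes we attach an inner half-plane range structure on the corresponding second points; a query with line $\ell$ selects, at the outer level, the canonical nodes whose subsets lie entirely above $\ell$, and for each such node queries its inner structure to report the pairs whose second point lies below $\ell$. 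The second structure is identical with ``above'' and ``below'' interchanged. For the underlying single-level half-plane range reporting structure I would use Matou\v{s}ek's structure based on hierarchical cuttings~\cite{ref:MatousekRa93,ref:ChazelleCu93} in reporting mode: at a canonical node whose whole subset passes the relevant half-plane test we enumerate the stored edge indices, at $O(1)$ per reported item. Because every node of that structure has $O(1)$ children, adding the second level costs only a polylogarithmic factor in space and time, and a query touches $O(\sqrt m\,(\log m)^{O(1)})$ canonical nodes on $m$ input pairs.

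Putting these together with $m=O(t)$: each of the two structures, plus the timestamp array, is built in $O(t(\log t)^{O(1)})$ time and space, and a query on the $j$-th structure runs in $O(\sqrt t\,(\log t)^{O(1)}+k_j)$ time, where $k_j$ is the number of edges it reports. Since $k_1+k_2$ equals the total number of crossed edges, which is $2k$, and the edge-to-triangle mapping with deduplication adds only $O(k)$, the total query time is $O(\sqrt t\,(\log t)^{O(1)}+k)$; as $t\le n$ we have $(\log t)^{O(1)}=(\log n)^{O(1)}$, which is the claimed bound. (Deduplication can be avoided altogether at the cost of a third level: fix a canonical order of the three vertices of each $\triangle_i$ and use six three-level half-plane reporting structures, one for each choice of which vertex is the one isolated by $\ell$ and on which side of $\ell$ it lies; these six cases are mutually exclusive, so their outputs are disjoint.)

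The one point that needs genuine care rather than bookkeeping is verifying that Matou\v{s}ek's hierarchical-cutting half-plane structure~\cite{ref:MatousekRa93} really extends to a two-level \emph{reporting} structure within $O(t(\log t)^{O(1)})$ preprocessing and space and $O(\sqrt t\,(\log t)^{O(1)}+k)$ query time: one must check that the $O(1)$ out-degree of the underlying trees keeps the per-level blow-up polylogarithmic (so a constant number of levels stays polylogarithmic, in contrast to the $\exp(c\sqrt{\log t})$ factor that a partition-tree-based construction such as the one used in Lemma~\ref{lem:firststep} would incur), that each canonical node certifies the relevant half-plane test in $O(1)$ time so the search term stays $O(\sqrt t\,(\log t)^{O(1)})$, and that replacing the stored counts by stored index lists changes the cost at a fully-contained canonical node only from $O(1)$ to output-linear. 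Everything else — the equivalence ``$\ell$ crosses $\triangle_i$ iff $\ell$ separates the endpoints of some edge of $\triangle_i$'', the orientation trick that makes each crossed edge fire in exactly one of the two structures, and the timestamp-array deduplication — is immediate.
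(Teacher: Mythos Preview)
Your reduction is exactly the one the paper uses: pass from ``which triangles of $\Psi_0$ does $\ell$ cross?'' to ``which of the $\le 3t$ edges of $\Psi_0$ does $\ell$ cross?'', then map back. The paper, however, does not spell out the two-level half-plane reporting structure; it simply invokes Lemma~6.3 of Matou\v{s}ek~\cite{ref:MatousekRa93}, which is precisely the statement that the edges of $E$ crossed by a query line can be reported in $O(\sqrt{|E|}\,(\log |E|)^{O(1)}+k)$ time after $O(|E|(\log |E|)^{O(1)})$ preprocessing. So your approach is the same in substance; you are effectively re-deriving that lemma rather than citing it.

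One remark on your flagged ``point that needs genuine care'': the raw hierarchical-cutting structure in~\cite{ref:MatousekRa93} has polynomial preprocessing time, not $O(t(\log t)^{O(1)})$, so to get the stated preprocessing bound the multi-level construction in Lemma~6.3 of~\cite{ref:MatousekRa93} actually combines partition trees (for near-linear preprocessing) with the hierarchical-cutting machinery, rather than using the hierarchical-cutting structure alone at each level. Your intuition that the $O(1)$ out-degree keeps the per-level blow-up polylogarithmic is right for the query time, but the preprocessing side needs the partition-tree ingredient as well; this is exactly what Lemma~6.3 packages up.
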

\begin{proof}
The second step of the query algorithm is to report all cells of $\Psi_0$ that are crossed by the query line $\ell$.
Let $E$ be the set of the edges of all cells of $\Psi_0$. Hence, $|E|\leq 3t$. Given a query line $\ell$, it suffices to report all edges of $E$ crossed by $\ell$. For this problem, we can use the result in
Lemma~6.3~\cite{ref:MatousekRa93} and the performance is as stated in the lemma.
\end{proof}

Since $t=\sqrt{n}\log n$, the preprocessing time and space of the above lemma are bounded by $O(n)$ and the query time is bounded by $O(\sqrt{n})$, for $k=O(\sqrt{n})$ by Lemma~\ref{lem:bound}(2).

\subparagraph{The third step.}
For the third step, for each cell $\triangle_i\in \Sigma(\ell)$, using the tree $T_i$, we find the closest point of $\ell$ in $P_i\cap h$ as follows, where $h$ is the half-plane below $\ell$. Starting from the root of $T_i$, consider a node $v$. We assume that $\ell$ crosses the cell $\triangle(v)$ at $v$, which is true initially when $v$ is the root. If $v$ is a leaf, then we check points of $P(v)$ one by one and find the point of $P(v)\cap h$ closest to $\ell$ and add it to $P_i(\ell)$ as a candidate point. Otherwise, for each child $u$ of $v$, if $\ell$ crosses $\triangle(u)$, then we proceed on $u$ recursively; otherwise, if $\triangle(u)$ is in $h$, then we find the vertex of $\ch(u)$ closest to $\ell$ and add it to $P_i(\ell)$. Finally, among all points of $P_i(\ell)$, we return the one closest to $\ell$.

For the time analysis, since each internal node of $T_i$ has $O(1)$ children, $|P_i(\ell)|$ is on the order of the number of nodes of $T_i$ whose cells are crossed by $\ell$. As such, by Lemma~\ref{lem:bound}(2), we have $\sum_{\triangle_i\in \Sigma(\ell)}|P_i(\ell)|=O(\sqrt{n})$. Also, by Lemma~\ref{lem:bound}(3), the total time of the algorithm spends on all leaves whose cells are crossed by $\ell$ for all trees $T_i$ with $\triangle_i\in \Sigma(\ell)$ is $O(\sqrt{n})$. It remains to bound the time on finding the closest vertex on the convex hulls $\ch(v)$ for those vertices $v$ whose cells are in $h$ and whose parent cells are crossed by $\ell$. Let $A_i$ be the set of nodes of $T_i$ whose cells are crossed by $\ell$. The nodes whose convex hulls we need to search are children of the nodes of $A_i$ that are not in $A_i$.
Observe that nodes of $A_i$ form a subtree including the root. Using the fractional cascading data structure built on the convex hulls of the nodes of $T_i$, we only need to perform binary search on the convex hull at the root and then spend only $O(1)$ time on each node of $A_i$ (and each of their children). Since each node of $T_i$ has $O(1)$ children, the total time is $O(\log n+ |A_i|)$. By Lemma~\ref{lem:bound}(2), we obtain that $\sum_{\triangle_i\in \Xi}|A_i|=O(\sqrt{n})$. Hence, the total time of the third step is bounded by $O(|\Sigma(\ell)|\cdot \log n+\sqrt{n})$. To bound $|\Sigma(\ell)|$, since each cell of $\Sigma(\ell)$ is crossed by $\ell$, we have $|\Sigma(\ell)|=O(\sqrt{n})$ by Lemma~\ref{lem:bound}(2), which would lead to an overall $O(\sqrt{n}\log n)$ time bound for the query algorithm. Using some results from~\cite{ref:MatousekRa93}, we can obtain the following lemma and consequently bound the query time by $O(\sqrt{n})$.

\begin{lemma}\label{lem:Sigma}
$|\Sigma(\ell)|=O(\sqrt{n}/\log n)$.
\end{lemma}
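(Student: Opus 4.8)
The plan is to establish the sharper crossing‑number statement that every line $\ell$ in the plane crosses only $O(\sqrt{t})$ of the $t=\sqrt{n}\log n$ cells of $\Psi_0$, and then to observe that $\sqrt{t}=O\!\left((\sqrt{n}\log n)^{1/2}\right)=O(n^{1/4}\sqrt{\log n})=o(\sqrt{n}/\log n)$, since $\log^{3/2}n=o(n^{1/4})$. This yields $|\Sigma(\ell)|=O(\sqrt{n}/\log n)$, which is precisely what is needed to turn the third step's running time $O(|\Sigma(\ell)|\log n+\sqrt{n})$ into $O(\sqrt{n})$. The first thing to note is that Lemma~\ref{lem:bound}(2) only gives $|\Sigma(\ell)|\le|K(\ell)|=O(\sqrt{n})$, i.e.\ it is a $\Theta(\log n)$ factor too weak; the improvement has to come from looking inside Matou\v{s}ek's construction of the \emph{coarsest} partition $\Psi_0$ rather than from the aggregate bound over all levels.

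So the key step is to recall how $\Psi_0$ is built in~\cite{ref:MatousekRa93} and read off its crossing number. Up to the local refinement that equalizes point counts, the $t$ cells of $\Psi_0$ are the cells of a $(1/r)$-cutting of a test set of $O(n)$ lines — this is exactly where the hierarchical cuttings of Chazelle~\cite{ref:ChazelleCu93} enter, to make the construction efficient — with the parameter $r$ chosen so that, after each cell is split into $O(1)$ sub‑cells to make it contain $\Theta(n/t)$ points of $P$, the total number of cells is still $\Theta(t)$; since a $(1/r)$-cutting has $\Theta(r^2)$ cells, this pins down $r=\Theta(\sqrt{t})$. Now I would invoke two standard facts about $(1/r)$-cuttings: any line crosses $O(r)$ of the $O(r^2)$ cells of the cutting, and any line crosses $O(1)$ of the $O(1)$ sub‑cells produced when a single cell is refined (this second fact is the $c$-refinement property recalled in Section~\ref{sec:pre}). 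Composing these, every line crosses $O(r)=O(\sqrt{t})$ cells of $\Psi_0$. Alternatively, one may simply quote the corresponding crossing‑number bound for the coarsest partition directly from Matou\v{s}ek's analysis in~\cite{ref:MatousekRa93}, established there as part of the reasoning behind Lemma~\ref{lem:bound}; the estimate $O(\sqrt{t})$ is also what one would expect from the crossing number $O(\sqrt{t})$ of a simplicial partition into $t$ parts, cf.\ Lemma~\ref{lem:simpartition}.

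I expect the real obstacle to be the point‑balancing refinement inside this argument: a $(1/r)$-cutting controls how many test lines cross a cell, not how many points of $P$ lie inside it, so it is not automatic that every cell can be brought down to $\Theta(n/t)$ points by only $O(1)$ sub‑cuts. This forces the test set to be chosen "$P$-aware'' — so that a cell containing many points of $P$ is necessarily crossed by many test lines — and one has to verify that Matou\v{s}ek's construction~\cite{ref:MatousekRa93} does arrange exactly this, so that the refinement costs only a constant factor in the crossing number. (Even an extra polylogarithmic factor here would be harmless, since $n^{1/4}\cdot\mathrm{polylog}(n)=o(\sqrt{n}/\log n)$.) Once that is granted, the rest is just the parameter arithmetic above: $r=\Theta(\sqrt{t})$ with $t=\sqrt{n}\log n$ gives $|\Sigma(\ell)|=O(n^{1/4}\sqrt{\log n})=O(\sqrt{n}/\log n)$, completing the proof and hence the $O(\sqrt{n})$ bound for the query algorithm.
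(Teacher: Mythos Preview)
Your proposal has a genuine gap. The description of $\Psi_0$ as ``the cells of a $(1/r)$-cutting of a test set of $O(n)$ lines, with $r=\Theta(\sqrt t)$, plus a constant-factor refinement for point balancing'' is not how Matou\v{s}ek's construction in~\cite{ref:MatousekRa93} works; note in particular that the paper states explicitly that the cells of $\Psi_0$ are ``not necessarily disjoint,'' which already rules out their being the cells of a single cutting. So the clean ``any line crosses $O(r)$ cells of a $(1/r)$-cutting'' fact cannot be applied, and the $O(\sqrt t)=O(n^{1/4}\sqrt{\log n})$ bound does not follow from your argument.

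More importantly, even the bound you propose to ``quote directly'' from Matou\v{s}ek's analysis---namely Lemma~4.2 of~\cite{ref:MatousekRa93}, giving $|K_0(\ell)|=O(\sqrt{n}/4^{\kappa})$---holds only for lines $\ell$ in the test set $H$, not for arbitrary lines. Extending it to an arbitrary query line is exactly the nontrivial step, and the paper handles it via the \emph{guarding set} machinery: every line $\ell$ has a guarding set $G\subset H$ of three test lines such that the zone of $\ell$ in the arrangement of $G$ contains at most $\sqrt n$ points of $P$. One then splits $\Sigma(\ell)$ into cells crossed by some line of $G$ (at most $3\cdot O(\sqrt n/4^{\kappa})=O(\sqrt n/\log n)$ by the test-line bound) and cells contained in the zone (at most $2t\sqrt n/n=2\log n$ of them, since each $\triangle_i$ carries $n/(2t)$ points). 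Your proposal does not touch this test-line/arbitrary-line distinction, and without it the argument does not go through. The point-balancing concern you raise is real but secondary; the missing idea is the guarding-set reduction from arbitrary lines to test lines.
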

\begin{proof}
Following the definition in~\cite{ref:MatousekRa93}, a set $G$ of three lines is called a {\em guarding set} for a line $\ell$ with respect to the point set $P$ and a parameter $r\leq n$ if the zone of $\ell$ in the arrangement of $G$ contains less than $n/\sqrt{r}$ points of $P$. Matou\v{s}ek's algorithm~\cite{ref:MatousekRa93} for constructing the half-plane range counting data structure described above for the subset $P'$ starts with computing a collection $\Gamma$ of guarding sets with $r=n$.
Let $H$ be the set of all lines of all guarding sets of $\Gamma$ ($H$ is also called the {\em test set} in the literature, e.g.,~\cite{ref:ChanOp12,ref:MatousekEf92}).\footnote{Computing the guarding sets is done by computing a $(1/t')$-cutting $\Xi$ for the dual lines of the points of $P$ such that $\Xi$ has at most $r$ vertices in total, where $t'$ is chosen so that $t'=\Theta(\sqrt{r})$. $H$ is just the set of lines in the primal plane dual to the vertices of $\Xi$. For any line $\ell$ in the primal plane, its dual point is contained in a cell $\triangle$ (which is a triangle) of $\Xi$; the dual lines of the three vertices of $\triangle$ form the guarding set of $\ell$.}

For any line $\ell\in H$, it is proved in~\cite{ref:MatousekRa93} (i.e., Lemma~4.2) that $|K_0(\ell)|=O(\sqrt{n}/4^{\kappa})$ with $\kappa=\Theta(\log n)$. More precisely, by examining Matou\v{s}ek's algorithm~\cite{ref:MatousekRa93} for constructing $\Psi_0$, we have $\kappa=c\cdot \log_{\rho}(n/\sqrt{t})$ for a constance $c>0$ and another constant $\rho>4$ ($\rho$ is the constant as defined in the hierarchical cutting in Section~\ref{sec:pre}). As $t=\sqrt{n}\log n$, one can verify that for any $c>0$ and any $\rho>4$, there always exists sufficient large $n$ such that $4^{\kappa}>\log n$.
Therefore, it holds that $\sqrt{n}/4^{\kappa}=O(\sqrt{n}/\log n)$ and thus $|K_0(\ell)|=O(\sqrt{n}/\log n)$. Note that $\Sigma(\ell)$ is $K_0(\ell)$. Hence, the lemma holds for any line $\ell\in H$.

Consider a line $\ell\not\in H$. Let $G$ be the guarding set of $\ell$. We partition $\Sigma(\ell)$ into two subsets: let $\Sigma_1(\ell)$ be the subset of cells of $\Sigma_1(\ell)$ crossed by the lines of $G$ and let $\Sigma_2(\ell)=\Sigma(\ell)\setminus\Sigma_1(\ell)$. By the above analysis, since $|G|=3$ and $G\subseteq H$, we have $|\Sigma_1(\ell)|=O(\sqrt{n}/\log n)$.
On the other hand, each cell of $\Sigma_2(\ell)$ must be contained in the zone $Z(\ell)$ of $\ell$ in the arrangement of $G$. According to the property of the guarding set, $Z(\ell)$ contains at most $n/\sqrt{r}=\sqrt{n}$ points of $P$. Recall that $|P_i|=n/(2t)$ for each cell $\triangle_i$ of $\Psi_0$, $1\leq i\leq t$. Hence, the number of cells of $\Psi_0$ in $Z(\ell)$ is at most $2t\sqrt{n}/n=2\log n$. Therefore, we obtain $|\Sigma_2(\ell)|\leq 2\log n$. As such, $|\Sigma(\ell)|=|\Sigma_1(\ell)|+|\Sigma_2(\ell)|=O(\sqrt{n}/\log n)$. The lemma thus follows.
\end{proof}


\subparagraph{Summary.}
The above shows that the first three steps of the query algorithm can be
executed in $O(\sqrt{n})$ time, with additional $O(n\log n)$ time and
$O(n\log\log n)$ space preprocessing.
For the fourth step, its time is $O(|S|)$. Notice that $|S|\leq
|\Sigma(\ell)|+1$, which is $O(\sqrt{n}/\log n)$ by Lemma~\ref{lem:Sigma}.
We thus conclude that with $O(\poly(n))$ time and $O(n\log n)$ space preprocessing, we can answer each query in $O(\sqrt{n})$ time. In the following, we reduce the preprocessing time to $O(n^{1+\delta})$ for any $\delta>0$, while the space and the query time do not change.

\subsubsection{Reducing the preprocessing time}
\label{sec:reducepre}

We build a partition tree $T$ by Lemma~\ref{lem:simpartitionalgo} recursively, until we obtain a partition of $P$ into subsets of size $s=O(n^{\delta'})$ for a suitable constant $\delta'$, which form the leaves of $T$.
Each inner node $v$ of $T$ corresponds to a subset $P(v)$ of $P$ and a simplicial
partition $\Pi(v)$ of $P(v)$, which form the children of $v$. At each child $u$ of
$v$, we store the cell $\triangle(u)$ of $\Pi(v)$ containing $P(u)$ and also store
the convex hull of $P(u)$.
Note that if $u$ is the root, then $\triangle(u)$ is the entire plane and $P(u)=P$.
We construct the partition $\Pi(v)$ using Lemma~\ref{lem:simpartitionalgo} with
parameter $s=|P(v)|^{1/2-\epsilon}$ for a small constant $\epsilon$. As such,
the height of $T$ is $O(1)$.
For each leaf $v$ of $T$, we build the data structure $\scrD_v$ discussed above
on $P(v)$, which takes time polynomial in $|P(v)|$. We make $\delta'$ small enough
so that the total time we spend on processing the leaves of $T$ is
$O(n^{1+\delta})$ for a given $\delta>0$. This finishes the preprocessing, which
takes $O(n^{1+\delta})$ time and $O(n\log n)$ space.

Given a query line $\ell$, our goal is to compute the point of $P$ closest to
$\ell$. We only discuss how to find the closest point of $\ell$ among all points
of $P$ below $\ell$ since the other case is similar. Starting from the root of
$T$, consider a node $v$. We assume that $\ell$ crosses $\triangle(v)$, which is
true initially when $v$ is the root. For each child $u$ of $v$, we do the
following. If $\ell$ crosses $\triangle(u)$, then we proceed on $u$ recursively.
Otherwise, if $\triangle(u)$ is below $\ell$, we do binary search on the convex
hull at $v$ to find in $O(\log n)$ time the closest point to $\ell$ among the
vertices of the convex hull and keep the point as a candidate. Eventually we
will reach a set $V$ of leaves $v$ whose cells $\triangle(v)$ are crossed by
$\ell$. Since the height of $T$ is $O(1)$, the size of $|V|$ is $O(\sqrt{r})$,
where $r=\sqrt{n/s}$ is the number of leaves of $T$. Also, because the height of
$T$ is $O(1)$, the number of nodes of $T$ visited by the above algorithm is
$O(n^{1/2-\epsilon'})$ for a constant $\epsilon'<\epsilon$. As binary search on
each convex hull takes $O(\log n)$ time, the total time the algorithm spends on
searching the tree $T$ is $O(n^{1/2-\epsilon'}\log n)$, which is bounded by
$O(n^{1/2})$. This also implies that the number of candidate closest points that
have been computed is $O(n^{1/2})$.

Next, for each leaf node $v\in V$, we use the data structure $\scrD_v$ to find
in $O(\sqrt{|P(v)|})$ time the closest point of $\ell$ among all points of $P(v)$
below $\ell$ and keep the point as a candidate point. As $|P(v)|=O(s)$, the total time for searching all leaf nodes of $V$ is $O(\sqrt{r}\cdot \sqrt{s})$, which is $O(\sqrt{n})$.
The above finds $O(\sqrt{n})$ candidate closest points in total.
Finally, among all candidate points we return the one closest to $\ell$.
The total query time is thus bounded by $O(\sqrt{n})$.
The following lemma summarizes the result.
\begin{lemma}\label{lem:linem2}
Given a set $P$ of $n$ points in the plane, we can build a data structure of
$O(n\log n)$ space in $O(n^{1+\delta})$ time for any $\delta>0$, such that for any query
line its closest point in $P$ can be computed in
$O(\sqrt{n})$ time.
\end{lemma}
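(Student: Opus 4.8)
The plan is to \emph{bootstrap} the data structure developed in the first part of Section~\ref{sec:linem2}, which already attains $O(n\log n)$ space and $O(\sqrt n)$ query time but needs preprocessing polynomial in $n$. The idea is the standard one: apply that expensive structure only to subproblems of polynomially small size, and connect them with a partition tree of \emph{constant} depth, which is cheap to build.

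First I would build a partition tree $T$ on $P$ by applying Lemma~\ref{lem:simpartitionalgo} recursively, but at a node $v$ with point set $P(v)$ I would use subset size $s_v = |P(v)|^{1/2-\epsilon}$ for a small constant $\epsilon>0$, so the simplicial partition $\Pi(v)$ has $r_v = |P(v)|^{1/2+\epsilon}$ cells and crossing number $O(\sqrt{r_v})$ by Lemma~\ref{lem:simpartition}. Recursing until every leaf holds $\Theta(n^{\delta'})$ points for a suitable constant $\delta'$, the height of $T$ is $O(1)$. At each node of $T$ I would store the cell of its parent's partition containing it together with the convex hull of its point set (pre-sorting $P$ once so each hull costs linear time in its point set); since $T$ has $O(1)$ height this adds only $O(n)$ space, and since each leaf $v$ will carry, on $P(v)$, the data structure $\scrD_v$ from Section~\ref{sec:linem2} (which uses $O(|P(v)|\log|P(v)|)$ space), the total space is $O(n\log n)$. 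Building $\scrD_v$ takes time polynomial in $|P(v)| = \Theta(n^{\delta'})$; choosing $\delta'$ small enough relative to $\delta$ and to the fixed polynomial exponent of that construction makes the total over all leaves $O(n^{1+\delta})$, which dominates the cost of building $T$ and the hulls.

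For a query line $\ell$ I would look for the closest point below $\ell$ (the other side is symmetric). Starting at the root, at a node $v$ whose cell is crossed by $\ell$ I examine each child $u$: if $\triangle(u)$ is crossed by $\ell$ I recurse into $u$; if $\triangle(u)$ lies entirely below $\ell$ I binary-search the convex hull stored at $u$ for its vertex closest to $\ell$ and record that vertex as a candidate. By the crossing-number bound each level contributes $O(\sqrt{r_v})$ visited cells, and since $T$ has $O(1)$ height the total number of visited internal nodes is $O(n^{1/2-\epsilon'})$ for some constant $\epsilon'\in(0,\epsilon)$; each such node costs $O(\log n)$ for one binary search, so the traversal takes $O(n^{1/2-\epsilon'}\log n) = O(\sqrt n)$ time, which also bounds the number of candidates. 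For each leaf $v$ whose cell is crossed by $\ell$ I query $\scrD_v$ in $O(\sqrt{|P(v)|})$ time; there are $O(\sqrt r)$ such leaves, where $r$ is the number of leaves of $T$, and $|P(v)| = O(s)$ with $s = \Theta(n^{\delta'})$, so this is $O(\sqrt r\cdot\sqrt s) = O(\sqrt n)$ altogether. Finally I scan the $O(\sqrt n)$ candidates and return the closest; total query time is $O(\sqrt n)$.

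The main obstacle is purely in the exponent bookkeeping: one must verify that the crossing numbers compose through the $O(1)$ levels of $T$ so that the internal-node count is $n^{1/2-\epsilon'}$ with a \emph{strictly positive} $\epsilon'$ (so the per-node $\log n$ is genuinely absorbed), and that $\delta'$ can be fixed as a function of $\delta$ alone (against the fixed polynomial exponent of the $\scrD_v$ construction) so that the leaf preprocessing stays $O(n^{1+\delta})$. Both follow routinely from Lemmas~\ref{lem:simpartition} and~\ref{lem:simpartitionalgo} and the bounds already proved for $\scrD_v$, but they need $\epsilon$, $\epsilon'$, $\delta'$ to be chosen consistently.
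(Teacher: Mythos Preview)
Your proposal is correct and follows essentially the same approach as the paper: build a constant-depth partition tree via Lemma~\ref{lem:simpartitionalgo} with subset size $|P(v)|^{1/2-\epsilon}$, store convex hulls at internal nodes, and attach the polynomial-time structure $\scrD_v$ only at the small leaves, so that the leaf preprocessing is $O(n^{1+\delta})$ and the query combines $O(n^{1/2-\epsilon'}\log n)$ internal work with $O(\sqrt{r}\cdot\sqrt{s})=O(\sqrt n)$ leaf work. The exponent-bookkeeping caveats you flag are exactly the ones the paper handles in the same routine way.
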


\subsection{Trade-offs}
\label{sec:tradeoff}

Let $P^*$ denote the set of all lines dual to the points of $P$. Consider a query line
$\ell$. We assume that $\ell$ is not vertical. Let $\ell^*$ denote its dual
point.

To find the closest point of $\ell$ in $P$, it suffices to find
its closest point among all points of $P$ above $\ell$ and
its closest point among all points of $P$ below $\ell$. In the following, we
only compute its closest point among all points of $P$ above $\ell$.
In the dual plane, this is to find the first line
hit by the vertically downward ray $\rho(\ell^*)$ from $\ell^*$ among all dual
lines of $P^*$.

We compute a $(1/r)$-cutting $\Xi$ for $P^*$. This time instead of having each
cell of $\Xi$ as a triangle, we make each cell of $\Xi$ a trapezoid that is
bounded by two vertical edges, an upper edge, and a lower edge. This can be done
by slightly changing Chezelle's algorithm~\cite{ref:ChazelleCu93}, i.e., instead
of triangulating each cell of a line arrangement we decompose it into
trapezoids (i.e., draw a segment from each vertex of the cell until the cell boundary). Computing such a cutting $\Xi$ can be done in $O(nr)$ time~\cite{ref:ChazelleCu93}. A property of the cutting produced by Chezelle's algorithm~\cite{ref:ChazelleCu93} is that the upper/lower edge of each trapezoid
must lie on an line of $P^*$ unless it is unbounded. For each cell $\sigma$ of
$\Xi$, let $P^*(\sigma)$ denote the lines of $P^*$ crossing $\sigma$.
Hence, $|P^*(\sigma)|\leq n/r$. For each cell $\sigma$ of $\Xi$, we build a data structure of complexity $O(T(|P^*(\sigma)|),S(|P^*(\sigma)|),Q(|P^*(\sigma)|))$ for $P^*(\sigma)$, where $T(\cdot)$, $S(\cdot)$, and $Q(\cdot)$ are the preprocessing time, space, and query time, respectively; we refer to it as the {\em secondary data structure}. As $|P^*(\sigma)|\leq n/r$ and $\Xi$ has $O(r^2)$ cells, the total preprocessing time is $O(nr+r^2\cdot T(n/r))$ and the total space is $O(nr+r^2\cdot S(n/r))$.

For each query line $\ell$, we first find in $O(\log r)$ time the cell $\sigma$ of $\Xi$ that
contains the dual point $\ell^*$. Then, we use the secondary data structure for $P^*(\sigma)$ to find the first line $\ell'$ of $P^*(\sigma)$ hit by the ray $\rho(\ell^*)$. If $\ell'$ exists, then $\ell'$ is the solution; otherwise the line containing the lower side of $\sigma$ is the solution. The total query time is thus $O(\log r+Q(n/r))$.

As such, we obtain a data structure of complexity $O(nr+r^2\cdot T(n/r),nr+r^2\cdot S(n/r),\log r+Q(n/r))$, for any $1\leq r\leq n$. Using the results of Lemma~\ref{lem:linechan} and Lemma~\ref{lem:linem2} to build the secondary data structure for $P^*(\sigma)$, respectively, we can obtain the following trade-offs.

\begin{theorem}
\begin{enumerate}
  \item
      Given a set $P$ of $n$ points in the plane, we can build a data structure of
$O(nr\log (n/r))$ space in $O(nr\log (n/r))$ expected time, such that for any query
line its closest point in $P$ can be computed in $O(\sqrt{n/r})$ time with high probability, for any $1\leq r\leq n/\log^2 n$.
  \item
   Given a set $P$ of $n$ points in the plane, we can build a data structure of
$O(nr\log (n/r))$ space in $O(nr(n/r)^{\delta})$ time, such that for any query
line its closest point in $P$ can be computed in $O(\sqrt{n/r})$ time, for any $\delta>0$ and any $1\leq r\leq n/\log^2 n$.
\end{enumerate}
\end{theorem}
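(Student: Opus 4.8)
The plan is to simply instantiate the generic reduction just described with the two concrete line-query data structures developed earlier in this section. Recall that, after building a trapezoidal $(1/r)$-cutting $\Xi$ for $P^*$ and attaching to each cell $\sigma$ a \emph{secondary} data structure of complexity $(T(|P^*(\sigma)|), S(|P^*(\sigma)|), Q(|P^*(\sigma)|))$, the overall complexity is $O(nr + r^2\, T(n/r),\ nr + r^2\, S(n/r),\ \log r + Q(n/r))$, since $|P^*(\sigma)| \le n/r$ and $\Xi$ has $O(r^2)$ cells (and the two ray-shooting directions are handled symmetrically). Thus both parts follow by plugging in the appropriate secondary structure and simplifying, the only point needing attention being the additive $\log r$ term in the query bound.

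For part~1, I would use the randomized structure of Lemma~\ref{lem:linechan} as the secondary structure, so that $T(N)=O(N\log N)$ in expectation, $S(N)=O(N\log N)$, and $Q(N)=O(\sqrt{N})$ with high probability, where $N=n/r$. By linearity of expectation the expected preprocessing time is $O(nr + r^2\cdot (n/r)\log(n/r)) = O(nr\log(n/r))$, and the space is $O(nr + r^2\cdot (n/r)\log(n/r)) = O(nr\log(n/r))$ as well. The query time is $O(\log r + \sqrt{n/r})$; since $r \le n/\log^2 n$ implies $n/r \ge \log^2 n$, we have $\sqrt{n/r} \ge \log n \ge \log r$, so the query time is $O(\sqrt{n/r})$. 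For the high-probability claim, I would note that the bound $\alpha(T)=O(\sqrt{N})$ is a property of each secondary tree established at construction time; since there are only $O(r^2)=\poly(n)$ such trees, a union bound (absorbing the polynomial factor into the failure probability, which stays $1/\poly(n)$) shows that with high probability \emph{all} secondary trees simultaneously satisfy the bound, after which every query runs in $O(\sqrt{n/r})$ time deterministically.

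For part~2, I would instead use the deterministic structure of Lemma~\ref{lem:linem2}, for which $T(N)=O(N^{1+\delta})$, $S(N)=O(N\log N)$, and $Q(N)=O(\sqrt{N})$, again with $N=n/r$. The preprocessing time becomes $O(nr + r^2\cdot(n/r)^{1+\delta}) = O(nr + n^{1+\delta} r^{1-\delta}) = O(n r (n/r)^{\delta})$, where the $nr$ term is dominated because $r\le n$. The space is $O(nr + r^2\cdot (n/r)\log(n/r)) = O(nr\log(n/r))$, and the query time is $O(\log r + \sqrt{n/r}) = O(\sqrt{n/r})$ by the same observation about the range of $r$ as in part~1. (Here $\delta$ may be taken to be the same constant as in Lemma~\ref{lem:linem2}, or any slightly larger constant.)

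The main obstacle, such as it is, is bookkeeping rather than any new idea: one must (i) verify that the additive $\log r$ cost of locating $\ell^*$ in $\Xi$ is absorbed into $\sqrt{n/r}$, which is exactly what forces the restriction $r\le n/\log^2 n$; and (ii) phrase the randomized guarantee correctly, using linearity of expectation for the preprocessing time and a union bound over the $\poly(n)$ secondary structures so that the per-query bound holds with high probability. Everything else is a direct substitution into the generic trade-off established above.
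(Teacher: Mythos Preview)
Your proposal is correct and follows exactly the paper's approach: the paper simply states that the theorem follows by plugging the secondary structures of Lemma~\ref{lem:linechan} and Lemma~\ref{lem:linem2} into the generic trade-off $O(nr + r^2 T(n/r),\, nr + r^2 S(n/r),\, \log r + Q(n/r))$, and you have carried out precisely this substitution together with the straightforward simplifications (including the observation that $r\le n/\log^2 n$ makes $\log r$ absorbed by $\sqrt{n/r}$). Your added remarks on linearity of expectation and the union bound over the $O(r^2)$ secondary trees are details the paper leaves implicit but are entirely in keeping with its argument.
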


In particular, for the large space case, i.e., $r=n/\log^2n$, we can obtain a randomized data structure of complexity $O(n^2\log\log n/\log^2 n, n^2\log\log n/\log^2 n, \log n)$, and a slower deterministic data structure of complexity $O(n^2\log^{\delta} n/\log^2 n, n^2\log\log n/\log^2 n, \log n)$.

\section{The segment query problem}
\label{sec:segmentquery}

In this section, we discuss the query problem of the segment case. The main idea is essentially the same as the line case with one change:
whenever we compute the convex hull for a canonical subset of $P$ (e.g., the
subset $P(v)$ for a node $v$ in a partition tree) for outside-hull
line queries, we instead build the BS data structure~\cite{ref:BespamyatnikhQu00}
for outside-hull segment queries. Because the fractional cascading does not help
anymore, the query time in general has an additional logarithmic factor, with
the exception that when using Chan's partition tree~\cite{ref:ChanOp12} we still manage
to bound the query time by $O(\sqrt{n})$ due to some nice properties of the
partition tree.

\subparagraph{The randomized result.}
For our randomized result using Chan's partition
tree~\cite{ref:ChanOp12} (by modifying the one in Section~\ref{sec:chan}), for each node $v$ of the partition tree $T$, we construct the BS
data structure for $P(v)$. The total space is still $O(n\log n)$.
For the preprocessing time, constructing the BS data structure can be done in
linear time if we know the Voronoi diagram of $P(v)$. For this, as discussed in
Section~\ref{sec:review}, we can process
all nodes of $T$ in a bottom-up manner and using the linear-time Voronoi diagram
merge algorithm of Kirkpatrick~\cite{ref:KirkpatrickEf79}. As such, constructing
the BS data structures for all nodes of $T$ can be done in $O(n\log n)$ time in
total.
Therefore, the total preprocessing time of the data structure is still $O(n\log n)$ expected time.

The query algorithm follows the same scheme as before but instead use the BS
algorithm to answer outside-hull segment queries.
The total
query time becomes $O(\sqrt{n}\log n)$ with high probability. In fact, due to certain properties of Chan's
partition tree, the time is bounded by $O(\sqrt{n})$, as shown in the following
lemma (similar idea was used elsewhere, e.g., \cite{ref:ChanSi23}).

\begin{lemma}
The query time is bounded by $O(\sqrt{n})$ with high probability.
\end{lemma}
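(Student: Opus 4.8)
The plan is to run exactly the query algorithm of Lemma~\ref{lem:linechan} (Section~\ref{sec:chan}) on the modified tree $T$, with one change: when the algorithm reaches a node $u$ of $T$ whose triangle $\triangle(u)$ is not crossed by the supporting line $\ell$ of the query segment $s$ and lies on the relevant side of $\ell$, it answers an outside-hull \emph{segment} query for $s$ on the BS data structure stored at $u$, in $O(\log|P(u)|)$ time, rather than binary-searching a convex hull. Call such a $u$ a \emph{frontier node}; every frontier node is a child of a node $v$ with $\triangle(v)$ crossed by $\ell$, and since each internal node of $T$ has $O(1)$ children, there are $O(\alpha(T))=O(\sqrt n)$ frontier nodes with high probability. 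All the other work done by the query --- walking the nodes of $T$ crossed by $\ell$ and testing their children, brute-forcing over the $O(1)$ points in each crossed leaf, the $O(\log n)$-time Voronoi-diagram lookup handling the case in which the closest point of $s$ is an endpoint of $s$, and the final comparison among the $O(\sqrt n)$ collected candidate points --- sums to $O(\sqrt n)$ with high probability, by the same accounting as in the line case. So the only new point to establish is that the total cost $\sum_u O(\log|P(u)|)$ of the BS segment queries, over all frontier nodes $u$, is $O(\sqrt n)$ with high probability; the crude estimate of $O(\sqrt n)$ frontier nodes times $O(\log n)$ per query only yields $O(\sqrt n\log n)$.

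To obtain the sharper bound I would use the scale-sensitive crossing property of Chan's partition tree: with high probability, for every line $\ell$ and every integer $i\ge 0$, the number of nodes $v$ of $T$ with $\triangle(v)$ crossed by $\ell$ and $|P(v)|\ge n/2^i$ is $O(2^{i/2})$. This is the crossing number of the partition of $P$ into cells holding $\Theta(n/2^i)$ points each; summing the geometric progression $O(2^{i/2})$ over $i=0,1,\dots,\lceil\log n\rceil$ is exactly how the bound $\alpha(T)=O(\sqrt n)$ is obtained. Since a frontier node $u$ with $|P(u)|\ge n/2^i$ is a child of a crossed node $v$ with $|P(v)|\ge|P(u)|\ge n/2^i$, and $v$ has $O(1)$ children, the number of frontier nodes $u$ with $|P(u)|\ge n/2^i$ is $O(2^{i/2})$ as well. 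Bucketing the frontier nodes by the index $i$ with $|P(u)|\in[\,n/2^{i+1},n/2^i)$, bucket $i$ has $O(2^{i/2})$ nodes, each costing $O(1+\log(n/2^i))=O(1+\log n-i)$. Hence, summing over all frontier nodes $u$,
\begin{align*}
\sum_u O(\log|P(u)|)\;\le\;\sum_{i=0}^{\lceil\log n\rceil}O(2^{i/2})\cdot O(1+\log n-i)\;=\;\sum_{j\ge 0}O(\sqrt n\cdot 2^{-j/2})\cdot O(j+1)\;=\;O(\sqrt n),
\end{align*}
where the substitution $j=\lceil\log n\rceil-i$ turns the sum into $O(\sqrt n)$ times the convergent series $\sum_{j\ge 0}(j+1)2^{-j/2}$.

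Adding the two parts, each query takes $O(\sqrt n)$ time, and this holds with high probability since it is conditioned only on the high-probability events above (namely $\alpha(T)=O(\sqrt n)$ together with the scale-sensitive crossing bounds, which in Chan's randomized construction hold simultaneously for all query lines). I expect the main obstacle to be precisely the issue resolved in the second paragraph: a uniform $O(\log n)$ charge per frontier node is too lossy, and one has to exploit that the numerous frontier nodes deep in $T$ carry only tiny point sets --- a structural feature of Chan's partition tree, not of the BS data structure --- so that the weighted sum of query costs telescopes to $O(\sqrt n)$.
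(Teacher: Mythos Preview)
Your proposal is correct and follows essentially the same approach as the paper. Both arguments exploit the level structure of Chan's partition tree: the paper buckets crossed nodes by \emph{depth} $i$, using that at depth $i$ one has $|P(v)|\le n/\rho^i$ and the number of crossed cells is $O(\sqrt{\rho^i}+\log^{O(1)}n)$, then sums $\sum_i(\sqrt{\rho^i}+\log^{O(1)}n)\log(n/\rho^i)=O(\sqrt n)$; you bucket by \emph{size} $|P(u)|\in[n/2^{i+1},n/2^i)$ and perform the equivalent calculation. One small caveat: your ``scale-sensitive'' statement that the number of crossed nodes with $|P(v)|\ge n/2^i$ is $O(2^{i/2})$ drops the additive $\log^{O(1)}n$ term per level that the paper (and Chan's analysis) carries; for small $i$ this term can dominate, so the bucket count is more accurately $O(2^{i/2}+i\log^{O(1)}n)$. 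This does not affect the conclusion, since the extra contribution $\sum_i i\log^{O(1)}n\cdot(\log n-i+1)=\log^{O(1)}n$ is absorbed into $O(\sqrt n)$, but you should state it to make the bound airtight.
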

\begin{proof}
As discussed above, the extra logarithmic factor in the query time is due to the
outside-hull segment query, which takes $O(\log |P(v)|)$ time for each
node $v$ of $T$. We show that the sum of $\log |P(v)|$ for all nodes $v$ of $T$
visited by the query algorithm is $O(\sqrt{n})$ with high probability.

Indeed, $|P(v)|$ for each node $v$ at depth $i$ of $T$ (the depth of the root is
$0$) is bounded by $n/\rho^i$
for some constant $\rho>1$ and the height of $T$ is $k=\lceil\log_{\rho} n\rceil$~\cite{ref:ChanOp12}. Further, any line
$\ell$ crosses at most $O(\sqrt{\rho^i}+\log^{O(1)}n)$ cells at depth $i$ of $T$
with high probability~\cite{ref:ChanOp12}.
Hence, the total query time is bounded by
$\sum_{i=1}^k((\sqrt{\rho^i}+\log^{O(1)}n)\cdot \log(n/\rho^i))$ with high probability.
Next we show that $\sum_{i=1}^k((\sqrt{\rho^i}+\log^{O(1)}n)\cdot \log(n/\rho^i))=O(\sqrt{n})$. Indeed,
first notice that
$\sum_{i=1}^k\log^{O(1)}n\cdot \log(n/\rho^i)=O(\log^{O(1)}n)$, and thus is bounded by
$O(\sqrt{n})$.
For $\sum_{i=1}^k\sqrt{\rho^i}\cdot \log(n/\rho^i)$, since $\rho>1$, we have the following (for simplicity, we assume that $n=\rho^k$)
\begin{align*}
\sum_{i=1}^k\sqrt{\rho^i}\cdot \log(n/\rho^i) & =
\sum_{i=0}^{k-1}\sqrt{n/\rho^i}\cdot \log \rho^i =
\sqrt{n}\log \rho\cdot \sum_{i=0}^{k-1}\frac{i}{\sqrt{\rho^i}}=O(\sqrt{n}).
\end{align*}
This proves the lemma.
\end{proof}

As such, we obtain the following result.
\begin{lemma}\label{lem:segmentchan}
Given a set $P$ of $n$ segments in the plane, we can build a data structure of $O(n\log n)$ space in $O(n\log n)$ expected time such that for any query segment its closest point in $P$ can be computed in $O(\sqrt{n})$ time with high probability.
\end{lemma}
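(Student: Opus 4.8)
The plan is to assemble the ingredients developed in the paragraphs above. I would take the modified Chan partition tree $T$ of Section~\ref{sec:chan} but replace the convex hull $\ch(v)$ stored at each node $v$ by the BS data structure of~\cite{ref:BespamyatnikhQu00} built on $P(v)$, which answers outside-hull \emph{segment} queries in $O(\log|P(v)|)$ time using $O(|P(v)|)$ space; since $T$ has height $O(\log n)$ and the subsets $P(v)$ at any fixed level of $T$ partition $P$, the total space is $O(n\log n)$. For the preprocessing time, I would use that the BS structure for $P(v)$ costs only $O(|P(v)|)$ once $\vd(P(v))$ is known, and process the nodes of $T$ bottom-up: $\vd(P(v))$ for an internal node $v$ is obtained by merging the Voronoi diagrams of its $O(1)$ children with Kirkpatrick's linear-time merge~\cite{ref:KirkpatrickEf79}, exactly as in Section~\ref{sec:review}. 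Each level then costs $O(n)$, the $O(\log n)$ levels cost $O(n\log n)$, and together with the $O(n\log n)$ expected time to build $T$ this gives $O(n\log n)$ expected preprocessing.

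For a query segment $s$ with supporting line $\ell$, I would first test, using $\vd(P)$, whether the point of $s$ closest to $P$ is an endpoint of $s$; otherwise I would descend $T$ from the root as in the line case: at a node whose triangle is crossed by $s$ recurse into its children, at a child $u$ whose triangle $\triangle(u)$ is disjoint from $s$ the segment lies outside $\ch(P(u))$ so I query the BS structure at $u$ for a candidate closest point, and at each leaf I check its $O(1)$ points directly. Because the disjoint subtriangles together with the crossed leaves partition $P$, the minimum over all these candidates (and the endpoint candidate from $\vd(P)$) is the answer, and the triangles of $T$ crossed by $s$ are among those crossed by $\ell$, so the visited nodes form a subtree of $T$ of the same size as in the line case.

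The one real obstacle, and the only deviation from Section~\ref{sec:chan}, is the logarithmic overhead of the BS query, $O(\log|P(v)|)$ per visited node, which fractional cascading cannot remove here since the BS search lacks a locality property (as discussed for the offline setting in Section~\ref{sec:segment}); a crude estimate only gives $O(\sqrt{n}\log n)$. To reach $O(\sqrt{n})$ I would reuse the summation carried out in the lemma immediately preceding this one: with $|P(v)|\le n/\rho^i$ at depth $i$ and $O(\sqrt{\rho^i}+\log^{O(1)}n)$ triangles crossed at that depth with high probability~\cite{ref:ChanOp12}, the query time is $O\big(\sum_{i=1}^{k}(\sqrt{\rho^i}+\log^{O(1)}n)\log(n/\rho^i)\big)$, where $\sum_i\sqrt{\rho^i}\log(n/\rho^i)=\sqrt{n}\log\rho\sum_{i\ge 0} i\,\rho^{-i/2}=O(\sqrt{n})$ and the remaining terms sum to $O(\log^{O(1)}n)$. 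Hence the query runs in $O(\sqrt{n})$ time with high probability, as claimed.
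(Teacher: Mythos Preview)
Your proposal is correct and essentially identical to the paper's proof: the same replacement of convex hulls by BS structures at every node of Chan's partition tree, the same bottom-up Kirkpatrick merge to get linear-time Voronoi diagrams (and hence linear-time BS construction) per node, and the same level-by-level summation $\sum_i(\sqrt{\rho^i}+\log^{O(1)}n)\log(n/\rho^i)=O(\sqrt n)$ to kill the per-node $O(\log|P(v)|)$ overhead. Your extra remarks (the preliminary $\vd(P)$ check for the endpoint case, and the observation that triangles crossed by $s$ are a subset of those crossed by its supporting line $\ell$) are harmless additions that the paper leaves implicit.
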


\subparagraph{The first deterministic result.}
For our first deterministic result (by modifying the one in Section~\ref{sec:linem1}), we again construct the BS data structure for $P(v)$ for each node $v$ of the partition tree $T$. The total space is still $O(n\log\log n)$. The query time follows the same scheme as before with changes as above, so is bounded by $O(\sqrt{n}\log^{O(1)}n)$.
For the preprocessing time, constructing the BS data structure for all nodes of $T$ can be done in $O(n\log n)$ time. Indeed, for each node $v$ of $T$, it takes $O(|P(v)|\log |P(v)|)$ time to construct the data structure. Because the subsets $P(v)$ for all nodes $v$ in the same level of $T$ are pairwise disjoint and the size of $P(v)$ is geometrically decreasing following the depth of $T$, the total time for constructing the data structures is bounded by $O(n\log n)$.
As such, we obtain the following result.
\begin{lemma}\label{lem:segmentm1}
Given a set $P$ of $n$ segments in the plane, we can build a data structure of $O(n\log\log n)$ space in $O(n\log n)$ time such that for any query segment its closest point in $P$ can be computed in $O(\sqrt{n}\log^{O(1)}n)$ time.
\end{lemma}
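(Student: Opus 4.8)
The plan is to obtain the claimed structure by modifying the line-query data structure of Lemma~\ref{lem:linem1} (Matou\v{s}ek's partition tree~\cite{ref:MatousekEf92} on $P$, with a convex hull stored at every node) so that it answers \emph{segment} queries: at every node, replace the stored convex hull by the BS data structure~\cite{ref:BespamyatnikhQu00} for outside-hull segment queries (equivalently, the structure of Lemma~\ref{lem:40}). Concretely, first build the partition tree $T$ on the point set $P$ exactly as in Section~\ref{sec:linem1}: each node $v$ carries a triangular cell $\triangle(v)$ and the subset $P(v)=P\cap\triangle(v)$, its children come from a simplicial partition of $P(v)$ with parameter $\sqrt{|P(v)|}$, the height of $T$ is $O(\log\log n)$, and the sets $P(v)$ over any fixed level of $T$ partition $P$. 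Then, at each node $v$, attach the BS data structure for $P(v)$: it uses $O(|P(v)|)$ space, is built in $O(|P(v)|\log|P(v)|)$ time, and answers an outside-hull segment query on $P(v)$ in $O(\log|P(v)|)$ time.

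For the space bound, sum $O(|P(v)|)$ over all nodes: within a fixed level this is $O(n)$ since the $P(v)$ are disjoint, and there are $O(\log\log n)$ levels, so the total is $O(n\log\log n)$, exactly as for the convex-hull variant. For the preprocessing time, building $T$ itself costs $O(n\log n)$ by applying Lemma~\ref{lem:simpartitionalgo} recursively, as in Section~\ref{sec:linem1}; for the BS structures, a depth-$i$ node has $|P(v)|=n^{\Theta(1/2^{i})}$ points, so building its structure costs $n^{\Theta(1/2^{i})}\cdot\Theta(2^{-i}\log n)$ time, and there are $n^{1-\Theta(1/2^{i})}$ such nodes, giving $\Theta(2^{-i}\,n\log n)$ per level and $O(n\log n)$ after summing the geometric series over $i$. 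This telescoping — which hinges on $\log|P(v)|$ shrinking geometrically with depth — is the one computation worth spelling out, and it is the same accounting already used in Section~\ref{sec:linem1}. (If one wanted to be tidy, the BS structures could instead be built bottom-up reusing merged Voronoi diagrams as in Section~\ref{sec:review}, but the crude count suffices.)

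The query algorithm for a segment $s$ mirrors the line-query algorithm, with $\ell(s)$, the supporting line of $s$, playing the role of the query line. Traverse $T$ from the root maintaining the invariant that $\ell(s)$ crosses $\triangle(v)$ at the current node $v$; for each child $u$: if $\ell(s)$ crosses $\triangle(u)$, recurse into $u$; otherwise $\triangle(u)$ lies in one open half-plane bounded by $\ell(s)$, hence $\ch(P(u))\subseteq\triangle(u)$ is disjoint from $s\subseteq\ell(s)$ and lies entirely on one side of $\ell(s)$ — the easy configuration for an outside-hull segment query on $P(u)$ — so invoke the BS structure on $P(u)$ (it handles either side of $\ell(s)$) to get the point of $P(u)$ closest to $s$ in $O(\log|P(u)|)$ time and record it. At a leaf whose cell is crossed by $\ell(s)$, brute-force over its $O(1)$ points. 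Finally return the recorded candidate closest to $s$. Correctness holds because every $p\in P$ lies in some leaf $\lambda$: along the root-to-$\lambda$ path either the traversal leaves the ``crossed'' region at some child $u$ with $p\in P(u)$ — and the BS query on $P(u)$ returns $\min_{q\in P(u)}d(q,s)\le d(p,s)$, the BS structure (that of Lemma~\ref{lem:40}) returning the genuine closest point and handling internally both the subcase where the closest point of $s$ in $P(u)$ is witnessed by an endpoint of $s$ (via $\vd(P(u))$) and the perpendicular-dragging subcase — or the path stays crossed all the way to $\lambda$ and $p$ is examined by brute force; either way the global minimum over $P$ is found, so no separate Voronoi diagram of all of $P$ is needed.

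For the query time, the set of nodes at which the traversal recurses or issues a BS query is precisely the set of partition-tree cells crossed by the line $\ell(s)$ together with their children — the same set visited by the line-query algorithm of Section~\ref{sec:linem1} on $\ell(s)$ — so its size is $O(\sqrt{n}\log^{O(1)}n)$ by Matou\v{s}ek~\cite{ref:MatousekEf92}. Each visited node costs $O(\log|P(u)|)=O(\log n)$ for its BS query, just as each costs $O(\log n)$ for the convex-hull search in the line case, so the total is $O(\sqrt{n}\log^{O(1)}n)\cdot O(\log n)=O(\sqrt{n}\log^{O(1)}n)$, the extra logarithm being absorbed into the $\log^{O(1)}n$ factor. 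The main obstacle I anticipate is only a small one: verifying that a cell $\triangle(u)$ resting against $\ell(s)$ still yields a well-posed outside-hull segment query for $P(u)$ (disjointness of $s$ from $\ch(P(u))$, and the ``which side to drag'' disambiguation the BS query needs), which is handled by keeping the cells open triangles and, if necessary, a symbolic perturbation exactly as in the line case; beyond this the argument is a faithful transcription of Section~\ref{sec:linem1} with ``convex hull'' replaced by ``BS data structure'' and ``closest hull vertex to $\ell$'' replaced by ``outside-hull segment query.''
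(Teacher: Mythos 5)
Your proposal is correct and follows essentially the same route as the paper: take the Matou\v{s}ek partition tree of Section~\ref{sec:linem1}, replace the stored convex hulls by BS data structures, bound the space by $O(n)$ per level times $O(\log\log n)$ levels, bound the preprocessing by $O(n\log n)$ via the geometric decrease of $\log|P(v)|$ with depth, and answer a segment query by the same traversal with outside-hull segment queries at the uncrossed cells. The paper states all of this more tersely but with identical content.
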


\subparagraph{The second deterministic result.}
For our second deterministic result (by modifying the one in Section~\ref{sec:linem2}), for each node $v$ of each tree $T_i$, we again construct the BS data structure for $P(v)$. We do the same for the data structure in Lemma~\ref{lem:firststep} as well as the tree $T$ in Section~\ref{sec:reducepre}. The total space is still $O(n\log n)$.
The preprocessing time for constructing the BS data structures for all nodes of $T$ is $O(n\log n)$ since $T$ has only $O(1)$ levels. Therefore, the total preprocessing time is still $O(n^{1+\delta})$. The query algorithm follows the same scheme as before with changes as above, so the query time becomes $O(\sqrt{n}\log n)$.

\begin{lemma}\label{lem:segmentm2}
Given a set $P$ of $n$ points in the plane, we can build a data structure of
$O(n\log n)$ space in $O(n^{1+\delta})$ time for any $\delta>0$, such that for any query
segment its closest point in $P$ can be computed in
$O(\sqrt{n}\log n)$ time.
\end{lemma}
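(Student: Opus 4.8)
The plan is to derive the lemma by a largely mechanical modification of the proof of Lemma~\ref{lem:linem2}: everywhere the line-query data structure of Section~\ref{sec:linem2} and Section~\ref{sec:reducepre} stores the convex hull $\ch(P(v))$ of a canonical subset $P(v)$ and answers an outside-hull \emph{line} query by binary search on it, the segment-query data structure instead stores the BS data structure~\cite{ref:BespamyatnikhQu00} on $P(v)$ and answers an outside-hull \emph{segment} query. Concretely, I would revisit the three ingredients used there --- the forest of trees $T_i$ from Matou\v{s}ek's hierarchical-cuttings construction~\cite{ref:MatousekRa93}, the multi-level structure of Lemma~\ref{lem:firststep} used for the first step, and the constant-depth outer partition tree $T$ of Section~\ref{sec:reducepre} whose leaves carry the recursively built structures $\scrD_v$ on subsets of size $s=O(n^{\delta'})$ --- and at every node $v$ of every such tree replace $\ch(P(v))$ by the BS structure on $P(v)$.

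For the space bound I would note that the BS structure on a set of $k$ points occupies $O(k)$ space, exactly like the convex hull it replaces; since the canonical subsets at each level of each of these trees are pairwise disjoint and the trees have height $O(\log n)$, the total space stays $O(n\log n)$. For the preprocessing time I would argue that building all BS structures for the nodes of the outer tree $T$ costs $O(n\log n)$ --- $T$ has only $O(1)$ levels, the canonical subsets at a fixed level partition $P$, and a BS structure on $k$ points is built in $O(k\log k)$ time --- while building each $\scrD_v$ together with all of its internal BS structures is still polynomial in $|P(v)|=O(n^{\delta'})$, so choosing $\delta'$ small enough keeps the sum over all leaves of $T$ at $O(n^{1+\delta})$. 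Hence the preprocessing time is $O(n^{1+\delta})$ as claimed, and the space is $O(n\log n)$.

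For the query algorithm I would run exactly the same search as in the line case, the sole change being that whenever the line-query algorithm performs a binary search on some $\ch(P(v))$ (for a cell $\triangle(v)$ lying to one side of the query, so that $\ch(P(v))$ is disjoint from it) to find the point of $P(v)$ closest to the line, the segment-query algorithm instead issues a BS outside-hull segment query on $P(v)$; this still costs $O(\log|P(v)|)=O(\log n)$. The one genuine change in the analysis is that the fractional-cascading speedup of Section~\ref{sec:linem2} is no longer available: a BS structure is not a single list sorted by edge slope, so one cannot hand a ``tangent of the same slope'' from a parent node down to its children in $O(1)$ time (the same lack of locality noted in the Remark in Section~\ref{sec:segment}). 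Thus each node at which a query is issued now pays a full $O(\log n)$ rather than $O(1)$. Since the combinatorial facts bounding the number of such nodes --- Lemma~\ref{lem:bound}, Lemma~\ref{lem:Sigma}, Lemma~\ref{lem:firststep}, and the depth-$O(1)$ count of Section~\ref{sec:reducepre} --- are untouched by the choice of secondary structure, the first, second, and fourth steps (and the search through the outer tree $T$ itself) remain $O(\sqrt{n})$, while the third step (in which $O(\sqrt{s})$ forest nodes of each $\scrD_v$ are touched, and over the $O(\sqrt{r})$ visited leaves of $T$ this is $O(\sqrt{r}\cdot\sqrt{s})=O(\sqrt{n})$ such nodes in total) now costs $O(\sqrt{n}\log n)$. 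Summing gives the claimed $O(\sqrt{n}\log n)$ query time.

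The step I expect to need the most care is this last one: confirming that the overhead is exactly one logarithmic factor and not more, i.e.\ that replacing convex hulls by BS structures changes neither the number of tree nodes a query touches nor the $O(\log n)$ cost of a single outside-hull segment query, so that their product is $O(\sqrt{n}\log n)$ rather than, say, $O(\sqrt{n}\log^2 n)$. This reduces to observing that Lemma~\ref{lem:bound}, Lemma~\ref{lem:Sigma}, Lemma~\ref{lem:firststep}, and the depth-$O(1)$ bound of Section~\ref{sec:reducepre} are purely combinatorial statements about the search structure that do not depend on what auxiliary data is attached to its nodes.
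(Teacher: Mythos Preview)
Your proposal is correct and follows essentially the same approach as the paper: replace every stored convex hull by a BS structure at the nodes of the trees $T_i$, of the multi-level structure of Lemma~\ref{lem:firststep}, and of the outer partition tree $T$ of Section~\ref{sec:reducepre}, observe that space and preprocessing bounds are unchanged, and note that the loss of fractional cascading costs exactly one $\log n$ factor in the query time. Your write-up is in fact more detailed than the paper's, which dispatches the lemma in a brief paragraph relying on the reader to carry over the analysis from Section~\ref{sec:linem2}.
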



\subparagraph{Trade-offs.}
For trade-offs, we now cannot use the previous approach for the line
case anymore because the dual problem in the segment case is not the vertical
ray-shootings among lines. Instead, we use an approach from
Bespamyatnikh~\cite{ref:BespamyatnikhCo03} (which is the algorithm for recurrence
\eqref{equ:20}).

Let $P^*$ denote the set of dual lines of the points of $P$. We compute a
$(1/r)$-hierarchical cutting $\Xi_0,\Xi_1,\ldots,\Xi_k$ for $P^*$. For each $i$,
$0\leq i<k$, for each cell $\sigma\in \Xi_i$, let $P^*(\sigma)$ denote the
subset of the lines of $P^*$ crossing $\sigma$; for each child $\sigma'$
of $\sigma$, let $P^*_1(\sigma',\sigma)$ (resp.,
$P^*_2(\sigma',\sigma)$) denote the subset of the lines $P^*(\sigma)$
above (resp., below) $\sigma'$. We construct a BS data structure on the dual
points of the lines of $P^*_1(\sigma',\sigma)$ (resp., $P^*_2(\sigma',\sigma)$).
For each cell $\sigma$ of the last cutting $\Xi_k$, we build a data structure
of complexity $O(T(|P^*(\sigma)|),S(|P^*(\sigma)|),Q(|P^*(\sigma)|))$ for the points dual to
the lines of $P^*(\sigma)$ and we refer to it as the {\em secondary data structure}.

The sum of $|P^*(\sigma)|$ for all cells $\sigma$ in $\Xi_i$
for all $i=0,1,\ldots,k$ is $O(nr)$. Since each cell has $O(1)$ children, the
total space of the cutting as well as the BS data structures for all cells
is $O(nr)$. The total space of the secondary data structures is $O(r^2\cdot
S(n/r))$. Hence, the overall preprocessing space is $O(nr+r^2\cdot S(n/r))$.

For the preprocessing time, constructing the cutting takes $O(nr)$ time.
Constructing the BS data structure on the dual
points of the lines of $P^*_1(\sigma',\sigma)$ (resp., $P^*_2(\sigma',\sigma)$) takes $O(|P^*(\sigma)|\log |P^*(\sigma)|)$ time for a child $\sigma'$ of a cell $\sigma\in \Xi_i$, $0\leq i\leq k$, and $|P^*(\sigma)|\leq n/\rho^i$, where $\rho$ is the constant associated with the hierarchical cutting as discussed in Section~\ref{sec:pre}. Also, each $\Xi_i$ has $O(\rho^{2i})$ cells and $k=\lceil\log_{\rho}r\rceil$.
Hence, the total time for
constructing the BS data structures for all cells of the cutting is on the order of (for simplicity we assume $r=\rho^{k}$)
\begin{align*}
\sum_{i=0}^{k}\rho^{2i}\cdot n/\rho^i\cdot \log(n/\rho^i) &=\sum_{i=0}^{k} n\cdot \rho^i\cdot \log(n/\rho^i)=n\cdot \sum_{i=0}^{k} r/\rho^i\cdot \log(n/r\cdot \rho^i) \\
& =nr \cdot \sum_{i=0}^{k} \frac{\log(n/r) + i\cdot \log \rho}{\rho^i} = O(nr\log (n/r)).
\end{align*}
As such, the total preprocessing time is $O(nr\log (n/r)+r^2\cdot T(n/r))$.

Given a query segment $s$, let $s^*$ be the point dual to the supporting line of
$s$. We first find the cell $\sigma_i$ of $\Xi_i$ containing $s^*$ for all
$i=0,1,\ldots,k$, which takes $O(\log r)$ time.
For each $1\leq i\leq k$, we find the closest point to $s$ among the dual points
of $P^*_1(\sigma',\sigma)$ (resp., $P^*_2(\sigma',\sigma)$) using
the BS data structure in $O(\log n)$ time and keep them as
candidate closest points. In addition, for $\sigma_k$, we find the closest
point to $s$ among the dual points of $P^*(\sigma)$ using the secondary data
structure and keep it as a candidate closest point.
Finally, among
all $O(\log r)$ candidate closest points, we return the one closest to $s$ as
the answer. Hence, the total
query time is $O(\log r\log n+Q(n/r))$.


As such, we obtain a data structure of complexity $O(nr\log (n/r)+r^2\cdot T(n/r),nr+r^2\cdot S(n/r),\log r\log n+Q(n/r))$, for any $1\leq r\leq n$. If we use the results of Lemmas~\ref{lem:segmentchan} and \ref{lem:segmentm2} as secondary data structures, respectively, we can obtain the following trade-offs.

\begin{theorem}
\begin{enumerate}
  \item
  Given a set $P$ of $n$ points in the plane, we can build a data structure of
$O(nr\log (n/r)))$ space in $O(nr\log (n/r))$ expected time, such that for any query
segment its closest point in $P$ can be computed in $O(\sqrt{n/r})$ time with high probability, for any $1\leq r\leq n/\log^4 n$.
  \item
      Given a set $P$ of $n$ points in the plane, we can build a data structure of
$O(nr\log (n/r))$ space in $O(nr(n/r)^{\delta})$ time, such that for any query
segment its closest point in $P$ can be computed in $O(\sqrt{n/r}\cdot\log (n/r))$ time, for any $\delta>0$ and any $1\leq r\leq n\log^2\log n/\log^4 n$.
\end{enumerate}
\end{theorem}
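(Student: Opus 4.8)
The plan is to instantiate the generic reduction established immediately above, which yields, for any $1\le r\le n$, a data structure of complexity
\begin{align*}
\bigl(\,nr\log (n/r)+r^2\cdot T(n/r),\quad nr+r^2\cdot S(n/r),\quad \log r\log n+Q(n/r)\,\bigr),
\end{align*}
where $(T,S,Q)$ is the complexity of the \emph{secondary} data structure stored at each cell of the hierarchical cutting $\Xi$ of $P^*$. For part~(1) I would take the secondary structure to be the randomized one of Lemma~\ref{lem:segmentchan}, and for part~(2) the deterministic one of Lemma~\ref{lem:segmentm2}; it then remains only to substitute $n/r$ for the size parameter, simplify the three coordinates, and determine for which range of $r$ the additive $\log r\log n$ term is absorbed into $Q(n/r)$.

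For part~(1), Lemma~\ref{lem:segmentchan} gives $T(n/r)=O((n/r)\log(n/r))$, $S(n/r)=O((n/r)\log(n/r))$, and $Q(n/r)=O(\sqrt{n/r})$ (the preprocessing bound in expectation, the query bound with high probability). Then $r^2\cdot T(n/r)=O(nr\log(n/r))$ and $r^2\cdot S(n/r)=O(nr\log(n/r))$, so both the preprocessing time and the space become $O(nr\log(n/r))$, and the query time becomes $O(\log r\log n+\sqrt{n/r})$. Since $\log r\le\log n$ we have $\log r\log n\le\log^2 n$, so whenever $r\le n/\log^4 n$ the first term is $O(\sqrt{n/r})$ and the query time is $O(\sqrt{n/r})$ with high probability, which is exactly the claimed range.

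For part~(2), Lemma~\ref{lem:segmentm2} gives $T(n/r)=O((n/r)^{1+\delta'})$ for any constant $\delta'>0$, $S(n/r)=O((n/r)\log(n/r))$, and $Q(n/r)=O(\sqrt{n/r}\cdot\log(n/r))$. The space again simplifies to $O(nr\log(n/r))$. For the preprocessing time, $r^2\cdot(n/r)^{1+\delta'}=nr\cdot(n/r)^{\delta'}$, which dominates $nr\log(n/r)$; taking $\delta'=\delta$ gives preprocessing time $O(nr(n/r)^{\delta})$. The query time is $O(\log r\log n+\sqrt{n/r}\cdot\log(n/r))$, which is $O(\sqrt{n/r}\cdot\log(n/r))$ provided $\log r\log n=O(\sqrt{n/r}\cdot\log(n/r))$. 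Setting $u=n/r$ and using $\log r\log n\le\log^2 n$, this reduces to $\sqrt{u}\,\log u=\Omega(\log^2 n)$; since $\sqrt{u}\,\log u$ is increasing in $u$, its threshold is $u=\Theta(\log^4 n/\log^2\log n)$, so the condition holds for all $r\le n\log^2\log n/\log^4 n$, which is the claimed range.

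The main (and essentially only) obstacle is this last range analysis. The additive $\log r\log n$ term in the query time comes from performing one outside-hull segment (BS) query, each costing $O(\log n)$, at each of the $O(\log r)$ levels of the hierarchical cutting $\Xi$ of $P^*$, and it must be dominated by $Q(n/r)$. Because the deterministic secondary query time carries an extra $\log(n/r)$ factor, the admissible upper bound on $r$ is pushed up from $n/\log^4 n$ in part~(1) to $n\log^2\log n/\log^4 n$ in part~(2); everything else is a routine substitution-and-simplification into the three coordinates of the generic complexity bound.
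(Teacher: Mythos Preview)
Your proposal is correct and follows exactly the approach the paper intends: plug Lemmas~\ref{lem:segmentchan} and~\ref{lem:segmentm2} into the generic complexity triple $\bigl(nr\log(n/r)+r^2\,T(n/r),\ nr+r^2\,S(n/r),\ \log r\log n+Q(n/r)\bigr)$ and solve for the range of $r$ where $\log r\log n$ is absorbed by $Q(n/r)$. The paper itself leaves these substitutions implicit, so your write-up is in fact more detailed than the paper's own (one-line) justification.
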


In particular, for the large space case, we can obtain a randomized data structure of complexity $O(n^2\log\log n/\log^4 n, n^2\log\log n/\log^4 n, \log^2 n)$ when $r=n/\log^4 n$, and a slower deterministic data structure of complexity $O(n^2/\log^{4-\delta} n, n^2\log^3\log n/\log^4 n, \log^2 n)$ when $r=n\log^2\log n/\log^4n$.

\subparagraph{A randomized $\boldsymbol{O(n^{4/3},n^{4/3},n^{1/3})}$ solution.} We finally remark on another randomized solution of complexity $O(n^{4/3},n^{4/3},n^{1/3})$, by using Chan's randomized techniques~\cite{ref:ChanGe99} and Chan and Zheng's recent randomized result on triangle range counting~\cite{ref:ChanHo22}.\footnote{The idea was suggested by an anonymous reviewer.}
First of all, we consider the following {\em decision version} of the query problem: Given a query segment $s$ and a value $\delta$, determine whether $P$ has a point whose distance from $s$ is no more than $\delta$. Let $N_{\delta}(s)$ denote the region of the plane consisting of all points whose distances from $s$ is no more than $\delta$. The decision problem is thus to decide whether $N_{\delta}(s)\cap P=\emptyset$. Observe that $N_{\delta}(s)$ is the union of two disks and a rectangle (see Fig.~\ref{fig:queryseg}). More specifically, the two disks have the same radius equal to $\delta$ and have the two endpoints of $s$ as their centers, respectively; the rectangle is symmetric with respect to $s$. Let $D_{\delta}(s)$ denote the union of the two disks and $R_{\delta}(s)$ the rectangle. It suffices to decide whether $D_{\delta}(s)\cap P= \emptyset$ and $R_{\delta}(s)\cap P= \emptyset$. The former can be determined in $O(\log n)$ time after constructing a Voronoi diagram on $P$ in $O(n\log n)$ time and $O(n)$ space. The latter can be determined using the randomized result of Chan and Zheng~\cite{ref:ChanHo22} on triangle range counting. Specifically, we can build a data structure of $O(n^{4/3})$ space in $O(n^{4/3})$ expected time on $P$ so that the number of points of $P$ inside a query triangle can be computed in $O(n^{1/3})$ expected time~\cite{ref:ChanHo22}. As such, with two triangle range counting queries, we can determine whether $R_{\delta}(s)\cap P= \emptyset$ in $O(n^{1/3})$ expected time. In summary, with $O(n^{4/3})$ space and $O(n^{4/3})$ expected preprocessing time by a randomized algorithm, we can solve the decision problem for any query segment $s$ in $O(n^{1/3})$ expected time.

\begin{figure}[t]
\begin{minipage}[t]{\textwidth}
\begin{center}
\includegraphics[height=1.5in]{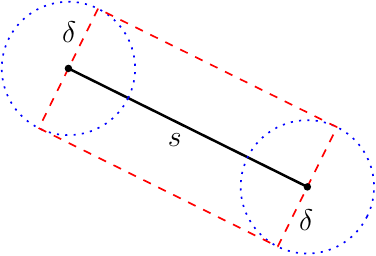}
\caption{\footnotesize $D_{\delta}(s)$ is the union of the two (blue) dotted disks (whose radii are both $\delta$) and $R_{\delta}(s)$ is the (red) dashed rectangle. $N_{\delta}(s)$ is the union of $D_{\delta}(s)$ and $R_{\delta}(s)$.}
\label{fig:queryseg}
\end{center}
\end{minipage}
\vspace{-0.15in}
\end{figure}

Next, to solve the original query problem (i.e., given a query segment $s$, find its closest point in $P$), we can simply apply Chan's randomized techniques~\cite{ref:ChanGe99} to reduce the problem to the above decision problem. Indeed, the problem reduction argument is quite similar to the problem reduction of the ray shooting problem (see Section 3.2 in~\cite{ref:ChanGe99}) and we thus omit the details. We therefore conclude with the following.

\begin{theorem}\label{theo:100}
Given a set $P$ of $n$ points in the plane, we can build a data structure of
$O(n^{4/3})$ space in $O(n^{4/3})$ expected time by a randomized algorithm, such that for any query
segment its closest point in $P$ can be computed in $O(n^{1/3})$ expected time.
\end{theorem}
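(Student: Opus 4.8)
The plan is to first solve a \emph{decision version} of the query problem and then lift it to the optimization (closest-point) problem via Chan's randomized optimization technique~\cite{ref:ChanGe99}. In the decision version we are given a query segment $s$ together with a threshold $\delta$, and we must decide whether some point of $P$ lies within distance $\delta$ of $s$. The first step is the geometric observation that the $\delta$-neighborhood $N_{\delta}(s)$ of $s$ decomposes as $N_{\delta}(s)=D_{\delta}(s)\cup R_{\delta}(s)$, where $D_{\delta}(s)$ is the union of two radius-$\delta$ disks centered at the endpoints of $s$ and $R_{\delta}(s)$ is a rectangle symmetric about the line through $s$ (see Fig.~\ref{fig:queryseg}); hence $N_{\delta}(s)\cap P\neq\emptyset$ if and only if $D_{\delta}(s)\cap P\neq\emptyset$ or $R_{\delta}(s)\cap P\neq\emptyset$.

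For the disk part, I would build the Voronoi diagram $\vd(P)$ in $O(n\log n)$ time and $O(n)$ space together with a point-location structure; then $D_{\delta}(s)\cap P\neq\emptyset$ can be tested in $O(\log n)$ time by locating each endpoint of $s$ in $\vd(P)$ and checking the distance to its nearest neighbor. For the rectangle part, I would split $R_{\delta}(s)$ into two triangles and invoke Chan and Zheng's randomized triangle range counting structure~\cite{ref:ChanHo22}: it uses $O(n^{4/3})$ space, is built in $O(n^{4/3})$ expected time, and answers each triangle query in $O(n^{1/3})$ expected time, so two such queries decide $R_{\delta}(s)\cap P\neq\emptyset$ in $O(n^{1/3})$ expected time. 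Combining the two components, the decision structure has $O(n^{4/3})$ space, $O(n^{4/3})$ expected preprocessing time, and $O(n^{1/3})$ expected query time.

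The final step is to reduce the optimization problem to the decision problem using Chan's framework~\cite{ref:ChanGe99}. The optimal value we seek---the distance from $s$ to its closest point in $P$---is determined by a small set of candidate events induced by the points of $P$ through the geometry of $N_{\delta}(s)$, and the decision procedure above is monotone in $\delta$ and assembled from constant-complexity semialgebraic predicates, which is exactly the setting handled by Chan's randomized optimization technique. The reduction is essentially identical in form to the one used for the ray shooting problem in Section~3.2 of~\cite{ref:ChanGe99}, so I would cite that argument rather than reproduce it. I expect the main obstacle to be verifying that the hypotheses of Chan's technique are met---in particular, that the decision oracle is sufficiently uniform and that the $O(n^{1/3})$ expected query time is preserved, up to constant factors, through the logarithmically many oracle calls the reduction performs (a standard but slightly delicate point, since the per-call costs hold only in expectation). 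Once this is checked, one obtains a data structure of $O(n^{4/3})$ space, $O(n^{4/3})$ expected preprocessing time, and $O(n^{1/3})$ expected query time, establishing Theorem~\ref{theo:100}.
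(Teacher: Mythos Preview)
Your proposal is correct and follows essentially the same approach as the paper: decompose $N_\delta(s)$ into two endpoint disks (handled via the Voronoi diagram) and a rectangle (handled by two triangle range counting queries using Chan--Zheng~\cite{ref:ChanHo22}), then lift the decision oracle to the optimization problem via Chan's randomized technique~\cite{ref:ChanGe99}, citing the ray-shooting reduction in Section~3.2 there. One small remark: Chan's reduction incurs only a constant expected multiplicative overhead rather than a logarithmic number of oracle calls, so your concern about preserving the $O(n^{1/3})$ bound is even easier to dispatch than you suggest.
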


As discussed in Section~\ref{sec:intro}, using the above result, we can solve the offline segment-closest-point problem for $n$ points and $n$ segments in $O(n^{4/3})$ expected time. Since the segment-closest-point problem has an $\Omega(n^{4/3})$ time lower bound in the model of Erickson~\cite{ref:EricksonNe96}, the result in Theorem~\ref{theo:100} is optimal (at least in term of proprocessing/query time) under the restricted model of Erickson~\cite{ref:EricksonNe96}.

\section*{Acknowledgment}
The author would like to thank an anonymous reviewer for suggesting the idea of using Chan's randomized techniques~\cite{ref:ChanGe99} and Chan and Zheng's recent randomized result on triangle range counting~\cite{ref:ChanHo22} to tackle the problem.




\end{document}